\renewcommand\footnotetextcopyrightpermission[1]{} 
\tikzstyle{mystate}=[state,inner sep=1pt,minimum size=10pt,line width=0.3mm]
\tikzstyle{myestate}=[ellipse,inner sep=1pt,line width=0.3mm]
\tikzstyle{fstate}=[state,accepting,inner sep=2pt,minimum size=3pt,line width=0.3mm]
\tikzstyle{istate}=[state,initial,inner sep=2pt,minimum size=3pt,line width=0.3mm]
\tikzstyle{mysquare}=[inner sep=3pt,minimum size=15pt,line width=0.3mm]
\newcommand{\SFSAutomatEdge}[5]{\path[->](#1) edge[#4,line width=0.3mm] node[#5] {\ensuremath{#2}} (#3);}
\newtheorem{remark}{Remark}
 \newcommand{\TODO}[1]{\textcolor{cyan}{\textbf{TODO:} #1}}
\newcommand{\NO}[1]{\textcolor{blue}{\textbf{NO:} #1}}
\newcommand{\AKS}[1]{\textcolor{violet}{\textbf{AKS:} #1}}
\newcommand\set[1]{\{ #1 \}}
\newcommand\tuple[1]{{( #1 )}}
\newcommand\tuplel[1]{{\langle #1 \rangle}}
\newcommand{\REFlem}[1]{\text{Lem.~\ref{#1}}}
\newcommand{\REFthm}[1]{\text{Thm.~\ref{#1}}}
\newcommand{\REFdef}[1]{Def.~\ref{#1}}
\newcommand{\REFalg}[1]{Alg.~\ref{#1}}
\newcommand{\REFrem}[1]{Rem.~\ref{#1}}
\newcommand{\REFexp}[1]{Ex.~\ref{#1}}
\newcommand{\REFsec}[1]{Sec.~\ref{#1}}
\newcommand{\REFprop}[1]{Prop.~\ref{#1}}
\newcommand{\REFfig}[1]{Fig.~\ref{#1}}
\newcommand{\REFcor}[1]{Cor.~\ref{#1}}
\newcommand{\ON}[1]{\operatorname{#1}}
\def\clap#1{\hbox to 0pt{\hss#1\hss}}
\newif\ifFIRST
\newif\ifSECOND
\let\LISTOP\relax
\newcommand{\List}[4][\;]{#3#1%
	\FIRSTtrue
	\@for\i:=#2\do{%
	\ifFIRST\LISTOP{\i}\FIRSTfalse\else,\LISTOP{\i}\fi%
	}%
	#1#4%
	\let\LISTOP\relax
}
\newcounter{DINGLIST}
\newcommand{\markD}[3][\;\;]{\text{\ding{\the\numexpr171+\theDINGLIST}\stepcounter{DINGLIST}}#1#3}
\newcommand{\propNeg}{\@ifstar\propNegStar\propNegNoStar}
\newcommand{\propNegStar}[1]{\ensuremath{\left(\propNegNoStar{#1}\right)}}
\newcommand{\propNegNoStar}[2][\cdot]{\ensuremath{\neg\ifthenelse{\isempty{#2}}{#1}{#2}}}
\newcommand{\propConj}{\@ifstar\propConjStar\propConjNoStar}
\newcommand{\propConjStar}[2]{\ensuremath{\left(\propConjNoStar{#1}{#2}\right)}}
\newcommand{\propConjNoStar}[3][\cdot]{\ensuremath{\ifthenelse{\isempty{#2}}{#1}{#2}\wedge\ifthenelse{\isempty{#3}}{#1}{#3}}}
\newcommand{\propDisj}{\@ifstar\propDisjStar\propDisjNoStar}
\newcommand{\propDisjStar}[2]{\ensuremath{\left(\propDisjNoStar{#1}{#2}\right)}}
\newcommand{\propDisjNoStar}[3][\cdot]{\ensuremath{\ifthenelse{\isempty{#2}}{#1}{#2}\vee\ifthenelse{\isempty{#3}}{#1}{#3}}}
\newcommand{\propImp}{\@ifstar\propImpStar\propImpNoStar}
\newcommand{\propImpStar}[2]{\ensuremath{\left(\propImpNoStar{#1}{#2}\right)}}
\newcommand{\propImpNoStar}[3][\cdot]{\ensuremath{\ifthenelse{\isempty{#2}}{#1}{#2}\Rightarrow\ifthenelse{\isempty{#3}}{#1}{#3}}}
\newcommand{\propAequ}{\@ifstar\propAequStar\propAequNoStar}
\newcommand{\propAequStar}[2]{\ensuremath{\left(\propAequNoStar{#1}{#2}\right)}}
\newcommand{\propAequNoStar}[3][\cdot]{\ensuremath{\ifthenelse{\isempty{#2}}{#1}{#2}\Leftrightarrow\ifthenelse{\isempty{#3}}{#1}{#3}}}
\newcommand{\AllQ}{\@ifstar\AllQStar\AllQNoStar}
\newcommand{\AllQStar}[3][\;]{\ensuremath{\left(\forall #2#1.#1#3\right)}}
\newcommand{\AllQNoStar}[3][\;]{\ensuremath{\forall #2#1.#1#3}}
\newcommand{\AllQu}{\@ifstar\AllQuStar\AllQuNoStar}
\newcommand{\AllQuStar}[3][\;]{\ensuremath{\left(\forall^{\infty} #2#1.#1#3\right)}}
\newcommand{\AllQuNoStar}[3][\;]{\ensuremath{\forall^{\infty} #2#1.#1#3}}
\newcommand{\ExQ}{\@ifstar\ExQStar\ExQNoStar}
\newcommand{\ExQStar}[3][\;]{\ensuremath{\left(\exists #2#1.#1#3\right)}}
\newcommand{\ExQNoStar}[3][\;]{\ensuremath{\exists #2#1.#1#3}}
\newcommand{\NExQ}{\@ifstar\NExQStar\NExQNoStar}
\newcommand{\NExQStar}[3][\;]{\ensuremath{\left(\nexists #2#1.#1#3\right)}}
\newcommand{\NExQNoStar}[3][\;]{\ensuremath{\nexists #2#1.#1#3}}
\newcommand{\UniqueQ}{\@ifstar\UniqueQStar\UniqueQNoStar}
\newcommand{\UniqueQStar}[3][\;]{\ensuremath{\left(\exists! #2#1.#1#3\right)}}
\newcommand{\UniqueQNoStar}[3][\;]{\ensuremath{\exists! #2#1.#1#3}}
  \newlength{\SFS@HEIGHT}
  \newlength{\SFS@WIDTH}
  \newcommand{\SplitX}[2]{
	    \settoheight{\SFS@HEIGHT}{$#2$}
	    \settowidth{\SFS@WIDTH}{$#2$}
	    \mbox{\begin{tikzpicture}[baseline=(current bounding box.center)]
	    \node[] (E) at (0,0) {$#1$};
	    \node[inner sep=0pt] (F) at ($(E.south west)+(1ex,-1ex)+(3ex+.5\SFS@WIDTH,-\SFS@HEIGHT)$) {$#2$};
	    \node[] (E) at (0,0) {\phantom{$#1$}};
	    \draw[fill] ($(E.east)+(1ex,0ex)$) circle (.2ex);
	    \draw[-] ($(E.east)+(1ex,0ex)$) -- ($(E.south east)+(1ex,-0.5ex)$) -- ($(E.south west)+(1ex,-0.5ex)$) -- ($(E.south west)+(1ex,-1ex)-(0,\SFS@HEIGHT)$) -- ($(E.south west)+(2.5ex,-1ex)-(0,\SFS@HEIGHT)$);
	    \draw[fill] ($(E.south west)+(2.5ex,-1ex)-(0,\SFS@HEIGHT)$) circle (.2ex);
	    \end{tikzpicture}}}
  \newcommand{\SplitS}[2]{
	    \settoheight{\SFS@HEIGHT}{$#2$}
	    \settowidth{\SFS@WIDTH}{$#2$}
	    \mbox{\begin{tikzpicture}[baseline=(current bounding box.center)]
	    \node[] (E) at (0,0) {$#1$};
	    \node[inner sep=0pt] (F) at ($(E.south west)+(1ex,0.5ex)+(3ex+.5\SFS@WIDTH,-\SFS@HEIGHT)$) {$#2$};
	    \end{tikzpicture}}}
\newcommand{\semantics}[1]{\langle\![#1]\!\rangle}
\newcommand{\Set}[2][]{\List[#1]{#2}{\left\{}{\right\}}}
\newcommand{\VSet}[2][]{\let\LISTOP\val\List[#1]{#2}{\{}{\}}}
\newcommand{\VTuple}[2][]{\let\LISTOP\val\List[#1]{#2}{(}{)}}
\newcommand{\UNION}{\@ifstar\UNIONStar\UNIONNoStar}
\newcommand{\UNIONStar}[2]{\ensuremath{\left(\UNIONNoStar{#1}{#2}\right)}}
\newcommand{\UNIONNoStar}[2]{\ensuremath{\ifthenelse{\isempty{#1}}{\cdot}{#1}\cup\ifthenelse{\isempty{#2}}{\cdot}{#2}}}
\newcommand{\UNIOND}{\@ifstar\UNIONDStar\UNIONDNoStar}
\newcommand{\UNIONDStar}[2]{\ensuremath{\left(\UNIONDNoStar{#1}{#2}\right)}}
\newcommand{\UNIONDNoStar}[2]{\ensuremath{\ifthenelse{\isempty{#1}}{\cdot}{#1}\uplus\ifthenelse{\isempty{#2}}{\cdot}{#2}}}
\newcommand{\SETMINUS}{\@ifstar\SETMINUSStar\SETMINUSNoStar}
\newcommand{\SETMINUSStar}[2]{\ensuremath{\left(\SETMINUSNoStar{#1}{#2}\right)}}
\newcommand{\SETMINUSNoStar}[2]{\ensuremath{\ifthenelse{\isempty{#1}}{\cdot}{#1}\setminus\ifthenelse{\isempty{#2}}{\cdot}{#2}}}
\newcommand{\INTERSECT}{\@ifstar\INTERSECTStar\INTERSECTNoStar}
\newcommand{\INTERSECTStar}[2]{\ensuremath{\left(\INTERSECTNoStar{#1}{#2}\right)}}
\newcommand{\INTERSECTNoStar}[2]{\ensuremath{\ifthenelse{\isempty{#1}}{\cdot}{#1}\cap\ifthenelse{\isempty{#2}}{\cdot}{#2}}}
\newcommand{\CARTPROD}{\@ifstar\CARTPRODStar\CARTPRODNoStar}
\newcommand{\CARTPRODStar}[2]{\ensuremath{\left(\CARTPRODNoStar{#1}{#2}\right)}}
\newcommand{\CARTPRODNoStar}[2]{\ensuremath{\ifthenelse{\isempty{#1}}{\cdot}{#1}\times\ifthenelse{\isempty{#2}}{\cdot}{#2}}}
\newcommand{\FINCOUNT}{\@ifstar\FinCountStar\FinCountNoStar}
\newcommand{\FinCountStar}[1]{\ensuremath{\#(\ifthenelse{\isempty{#1}}{\cdot}{#1})}}
\newcommand{\FinCountNoStar}[1]{\ensuremath{\#\left(\ifthenelse{\isempty{#1}}{\cdot}{#1}\right)}}
\newcommand{\real}[1]{\ifstrempty{#1}{\mathbb{R}}{\mathbb{R}^{#1}}}
\newcommand{\Z}{\mathbb{Z}}
\newcommand{\N}{\mathbb{N}}
\newcommand{\fun}{\ensuremath{\ON{\rightarrow}}}
\newcommand{\setfun}{\ensuremath{\ON{\rightrightarrows}}}
\newcommand{\SetComp}[3][]{\{#1#2#1\mid#1#3#1\}}
\newcommand{\twoup}[1]{\ensuremath{2^{#1}}}
\newcommand{\dom}[1]{\ensuremath{\mathrm{dom}(#1)}}
\newcommand{\frr}[2]{\preccurlyeq_{#1}^{#2}}
\newcommand{\frrE}[2]{\cong_{#1}^{#2}}
\newcommand{\frrosE}[1]{\cong^{os}_{#1}}
\newcommand{\AP}{\ensuremath{\mathtt{AP}}}
\newcommand{\Enab}{\ensuremath{\ON{Enab}}}
\newcommand{\Omap}[1]{\ensuremath{\lambda\ifthenelse{\isempty{#1}}{}{(#1)}}}
\newcommand{\Omapp}[1]{\ensuremath{\lambda'\ifthenelse{\isempty{#1}}{}{(#1)}}}
\newcommand{\Omapa}[1]{\ensuremath{\widehat{\lambda}\ifthenelse{\isempty{#1}}{}{(#1)}}}
\newcommand{\Omapn}[2]{\ensuremath{\lambda_{#2}\ifthenelse{\isempty{#1}}{}{(#1)}}}
\newcommand{\Sa}{\ensuremath{\widehat{S}}}
\newcommand{\SaK}{\ensuremath{\Sa^{\mathsf{K}}}}
\newcommand{\Xa}{\ensuremath{\widehat{X}}}
\newcommand{\Xoa}{\ensuremath{\widehat{X}_0}}
\newcommand{\Xao}{\ensuremath{\widehat{X}_0}} 
\newcommand{\Ya}{\ensuremath{\widehat{Y}}}
\newcommand{\Fa}{\ensuremath{\widehat{F}}}
\newcommand{\Ha}{\ensuremath{\widehat{H}}}
\newcommand{\xa}{\ensuremath{\widehat{x}}}
\newcommand{\pia}{\ensuremath{\widehat{\pi}}}
\newcommand{\St}{\ensuremath{\widetilde{S}}}
\newcommand{\Xt}{\ensuremath{\widetilde{X}}}
\newcommand{\Xto}{\ensuremath{\widetilde{X}_0}}
\newcommand{\Ft}{\ensuremath{\widetilde{F}}}
\newcommand{\Ht}{\ensuremath{\widetilde{H}}}
\newcommand{\xt}{\ensuremath{\widetilde{x}}}
\newcommand{\alphat}{\ensuremath{\widetilde{\alpha}}}
\newcommand{\Se}{\ensuremath{S^\star}}
\newcommand{\Xe}{\ensuremath{X^\star}}
\newcommand{\Xeo}{\ensuremath{X^\star_0}}
\newcommand{\Fe}{\ensuremath{F^\star}}
\newcommand{\He}{\ensuremath{H^\star}}
\newcommand{\ExX}{\ensuremath{\mathtt{EXP_X}}}
\newcommand{\Cover}{\ensuremath{\mathtt{Cover}}}
\newcommand{\ExF}{\ensuremath{\mathtt{EXP_F}}}
\newcommand{\ExG}{\ensuremath{\mathtt{EXP}_{\Gamma}}}
\newcommand{\true}{\ensuremath{\mathtt{true}}}
\newcommand{\Xo}{\ensuremath{X_0}}
\renewcommand{\C}{\ensuremath{\mathcal{C}}}
\newcommand{\Ca}{\ensuremath{\widehat{\mathcal{C}}}}
\newcommand{\Co}{\ensuremath{\mathcal{C^\dagger}}}
\newcommand{\CPaths}[1]{\ensuremath{\ON{CPaths}(#1)}}
\newcommand{\Out}[1]{\ensuremath{\ON{Out}(#1)}}
\newcommand{\Obs}[1]{\ensuremath{\ON{Obs}\ifthenelse{\isempty{#1}}{}{(#1)}}}
\newcommand{\Ext}[1]{\ensuremath{\ON{Ext}\ifthenelse{\isempty{#1}}{}{(#1)}}}
\newcommand{\Prefs}[1]{\ensuremath{\ON{Prefs}(#1)}}
\newcommand{\CPrefs}[1]{\ensuremath{\ON{CPrefs}(#1)}}
\newcommand{\Last}[1]{\ensuremath{\ON{Last}\ifthenelse{\isempty{#1}}{}{(#1)}}}
\newcommand{\ELast}[1]{\ensuremath{\ON{ELast}\ifthenelse{\isempty{#1}}{}{(#1)}}}
\newcommand{\LastS}[1]{\ensuremath{\ON{LastX}\ifthenelse{\isempty{#1}}{}{(#1)}}}
\newcommand{\LastSn}[2]{\ensuremath{\ON{LastX}_{#1}\ifthenelse{\isempty{#2}}{}{(#2)}}}
\newcommand{\EHistn}[2]{\ensuremath{\ON{EHist}_{#1}\ifthenelse{\isempty{#2}}{}{(#2)}}}
\newcommand{\EHist}[1]{\ensuremath{\ON{EHist}\ifthenelse{\isempty{#1}}{}{(#1)}}}
\newcommand{\Hist}[1]{\ensuremath{\ON{Hist}\ifthenelse{\isempty{#1}}{}{(#1)}}}
\newcommand{\Histn}[2]{\ensuremath{\ON{Hist}_{#1}\ifthenelse{\isempty{#2}}{}{(#2)}}}
\newcommand{\EPrefs}[1]{\ensuremath{\ON{EPrefs}(#1)}}
\newcommand{\Reach}[1]{\ensuremath{\ON{Reach}(#1)}}
\newcommand{\WIN}{\ensuremath{\mathcal{W}}}
\newcommand{\floor}[1]{\lfloor #1 \rfloor}
\begin{document}
 
\title{On Abstraction-Based Controller Design With Output Feedback}


	\author{Rupak Majumdar}
	\affiliation{%
		\institution{MPI-SWS, Germany}
	}
		\author{Necmiye Ozay}
	\affiliation{%
		\institution{Univ. of Michigan, Ann Arbor, USA}
	}	
	\author{Anne-Kathrin Schmuck}
	\affiliation{%
		\institution{MPI-SWS, Germany}
	}

\begin{abstract}
We consider abstraction-based design of \emph{output-feedback} controllers for dynamical systems with a finite set of inputs and outputs against specifications in linear-time temporal logic. 
The usual procedure for abstraction-based controller design (ABCD) first constructs a \emph{finite-state} abstraction of the underlying dynamical system, and second, uses reactive synthesis techniques to compute an abstract \emph{state-feedback} controller on the abstraction. 
In this context, our contribution is two-fold: (I) we define a suitable relation between the original system and its abstraction which characterizes the soundness and completeness
conditions for an abstract \emph{state-feedback} controller to be refined to a concrete \emph{output-feedback} controller for the original system, 
and 
(II) we provide an algorithm to compute a \emph{sound finite-state abstraction} fulfilling this relation. 


Our relation generalizes feedback-refinement relations from ABCD with state-feedback.
Our algorithm for constructing sound finite-state abstractions is inspired by the simultaneous reachability and bisimulation minimization algorithm of Lee and Yannakakis.
We lift their idea to the computation of an observation-equivalent system and show how \emph{sound abstractions} 
can be obtained by stopping this algorithm at any point.
Additionally, our new algorithm produces a realization of the topological closure of the input/output behavior of the original system if it is finite-state realizable. 
\end{abstract}

\maketitle

\sloppy

\section{Introduction}
\label{sec:intro}
Controller synthesis for dynamical systems against specifications in linear temporal logic
is a core problem in correct-by-construction design of cyber-physical systems.
One way to solve this problem relies on abstracting the state space to a finite-state system,
followed by algorithmic techniques from reactive synthesis to compute an abstract controller which is then refined to a concrete one for the original system \cite{GirardPolaTabuada_2010,Tabuada09,belta2017formal,ReissigWeberRungger_2017_FRR}.
Most algorithms, and certainly most state-of-the-art synthesis tools such as SCOTS \cite{SCOTS}, pFaces \cite{pFaces},
 or Mascot \cite{HsuMMS18}, implement this abstraction-based control design (ABCD) workflow
while assuming the entire state of the underlying system to be observable. 
In this paper, we relax the condition of full state observation.
We consider ABCD when the system has a finite number of observable outputs and
a controller must decide its input choice (from a finite set) based solely on the history of applied inputs and observed outputs.
Such \emph{output-feedback control} is common in control design, as the observation of the state is usually limited by
the availability and precision of the sensors.

As an example, consider the tank reactor shown in \REFfig{fig:tank}.
It has a finite number of water level sensors ($l_0,\hdots,l_5$) which indicate whether the current water 
level touches the sensor or not by returning true or false. 
Further, it can be observed (but not controlled) whether the outlet valve is open ($o=\mathtt{true}$) or closed ($o=\mathtt{false}$).
The controller can set the inlet valve open (by applying $u=+$) or closed (by applying $u=0$). 
The actual state of the system, i.e., the precise value of the water level, is not observable. 
In this example, a given input/output sequence of observed true sensor values and applied inputs 
(e.g., $\nu=\set{l_0}\set{+}\set{l_0}\set{+}\set{l_0,l_1,o}\set{0}\set{l_0,o}\set{+}\hdots$)
provides a certain knowledge about the current true state (i.e., real water level value) 
of the tank system, which might be sufficient to implement a controller ensuring the satisfaction of a specification 
over the observables.
For example,  one might want to ensure that the tank never overflows (i.e., $l_5$ never becomes true) while still containing a limited amount of water (i.e., $l_1$ is always true).
We show how finite-state abstractions of the input/output behavior of such an 
infinite state dynamical system can be constructed for the purpose of ABCD with output-feedback. 

\begin{figure}
  \begin{tikzpicture}[auto,scale=1]
   \def\vsep{0.35}
  \begin{footnotesize}

   \def\v{0} \def\vm{5} \def\va{0.3}  
   \def\h{0} \def\hm{2} \def\ha{0.5} 
   
     \foreach \x in {0,...,\vm}{
     \node at (\h-0.3,\v+\x*\va) {$l_\x$};
     \draw [dashed] (\h,\v+\x*\va) -- (\hm,\v+\x*\va);
     }

  \draw [very thick]  (\h+\ha,\vm*\va+\va) -- (\h+\ha,\v)-- (\hm+\ha,\v);
  \draw [very thick]  (\hm,\vm*\va+\va) -- (\hm,\v+0.5*\va)-- (\hm+\ha,\v+0.5*\va);
  
  \draw [very thick] (\h-1.3*\ha,\vm*\va+1.5*\va) -- (\h+2*\ha,\vm*\va+1.5*\va) -- (\h+2*\ha,\vm*\va+0.5*\va);
  \draw [very thick] (\h-1.3*\ha,\vm*\va+2*\va) -- (\h+2.3*\ha,\vm*\va+2*\va) -- (\h+2.3*\ha,\vm*\va+0.5*\va);
  
   \draw [very thick] (\h-3*\ha,\vm*\va+1.5*\va) --(\h-1.5*\ha,\vm*\va+1.5*\va);
   \draw [very thick] (\h-3*\ha,\vm*\va+2*\va) --(\h-1.5*\ha,\vm*\va+2*\va);
 \draw []  (\h-1.5*\ha,\vm*\va+2.2*\va) -- (\h-1.3*\ha,\vm*\va+1.3*\va)-- (\h-1.5*\ha,\vm*\va+1.3*\va)-- (\h-1.3*\ha,\vm*\va+2.2*\va) -- (\h-1.5*\ha,\vm*\va+2.2*\va);
 
 \node at (\h-1.4*\ha,\vm*\va+3*\va) {$u\in\{+,0\}$};

  \draw []  (\hm+\ha,\v+0.7*\va) -- (\hm+1.2*\ha,\v-0.2*\va)-- (\hm+1*\ha,\v-0.2*\va)-- (\hm+1.2*\ha,\v+0.7*\va)-- (\hm+1*\ha,\v+0.7*\va);
  \draw [very thick] (\hm+1.2*\ha,\v+0.5*\va) -- (\hm+2*\ha,\v+0.5*\va);
  \draw [very thick] (\hm+1.2*\ha,\v+0*\va) -- (\hm+2*\ha,\v+0*\va);
  
   \node at (\hm+1*\ha,\v+1.5*\va) {$o$};
\end{footnotesize}

         \end{tikzpicture}
 \vspace{-0.3cm}
 \caption{Tank reactor modeled as a dynamical system $S$ over an infinite bounded state space $X\subset\mathbb{R}^3$ with finite input space $U=\Set{+,0}$ and finite output space $Y\subseteq 2^{\sigma}$ denoting the set of sensors $\sigma=\set{l_0,\hdots,l_5,o}$ which are currently `true.' 
 }\label{fig:tank}
  \vspace{-0.3cm}
\end{figure}
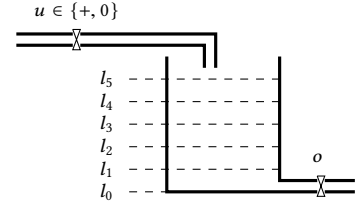

There is a rich history of output-feedback control design for continuous dynamical systems w.r.t.\
classical control objectives (such as stability or tracking) based on observer design \cite{luenberger1971introduction,shamma1999set}, with 
recent extensions to systems with finite external alphabets \cite{fan2018output} and estimator-based abstractions for control with partial-information \cite{mickelin2014synthesis,ehlers2015estimator,haesaert2015correct}.
In the context of temporal-logic control of \emph{finite-state} systems, output-feedback control
gives rise to games of incomplete information \cite{Reif,ChatterjeeDHR07,ehlers2015estimator}.
 The construction of finite-state abstractions of input/output traces for the purpose of output-feedback control is further enabled by so called $l$-complete abstractions \cite{moor1999supervisory,schmuck2014asynchronous,yang2018local,reissig2011computing}. 
Here, the underlying state dynamics of the original system are typically not assumed to be known, which is in contrast to the situation commonly handled in ABCD for dynamical systems.



%
In this paper we connect the above listed lines of work by building a sound ABCD framework for synthesizing output-feedback controllers for infinite-state dynamical systems with finite input and output sets.
In this context, our contribution is two-fold.

\textbf{(I)} We define \emph{sound abstractions} for ABCD under output feedback by relating \emph{states} of the abstract system to the \emph{external input/output traces} of the original system which directly allows to refine an abstract \emph{state-feedback} controller to an \emph{output-feedback} controller on the original system. Our relation generalizes feedback-refinement relations (FRR) \cite{ReissigWeberRungger_2017_FRR} to systems with inputs and outputs and is inspired by the framework of abstract interpretation \cite{cousot1977abstract}, which formalizes the interpretation of a given abstraction function over different system semantics. 
%

\textbf{(II)} We provide an algorithm to compute a \emph{sound finite-state abstraction} of the original infinite-state system, which we call KAM,
\emph{the Knowledge-based Abstraction with Minimization algorithm}. 
It combines two distinct ideas. First, it utilizes the \emph{forward} computation of a \emph{Knowledge-based Abstraction} (KA) typically used to solve partial observation games over \emph{finite-state} systems \cite{Reif,ChatterjeeDHR07}. 
%
%
Second, it deploys a \emph{backward} partition refinement algorithm for bisimulation-equivalence \cite{PaigeTarjan,HenzingerMR05} to construct the language equivalence quotient of a given system. 
Neither algorithm is guaranteed to terminate for infinite-state systems, even if there exists an exact finite-state realization of the input/output behavior of the original system.
%
%
%
The KAM algorithm \emph{simultaneously} executes the KA algorithm \emph{forward}, and the \emph{M}inimization of sets through refinement of 
partitions \emph{backward} and computes a finite-state realization of the topological closure of the input/output behavior of the original system if it exists. 
Further, stopping KAM after any finite number of iterations returns a \emph{sound finite-state abstraction}, even if no finite-state realization exists. 

The minimization part of KAM is inspired by the simultaneous reachability and bisimulation minimization algorithm of Lee and Yannakakis \cite{LeeYannakakis92}. However, as we are aiming at constructing an observation- (not bisimulation-) equivalent system, our algorithm
only applies predecessor operations and intersection with outputs, but does not take set differences. This is, indeed, in contrast to any algorithm that constructs bisimulation relations, and is crucial in implementations.
For example, one can implement KAM for linear dynamical systems by only manipulating convex polyhedra, as convexity
is maintained by both predecessor operations and intersections, but not by set difference.

To decide when KAM should terminate it must recognize when the current abstraction captures the reachable portion
of the language equivalence quotient,
which is undecidable in general.
Thus, for infinite-state systems, KAM might not realize when it should terminate, even though it
may have constructed the language equivalence quotient. 
This is also the case for the Lee-Yannakakis algorithm and the construction of $l$-complete abstractions.

We tackle the termination problem similar to the $l$-complete abstraction framework \cite{moor1999supervisory}. 
Since KAM always constructs sound abstractions of the original system,
we can run a synthesis procedure at any point to see if an abstract controller ensuring the specification exists.
If a controller can be found, the abstraction construction can stop.
If not, the construction continues until we try again after a future iteration. 
This iterative ABCD procedure is sound and relatively complete---if a topologically closed finite-state 
abstraction that allows to construct an abstract controller for the given specification exists, our procedure will eventually find it. 

\section{Preliminaries}


 \smallskip 
\noindent\textbf{Notation.}
We use the symbols $\N$, $\Z$,  $\real{}$, and $\real{}_{>0}$ 
to denote the sets of natural numbers, integers, reals, and positive reals, respectively. 
Given $a,b\in\real{}$ s.t.\ $a\leq b$, we denote by $[a,b]$ a closed interval and define $[a;b]=[a,b]\cap \Z$ as its integer counterpart. 
%
For a set $W$, we write $W^*$ and $W^\omega$ for the sets of finite and infinite sequences over $W$, respectively. 
For $w\in W^*$, we write $|w|$ for the length of $w$ and $\varepsilon$ for the empty string with $|\varepsilon|=0$; the length of $w\in W^\omega$ is $\infty$. 
We define $\dom{w} = \Set{0,\ldots, |w|-1}$ if $w\in W^*$, and $\dom{w} = \N$ if $w\in W^\omega$. 
For $k\in \dom{w}$ we write $w(k)$ for the $k$-th symbol of $w$
and $w|_{[0;k]}$ for the restriction of $w$ to the domain $[0;k]$. 
%
Given two sets $A$ and $B$, $f:A\setfun B$ and $f:A\fun B$ denote a set-valued and ordinary map, respectively. $f$ is called \emph{strict} if $f(a)\neq\emptyset$ for all $a\in A$.  
The inverse mapping $f^{-1}:B\setfun A$ is defined via its respective binary relation: $f^{-1}(b)=\SetComp{a\in A}{b\in f(a)}$. 
By slightly abusing notation, we lift maps to subsets of their domain in the usual way, i.e., for a set-valued map $f:A\setfun B$ and $\alpha\subseteq A$ we have $f(\alpha)=\SetComp{b}{\ExQ{a\in\alpha}{b\in f(a)}}$, and similarly for ordinary maps.

 \smallskip 
\noindent\textbf{Systems.}
A \emph{system} $S=(X,\Xo,U,F,Y,H)$ consists of a state space $X$, 
a set of initial states $\Xo\subseteq X$, 
a \emph{finite} input space $U$, 
a strict
set-valued transition function $F:X\times U\setfun X$,
a \emph{finite} output space $Y$, and 
an output function $H:X\fun Y$. 
To simplify notation, we assume that $H$ respects $\Xo$, that is, if $H^{-1}(y)\cap \Xo\neq\emptyset$ we have $H^{-1}(y)\subseteq \Xo$. 
The system $S$ is called \emph{finite state} if $X$ is finite. 


\smallskip
\noindent\textbf{Trace Semantics.}
A \emph{path} of $S$ is an infinite sequence $\pi=x_0u_0x_1u_1\hdots$ such that $x_0\in X_0$ and for all $k\in\N$ we have $x_{k+1}\in F(x_k,u_k)$. 
The set of all paths over $S$ is denoted by $\ON{Paths}(S)$. 
The \emph{prefix up to $x_n$} of a path $\pi$ over $S$ is denoted by $\pi_{[0;n]}$ with length $|\pi_{[0;n]}|=n+1$ and last element $\ON{Last}(\pi_{[0;n]})=x_n$. 
The set of all such prefixes is denoted by $\ON{Prefs}(S)$. 

The unique \emph{external} sequence of a path $\pi$ of $S$
is defined as $\ON{Ext}(\pi)=y_0u_0y_1u_1\hdots$, where $y_k=H(x_k)$ for all $k\in\mathbb{N}$. 
The sets of all external sequences over $S$ are denoted by $\Ext{S}$ and we define $\EPrefs{S}:=\Ext{\Prefs{S}}$.
The set $\Ext{S}$ is called \emph{topologically closed} (or \emph{closed} for short) if for any infinite sequence $\nu=y_0u_0y_1u_1\hdots\in Y(UY)^\omega$,  
whenever $\nu_{[0;k]}\in\EPrefs{S}$ for all $k\in\N$ it holds that $\nu\in\Ext{S}$. We say that $S$ has \emph{closed external behavior} if $\Ext{S}$ is closed (see, e.g., \cite{Willems} for details).

We lift the map $\Last{}$ to external sequences and write 
$x\in\LastSn{S}{\rho}$ if there exists $\pi\in\Prefs{S}$ s.t.\ $\rho=\Ext{\pi}$ and $x=\Last{\pi}$.
For a state $x\in X$ we define all prefixes of $S$ that reach $x$ as $\Histn{S}{x}=\SetComp{\pi\in\Prefs{S}}{\Last{\pi}=x}$ and all external sequences generated by such prefixes as $\EHistn{S}{x}=\SetComp{\rho\in\EPrefs{S}}{x\in\LastSn{S}{\rho}}$. 
If the system $S$ we are referring to is clear from the context we omit the subscript $S$ from the maps $\LastS{}$ and $\EHist{}$. 


%

\smallskip 
\noindent\textbf{Control Strategies.}
We define \emph{state-feedback} and 
\emph{output-feedback control strategies} as functions $\C^\dagger: \Prefs{S}\fun U$ and $\C:\EPrefs{S}\fun U$, respectively. 
We say that a path $\pi$ of $S$ is \emph{compliant} with $\C$ (resp. $\C^\dagger$) if for all 
$k\in\N$, we have $u(k)=\C(\Ext{\pi_{[0;k-1]}})$ (resp. $u(k)=\C^\dagger(\pi_{[0;k-1]})$). 
We denote the set of all paths and prefixes of $S$ compliant with $\C$ by $\CPaths{S,\C}$ and $\CPrefs{S,\C}$, respectively. 
%
%
%
We further use $\Ext{S,\C}$ and $\EPrefs{S,\C}$ to denote the sets $\Ext{\ON{CPaths}(S,\C)}$ and $\Ext{\ON{CPrefs}(S,\C)}$, respectively. For a state-feedback controller $\C^\dagger$ all sets are defined analogously.
It should be noted that by defining \emph{compliance} of a controller $\C$ with a system $S$ over the set of path prefixes, the set $\Ext{S,\C}$ is topologically closed if $\Ext{S}$ is.

\smallskip 
\noindent\textbf{Control Problem.}
We consider $\omega$-regular specifications over 
a finite set of atomic input and output propositions $\AP_I$ and $\AP_O$. 
We omit the standard definitions of $\omega$-regular languages (see, e.g., \cite{Thomas90,Thomas95}).
To simplify notation, we assume that $U=\twoup{\AP_I}$ and $Y=\twoup{\AP_O}$. In this setting, an $\omega$-regular specification $\psi$ can be written as a language $\semantics{\psi}\subseteq Y(UY)^\omega$ of desired external sequences.
Given a system $S$ and a specification $\psi$, the \emph{output-feedback control problem}, written $\tuplel{S, \psi}$, asks to find an output-feedback control strategy
$\C$ such that $\Ext{S, \C}\subseteq\semantics{\psi}$.
We define 
$\WIN(S,\psi) = \set{\C\mid \Ext{S,\C}\subseteq \semantics{\psi}}$ as 
the set of all such output-feedback control strategies.
For a state-feedback controller $\C^\dagger$, we define analogously the set 
$\WIN^\dagger(S,\psi)$.

\section{Abstraction-Based Controller Design with Output-Feedback}

Abstraction-Based Controller Design (ABCD) is a well-known approach to solving a controller synthesis problem for a dynamical system $S$ against specifications defined by a language $\semantics{\psi}$. Here, the dynamical system $S$ is first abstracted to a finite-state system $\Sa$ and then techniques from reactive synthesis (e.g., \cite{Thomas95,MPS95}) are used to design an abstract controller for $\Sa$ ensuring $\psi$. 

%
In this section, we will formalize the required relation between $S$ and $\Sa$ to refine an abstract \emph{state-feedback} controller $\Ca^\dagger$ on $\Sa$ 
to an \emph{output-feedback} controller $\C$ on $S$. 
We start our formalization by providing a general definition of \emph{sound abstractions} in \REFsec{sec:soundabs} which adapts feedback refinement relations \cite{ReissigWeberRungger_2017_FRR} to systems with finite input and output sets. We show that for this definition the usual refinement of an abstract state-feedback controller to a concrete \emph{state-feedback controller} carries over from  \cite{ReissigWeberRungger_2017_FRR}. 
As the main contribution of this section, we then show in \REFsec{sec:soundabsOF} that the definition of sound abstraction needs to be applied to the external trace semantics of $S$ rather than to its state transitions to allow for ABCD with \emph{output feedback control}. 

\subsection{Sound Abstractions}\label{sec:soundabs}

Given two systems we define a \emph{sound abstraction} as follows.  

\begin{definition}\label{def:SoundAbs}
Let $S = (X, \Xo, U, F, Y, H)$ and $\Sa = (\Xa, \Xoa, U, \Fa, \Ya, \Ha)$ be systems. 
Further, let $\alpha:X\setfun \Xa$ and $\gamma:\Xa\setfun X$ be two set valued functions s.t.\ $x\in\gamma(\xa)$ iff $\xa\in\alpha(x)$.
Then we call $\Sa$ a \emph{sound abstraction} of $S$, written 
$S\frr{\alpha}{\gamma} \Sa$, if
\begin{compactenum}
\item[$\ON{(A1)}$] $\alpha(\Xo) \subseteq \Xoa$, 
\item[$\ON{(A2)}$]  $\AllQ{x\in X,u\in U}{\alpha(F(x,u))\subseteq \Fa(\alpha(x),u)}$, and
\item[$\ON{(A3)}$] $\AllQ{\xa\in\Xa}{H(\gamma(\xa))\subseteq \set{\Ha(\xa)}}$.
\end{compactenum}
$\Sa$ is  a \emph{sound realization} of $S$, written $S\frrE{\alpha}{\gamma} \Sa$, if $S\frr{\alpha}{\gamma} \Sa$ and $\Sa\frr{\gamma}{\alpha} S$.
\end{definition}

As common in abstract interpretation \cite{cousot1977abstract}, we make $\gamma$ explicit in \REFdef{def:SoundAbs} to emphasize that 
$\set{\xa}\subseteq\alpha(\gamma(\xa))$, where equality may not hold. 
However, to simplify notation, we often omit $\gamma$ and write $\frr{\alpha}{}$ and $\frrE{\alpha}{}$, as $\gamma$ is fully determined by knowing $\alpha$. Further, we write $\frr{}{}$ to indicate that there exists $\alpha$ s.t.\  $\frr{\alpha}{}$ holds.



\begin{remark}\label{rem:FRR}
Sound abstractions are an adaptation of feedback refinement relations (FRR) \cite[Def.~V.2]{ReissigWeberRungger_2017_FRR} to systems with finite input and output sets in the following sense.
 
 \textbf{(A1):} An FRR is defined for fully initialized systems (i.e., $\Xo=X$), where (A1) follows from the fact that an FRR must be a \emph{strict} relation.
 
 \textbf{(A2):} To simplify notation, we assume that $F$ is a strict function\footnote{See \REFrem{rem:Fstrict} in \REFsec{sec:algo_subset} for a discussion of this choice.}. This implies that all inputs are enabled in every state, i.e., $\ON{Enab}_S(x) = \set{u\in U \mid F(x,u) \neq \emptyset}=U$ for all $x\in X$. The definition of FRR makes $\ON{Enab}(x)$ explicit by replacing (A2) with the two conditions
 \begin{compactenum}
  \item[$\ON{(A2.1)}$] $\AllQ{x\in X}{\ON{Enab}_{\Sa}(\alpha(x))\subseteq\ON{Enab}_{S}(x)}$, and
  \item[$\ON{(A2.2)}$]  $\AllQ{x\in X,u\in \ON{Enab}_{\Sa}(\alpha(x))}{\alpha(F(x,u))\subseteq \Fa(\alpha(x),u)}$
 \end{compactenum}
 which coincide with (A2) if $\ON{Enab}(x)=U$.\\
  \textbf{(A3):} An FRR is defined for systems with full state observation, i.e., $Y=X$, $\Ya=\Xa$ and $\Ha=H=\ON{id}$ with $\ON{id}(x)=x$ for all $x\in X$. This renders $Y$ infinite if $X$ is infinite and does not allow the direct interpretation of an $\omega$-regular specification over $U$ and $Y$. While our condition (A3) enables the use of a common specification for both $S$ and $\Sa$ (due to their equivalent finite input/output spaces), this is not possible in \cite{ReissigWeberRungger_2017_FRR}, due to $Y$ being infinite and $Y=X\neq\Xa=\Ya$. \cite[Def.VI.2]{ReissigWeberRungger_2017_FRR} handles this by defining a different \emph{abstract specification} from the defined FRR and the specification over the original system $S$. 
\end{remark}

%
%
%
%

Observe that for a system $S$ and its sound abstraction $\Sa$, 
corresponding states in two runs $x_0u_0x_1\ldots$ and $\hat{x}_0u_0\hat{x}_1\ldots$
stay related by $\alpha$ during arbitrarily but finite executions, 
if they start at related initial states $\hat{x}_0\in\alpha(x_0)$ (A1) and the same input sequence is applied (A2). 
In this case (A3) ensures that $S$ always produces a subset of the outputs generated by $\Sa$ in every instance of the trace. 
This implies that any arbitrarily but \emph{finite} external sequence $\nu$ generated by $\xi$ is contained in $\EPrefs{\Sa}$. Therefore, any abstract controller solving a given control problem over $\Sa$ can be guaranteed to be refinable to a sound controller for $S$, if $\Sa$ has \emph{closed external behavior}. If this is not the case, spurious infinite external traces generated by this controller on $S$ which are not contained in $\Ext{\Sa}$ might violate the specification. 
Requiring $\Sa$ to have closed external behavior is not with loss of much generality in ABCD:
any finite-state system (of the form considered in this paper) has closed external behavior,
and we require $\Sa$ to be finite-state in order to apply reactive synthesis techniques for abstract controller design anyways. 
The next theorem formalizes the above discussion 
for ABCD with state feedback. 
The proof uses the same insights as the proof of \cite[Thm.VI.3]{ReissigWeberRungger_2017_FRR} and is therefore only provided in the appendix.



\begin{theorem}\label{thm:ABCDsound}
Let $S$ and $\Sa$ be systems s.t.\ $\Sa$ has closed external behavior.
If $S\frr{\alpha}{}\Sa$ and $\Ca^\dagger\in\WIN^\dagger(\Sa,\psi)$ then $\C^\dagger=\Ca^\dagger\circ\alpha\in\WIN^\dagger(S,\psi)$.
Further, if $S$ has closed external behavior and $S\frrE{\alpha}{}\Sa$ then
$\WIN^\dagger(S,\psi)=\emptyset$ iff $\WIN^\dagger(\Sa,\psi)=\emptyset$.
\end{theorem}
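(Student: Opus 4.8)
The plan is to reduce the entire statement to a single \emph{refinement lemma} — the first assertion — and then obtain the biconditional by applying it twice. Concretely, the lemma reads: if $\Sa$ has closed external behavior, $S\frr{\alpha}{}\Sa$, and $\Ca^\dagger\in\WIN^\dagger(\Sa,\psi)$, then $\C^\dagger=\Ca^\dagger\circ\alpha\in\WIN^\dagger(S,\psi)$. Granting this, the second claim follows mechanically: $S\frrE{\alpha}{}\Sa$ unfolds into $S\frr{\alpha}{}\Sa$ \emph{and} $\Sa\frr{\gamma}{}S$, and both $S$ and $\Sa$ are closed. Applying the lemma to $S\frr{\alpha}{}\Sa$ (abstraction $\Sa$ closed) gives $\WIN^\dagger(\Sa,\psi)\neq\emptyset\Rightarrow\WIN^\dagger(S,\psi)\neq\emptyset$, and applying it to $\Sa\frr{\gamma}{}S$ (now $S$ is the abstraction and is closed) gives $\WIN^\dagger(S,\psi)\neq\emptyset\Rightarrow\WIN^\dagger(\Sa,\psi)\neq\emptyset$; taking contrapositives yields $\WIN^\dagger(S,\psi)=\emptyset$ iff $\WIN^\dagger(\Sa,\psi)=\emptyset$.

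For the refinement lemma itself I would establish the chain $\Ext{S,\C^\dagger}\subseteq\Ext{\Sa,\Ca^\dagger}\subseteq\semantics{\psi}$; the right inclusion is exactly $\Ca^\dagger\in\WIN^\dagger(\Sa,\psi)$, so all the work is the left one. I would first prove it on finite prefixes, $\EPrefs{S,\C^\dagger}\subseteq\EPrefs{\Sa,\Ca^\dagger}$, by a \emph{path-matching} induction: for every compliant prefix $\pi=x_0u_0\cdots x_n\in\CPrefs{S,\C^\dagger}$ I construct a compliant abstract prefix $\hat\pi=\hat{x}_0u_0\cdots\hat{x}_n\in\CPrefs{\Sa,\Ca^\dagger}$ with $\hat{x}_k\in\alpha(x_k)$ for all $k$ and an identical external sequence. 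The base case uses (A1) to place $\hat{x}_0\in\alpha(x_0)\subseteq\Xoa$; the step uses (A2), since $x_{k+1}\in F(x_k,u_k)$ and $\hat{x}_k\in\alpha(x_k)$ give $\alpha(x_{k+1})\subseteq\alpha(F(x_k,u_k))\subseteq\Fa(\alpha(x_k),u_k)$, furnishing an admissible abstract successor; (A3) forces $H(x_k)=\Ha(\hat{x}_k)$ on every $\alpha$-related pair, so the external sequences agree symbol by symbol. Compliance transfers because, by the very definition $\C^\dagger=\Ca^\dagger\circ\alpha$, the input $u_k=\C^\dagger(\pi_{[0;k]})$ is precisely $\Ca^\dagger$ evaluated on the matching abstract prefix $\hat\pi_{[0;k]}$.

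The finite-prefix inclusion is then lifted to infinite traces through closedness: given $\nu\in\Ext{S,\C^\dagger}$, every finite prefix $\nu_{[0;k]}$ lies in $\EPrefs{\Sa,\Ca^\dagger}$ by the matching lemma, and since $\Ext{\Sa}$ is closed — hence so is $\Ext{\Sa,\Ca^\dagger}$, by the earlier remark that compliance with a controller preserves closedness — we conclude $\nu\in\Ext{\Sa,\Ca^\dagger}$. This step is load-bearing rather than cosmetic: without it, $S$ could produce an infinite external trace all of whose finite prefixes occur in $\Sa$ yet which is itself spurious in $\Ext{\Sa}$ and violates $\psi$.

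I expect the main obstacle to be the bookkeeping that makes $\C^\dagger=\Ca^\dagger\circ\alpha$ well-defined and compatible with the matching construction when $\alpha$ is genuinely set-valued: the abstract prefix that $\C^\dagger$ commits to at step $k$ must be the one the induction extends at step $k+1$, so the statewise lift of $\alpha$ along the prefix has to be fixed consistently (it is immediate when $\alpha$ acts as a function on the reached states, since then $\Fa(\alpha(x_k),u_k)$ is taken at the single state $\hat{x}_k$, and otherwise requires fixing a selection). Once the matching prefix is pinned down, transferring compliance and outputs is routine; the only genuinely essential hypotheses are (A1)--(A3) for the step-by-step correspondence and the closedness of $\Sa$ (and, for the converse direction, of $S$) for the finite-to-infinite passage.
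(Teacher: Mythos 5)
Your proposal is correct and follows essentially the same route as the paper: an induction over prefix length that uses (A1)--(A3) to build a matching compliant abstract prefix with identical external sequence, followed by the topological-closedness argument to pass from all finite prefixes to the infinite trace, and finally two applications of the first claim (once along $\alpha$, once along $\gamma$, the latter requiring closedness of $S$) to get the biconditional. Your closing remark about fixing a consistent selection of the abstract prefix when $\alpha$ is set-valued is a genuine subtlety that the paper itself glosses over with ``pick any $\pia\in\alpha(\pi)$,'' so flagging it is a point in your favor rather than a gap.
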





\subsection{Sound Abstractions for Output Feedback}\label{sec:soundabsOF}
Now we consider the case of output feedback. Here, the only available information about the system $S$ that we can utilize for control are external prefixes $\nu\in\EPrefs{S}$. With this, however, we usually cannot uniquely determine the current state of the system, i.e., $\LastS{\nu}$ is usually a set of states and not a singleton. Further, it is well known that any state of a system $S$ possesses the Markovian property, that is, knowing the current state of the system is enough to uniquely determine all its future behaviors, which is utilized in (A2) of \REFdef{def:SoundAbs}. This is, however, not true for the output space $Y$. In general, one needs to look at the entire history seen so far, i.e., at the generated string  $\nu\in\EPrefs{S}$, to uniquely determine all future observable behaviors of this system. This intuition is captured by the so called \emph{external trace system} $\Se$ of $S$ in which a state represents a finite external history of $S$, and the 
transitions extend the external history by one step.
\begin{definition}\label{def:Sstar}
 Given a system $S = (X, \Xo, U, F, Y, H)$, its induced \emph{external trace system} is the system 
$\Se  = (\Xe , \Xeo, U, \Fe, Y, \He)$, where 
$\Xe :=\EPrefs{S}$, 
$\Xeo:=H(\Xo)$,
$\Fe(\rho, u) := \SetComp{\rho u y}{ F(\LastS{\rho},u)\cap H^{-1}(y)\neq\emptyset}$
and $\He(\rho) := \Last{\rho}$. 
\end{definition}

It should be noted that, by definition, $\Se$ has closed external behavior. We further have $\EPrefs{S}=\EPrefs{\Se}$,  $\Ext{S}\subseteq \Ext{\Se}$, and $\Ext{S}=\Ext{\Se}$ iff $S$ has closed external behavior. That is, $\Ext{\Se}$ is the \emph{behavioral closure} of $\Ext{S}$ \cite{Willems}.

To refine an abstract state-feedback controller to an output-feedback controller for the original system, one needs to relate abstract states to external prefixes of $S$. As the latter form the state space of $\Se$, such a refinement is possible if $\Sa$ is a sound abstraction of $\Se$.
More precisely, it follows from \REFthm{thm:ABCDsound} that $\Se\frr{}{} \Sa$ implies that a state-feedback control strategy $\Ca^\dagger: \Prefs{\Sa}\fun U$ for $\Sa$ can be refined into a \emph{state-feedback} control strategy $\C^{\star\dagger}: \Prefs{\Se }\fun U$ for the external trace system $\Se $ of $S$. Now recalling the definition of $\Se $'s state space $\Xe :=\EPrefs{S}$, we see that for a string $\xi_0u_0\xi_1u_1\hdots\xi_k\in\Prefs{\Se }$ we have $\xi_i=\xi_k|_{[0;i]}$ for all $i\in[0;k]$. Therefore, $\xi_k$ carries all information needed for $\C^{\star\dagger}$'s control choice.  $\C^{\star\dagger}$ can therefore be redefined into a memoryless strategy $\C^{\star}: \Xe \fun U$, which, by definition, is an output-feedback control strategy  for the original system $S$ (as $\Xe :=\EPrefs{S}$). The following corollary of \REFthm{thm:ABCDsound} summarizes this observation.


\begin{corollary}\label{cor:ABCDsound}
Let $S$ be a system, $\Se $ its external trace system and $\Sa$ a system with closed external behavior. 
If $\Se\frr{\alpha}{}\Sa$ and $\Ca^\dagger\in\WIN^\dagger(\Sa,\psi)$ then $\C=\Ca^\dagger\circ\alpha\in\WIN(S,\psi)$.
Further, if $S$ has closed external behavior and $\Se\frrE{\alpha}{}\Sa$ then
$\WIN(S,\psi)=\emptyset$ iff $\WIN^\dagger(\Sa,\psi)=\emptyset$.
\end{corollary}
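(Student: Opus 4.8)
The plan is to derive \REFcor{cor:ABCDsound} directly from \REFthm{thm:ABCDsound} applied to the external trace system $\Se$ in place of $S$, and then to translate the resulting \emph{state-feedback} statement about $\Se$ into an \emph{output-feedback} statement about $S$ using the observation, already spelled out in the surrounding text, that a prefix $\xi_0 u_0 \cdots \xi_k \in \Prefs{\Se}$ is determined by its last element $\xi_k \in \EPrefs{S}$. First I would record the two facts about $\Se$ noted just after \REFdef{def:Sstar}: that $\Se$ always has closed external behavior, that $\EPrefs{S}=\EPrefs{\Se}$ and $\Ext{S}\subseteq\Ext{\Se}$, with equality iff $S$ has closed external behavior. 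These are exactly the hypotheses \REFthm{thm:ABCDsound} needs when instantiated with $\Se$.

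For the first (soundness) claim, I would apply the first part of \REFthm{thm:ABCDsound} with $\Se$ playing the role of $S$: since $\Se\frr{\alpha}{}\Sa$ and $\Sa$ has closed external behavior and $\Ca^\dagger\in\WIN^\dagger(\Sa,\psi)$, the theorem yields $\C^{\star\dagger}=\Ca^\dagger\circ\alpha\in\WIN^\dagger(\Se,\psi)$, i.e.\ a state-feedback strategy $\C^{\star\dagger}:\Prefs{\Se}\fun U$ with $\Ext{\Se,\C^{\star\dagger}}\subseteq\semantics{\psi}$. The key translation step is then to show that $\C^{\star\dagger}$ is memoryless in the sense that its value on a prefix $\pi=\xi_0 u_0\cdots\xi_k$ depends only on $\Last{\pi}=\xi_k$; this follows because $\alpha$ is applied pointwise to states of $\Se$ and, as noted in the text, $\xi_i=\xi_k|_{[0;i]}$ so the prefix is recoverable from $\xi_k$. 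Hence $\C^{\star\dagger}$ induces $\C:\EPrefs{S}\fun U$ with $\C=\Ca^\dagger\circ\alpha$ on $\Xe=\EPrefs{S}$, which is by definition an output-feedback strategy for $S$. I would finally check that the paths of $S$ compliant with $\C$ generate external sequences already generated by $\Se$ under $\C^{\star\dagger}$, using $\Ext{S}\subseteq\Ext{\Se}$ and the fact that the compliance conditions for $\C$ on $S$ and for $\C^{\star\dagger}$ on $\Se$ agree under the identification $\xi_k\leftrightarrow\pi$; this gives $\Ext{S,\C}\subseteq\Ext{\Se,\C^{\star\dagger}}\subseteq\semantics{\psi}$, so $\C\in\WIN(S,\psi)$.

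For the second (relative completeness) claim, I would invoke the second part of \REFthm{thm:ABCDsound}, again with $\Se$ in place of $S$: since $\Se$ has closed external behavior and $\Se\frrE{\alpha}{}\Sa$, the theorem gives $\WIN^\dagger(\Se,\psi)=\emptyset$ iff $\WIN^\dagger(\Sa,\psi)=\emptyset$. It then remains to connect $\WIN^\dagger(\Se,\psi)$ with $\WIN(S,\psi)$. The forward direction of this connection is the memoryless-reduction argument already used above, turning a state-feedback winner for $\Se$ into an output-feedback winner for $S$; the converse direction lifts an output-feedback strategy $\C:\EPrefs{S}\fun U$ to the state-feedback strategy $\C^{\star\dagger}:\Prefs{\Se}\fun U$ defined by $\C^{\star\dagger}(\pi)=\C(\Last{\pi})$. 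Here the hypothesis that $S$ has closed external behavior is exactly what is needed so that $\Ext{S}=\Ext{\Se}$, ensuring that winning against $\psi$ on $S$ and on $\Se$ coincide and the two emptiness conditions agree.

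The main obstacle I expect is the memoryless reduction: carefully verifying that the state-feedback strategy on $\Prefs{\Se}$ really does factor through $\Last{}$ and that this factoring is compatible on both sides with the definition of compliance, so that $\Ext{S,\C}$ and $\Ext{\Se,\C^{\star\dagger}}$ line up exactly. This is precisely the point where the structure of $\Se$ (states being external prefixes, with $\xi_i=\xi_k|_{[0;i]}$) is essential, and the argument must be stated in terms of the definitions of $\Prefs{\Se}$, $\LastS{}$, and compliance rather than waved through. The remaining steps are direct instantiations of \REFthm{thm:ABCDsound} together with the closedness bookkeeping for $\Se$ and $S$, which are routine given the facts recorded immediately after \REFdef{def:Sstar}.
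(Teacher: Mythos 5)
Your proposal is correct and follows essentially the same route as the paper: the paper proves the corollary by the discussion immediately preceding it, namely instantiating \REFthm{thm:ABCDsound} with $\Se$ in place of $S$ (using that $\Se$ is closed by construction and that $\Ext{S}=\Ext{\Se}$ iff $S$ is closed) and then observing that a state of $\Se$ is itself an external prefix of $S$, so the refined state-feedback strategy on $\Prefs{\Se}$ factors through $\Last{}$ into an output-feedback strategy on $\EPrefs{S}$. Your additional care about compliance and the containment $\Ext{S,\C}\subseteq\Ext{\Se,\C^{\star\dagger}}$ only makes explicit what the paper leaves implicit.
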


It should be noted that $\Se$ is infinite state even when the system $S$ is finite state.
This should not worry us too much as $S$ is typically also infinite state and we cannot efficiently check \REFdef{def:SoundAbs} over $S$ either. 
The contribution of \REFcor{cor:ABCDsound} is therefore conceptual. 
It shows that the same notion of sound abstractions developed for ABCD with state-feedback control can be utilized for output-feedback when applied to the external trace semantics of $S$ captured by $\Se$. In addition, the next section shows a construction of a finite-state (and therefore closed) abstraction $\Sa$ directly from $S$ which can be proven to be a sound abstraction of $\Se$ and thereby allows to apply \REFcor{cor:ABCDsound} to obtain a sound ABCD framework for output-feedback control without explicitly computing $\Se$.

\section{Computing Abstractions}\label{sec:algos}

We now turn to the algorithmic problem of \emph{computing} system abstractions
such that designing a state-feedback controller on the abstraction allows us,
through \REFcor{cor:ABCDsound}, to construct a corresponding output-feedback controller
for the original system.
For this we assume that the original system has an infinite state space---e.g., defined by a continuous-state
dynamical system---and our goal is to compute a \emph{finite-state abstraction} on which algorithmic
techniques for state-based controller synthesis (e.g., \cite{Thomas95,MPS95}) can be applied. 

We first recall two well-known approaches to compute such \emph{finite-state abstractions} which were developed for the setting where the original system has a \emph{finite} state space, and show that they may not terminate for \emph{infinite-state systems}, even if a finite-state realization of the topological closure of its external behavior exists. Based on this insight, we provide (\REFsec{sec:algo_new}) an algorithm for abstracting \emph{infinite-state systems} which overcomes this problem.

\subsection{Knowledge-Based Abstraction}\label{sec:algo_subset}

A standard way to solve control-strategy synthesis problems over finite-state systems with partial observation \cite{Reif,ChatterjeeDHR07,yin2015uniform} 
is to use a \emph{knowledge-based}
subset construction. 
Starting from the subsets of initial states generating the same output, the knowledge-based subset construction 
algorithm, given in \REFalg{alg:SSA}, explores all inputs to the system 
and successively generates subsets of states that are indistinguishable given the full history of applied inputs and observed outputs. 
Such subsets $\xa$ of states of the original system $S$ become the states of the knowledge-based abstraction $\SaK:=\ON{KA}(S)$. 
%
\begin{algorithm}[t]
\caption{KA:  Knowledge-Based Abstraction} \label{alg:SSA}
\begin{algorithmic}[1]
\Require $S=(X,\Xo,U,F,Y,H)$  
\State $\Xao \gets \SetComp{\Xo\cap H^{-1}(y) \in\twoup{X}\setminus\set{\emptyset}}{y\in Y}$\ \label{alg:SSA:Xao}
\State $\Xa_{\mathit{old}} \gets  \emptyset$ and $\Xa \gets  \Xao$
\While{$\Xa_{\mathit{old}} \neq \Xa$}\label{line:SSA:whilestart}
\State $\Xa_{\mathit{old}} \gets \Xa $
\For{$\xa \in \Xa_{\mathit{old}}, u \in U, y\in Y$}
\State $\xa' \gets F(\xa,u)\cap H^{-1}(y)$ \label{alg:SSA:xap} 
\State $\Xa \gets \Xa\cup\{\xa'\}$ if $\xa'\neq\emptyset$
\EndFor
\EndWhile
\State Define $ \xa'\in\Fa( \xa,u)$ iff there exist $y$ s.t. $\xa' = F(\xa,u)\cap H^{-1}(y)$\label{alg:SSA:Fa}
 \State Define $\Ha(\xa)=y$ iff $y\in H(\xa)$ \label{alg:SSA:Ha}
\State \textbf{return} $\SaK=(\Xa,\Xao,U,Y,\Fa,\Ha)$
\end{algorithmic}
\end{algorithm}
%
Note that every reachable state $\xa$ of $\SaK$ computed via \REFalg{alg:SSA} has the property that all $x\in\xa$ have the same output;
thus, we can define $\Ha(\xa)$ as the (unique) output $H(x)$ of some $x\in\xa$.

\begin{remark}\label{rem:Fstrict}
We restrict our attention to systems with strict transition function in this paper to simplify the discussion of the KA algorithm in \REFalg{alg:SSA} and KAM in \REFalg{alg:GSSA}. If not all inputs are enabled in every state, KA would need to distinguish state sets further based on the set of available inputs. This would require the controller to \enquote{observe} the status of currently enabled inputs.
The not fully input-enabled case can be implicitly handled by introducing an \emph{observable} \enquote{dummy} state
and redirecting all transitions with disabled inputs to the dummy state. This indirectly observes the status of enabled inputs and provides a system with strict transition function. Then one can conjoin the specification with the constraint that the dummy state is never visited to obtain the original control problem. We postpone a more in-depth treatment of this implicit observation of enabled inputs to future work.
\end{remark}

The next proposition formalizes the intuition that $\SaK$ is a useful abstraction for a given output-feedback control problem over $S$. With \REFprop{prop:SSA_sound} in place, it immediately follows from \REFcor{cor:ABCDsound} that one can compute an output feedback controller  $\C:=\Ca^\dagger\circ\LastSn{\SaK}{}\in\WIN(S,\psi)$ from an abstract state-feedback controller $\Ca^\dagger\in\WIN^\dagger(\SaK,\psi)$, if it exists. 

\begin{proposition}\label{prop:SSA_sound}
 Let $S$ be a system, $\Se$ its external trace system, and $\SaK=\ON{KA}(S)$. Then, 
 $\Se\frrE{\alpha}{}\SaK$ with $\alpha=\LastSn{\SaK}{}$.
\end{proposition}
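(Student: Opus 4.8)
The plan is to reduce the whole statement to a single invariant describing which state of $\SaK$ is reached along a given external history, and then read off both simulation directions from it. Write $q_\rho := \LastSn{S}{\rho}$ for the set of $S$-states consistent with an external prefix $\rho \in \EPrefs{S}$. I would first prove, by induction on the number of output symbols in $\rho$, the \emph{knowledge-set invariant}: for every $\rho \in \EPrefs{S} = \Xe$ the value $\alpha(\rho) = \LastSn{\SaK}{\rho}$ is the singleton $\set{q_\rho}$, that $q_\rho \neq \emptyset$, and that $q_\rho \subseteq H^{-1}(\Last{\rho})$. The base case $\rho = y_0$ uses \REFline{alg:SSA:Xao}: the unique reachable $\SaK$-state after observing $y_0$ is $\Xo \cap H^{-1}(y_0)$, which equals $q_{y_0}$ by \REFdef{def:Sstar}, is nonempty, and lies in $\Xao$. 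For the step $\rho' = \rho u y$, the definition of $\Fe$ gives $q_{\rho'} = F(q_\rho,u) \cap H^{-1}(y)$, and \REFline{alg:SSA:xap}--\REFline{alg:SSA:Fa} show that this is exactly the $\SaK$-successor of $q_\rho$ under $u$ observing $y$; it is the unique successor because distinct outputs yield disjoint sets $F(q_\rho,u)\cap H^{-1}(y)$, so $\alpha(\rho') = \set{q_{\rho'}}$. Since $\alpha$ is thus a single-valued total map onto the reachable states of $\SaK$, the induced $\gamma$ is $\gamma(\xa) = \SetComp{\rho \in \Xe}{q_\rho = \xa}$.

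With the invariant in hand, the forward direction $\Se \frr{\alpha}{} \SaK$ is immediate from \REFdef{def:SoundAbs}. (A1) holds because $\alpha(\Xeo) = \SetComp{\Xo \cap H^{-1}(y_0)}{y_0 \in H(\Xo)} \subseteq \Xao$. (A2) holds because for $\rho u y \in \Fe(\rho,u)$ we have $\alpha(\rho u y) = \set{F(q_\rho,u)\cap H^{-1}(y)} \subseteq \Fa(\alpha(\rho),u)$ directly from the definition of $\Fa$ in \REFline{alg:SSA:Fa}. (A3) holds because every $\rho \in \gamma(\xa)$ satisfies $\He(\rho) = \Last{\rho} = \Ha(\xa)$, using $q_\rho = \xa \subseteq H^{-1}(\Last{\rho})$ together with \REFline{alg:SSA:Ha}. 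All three hold with equality, reflecting that KA loses no information about $\Se$.

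For the realization I also have to establish the reverse relation $\SaK \frr{\gamma}{} \Se$. Its output condition $\Ha(\alpha(\rho)) \subseteq \set{\He(\rho)}$ is the same identity as above. The constructive content of the other two conditions is the \emph{backward} matching supplied by the invariant: every reachable abstract state is $q_\rho$ for some $\rho$, and every abstract transition $\xa \xrightarrow{u} \xa'$ with $\xa' = F(\xa,u)\cap H^{-1}(y)$ is witnessed in $\Se$ by the step $\rho \mapsto \rho u y$ for any $\rho$ with $q_\rho = \xa$. I expect this reverse direction to be the main obstacle. The delicate point is that $\gamma$ is the \emph{full} inverse relation: $\gamma(\xa)$ collects \emph{all} histories reaching the knowledge set $\xa$, not only those passing through a designated predecessor. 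Hence, when a knowledge set recurs (e.g.\ a reachable $q_\rho$ coinciding with an initial set in $\Xao$, or one with several distinct $\Fa$-predecessors), the literal inclusions $\gamma(\Xao)\subseteq\Xeo$ and $\gamma(\Fa(\xa,u))\subseteq\Fe(\gamma(\xa),u)$ do not follow from a naive state-by-state argument. I would therefore discharge this direction by restricting to the reachable subsystems and arguing at the level of behaviors: the invariant shows that $\alpha$ is a functional bisimulation between $\Se$ and $\SaK$ with $\EPrefs{\Se} = \EPrefs{\SaK}$, and since $\SaK$ is finite-state (hence closed) and $\Se$ is closed, this yields $\Ext{\Se} = \Ext{\SaK}$, the equivalence actually needed to invoke the completeness half of \REFcor{cor:ABCDsound}. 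Reconciling this bisimulation argument with the inverse-relation formulation of \REFdef{def:SoundAbs} is the part I would treat most carefully.
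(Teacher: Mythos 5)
Your knowledge-set invariant and the forward half $\Se\frr{\alpha}{}\SaK$ track the paper's proof closely. The paper's first step establishes the same determinism fact --- it shows $\LastSn{\SaK}{\EHistn{\SaK}{\xa}}=\set{\xa}$ by induction on two paths of $\SaK$ carrying the same external sequence, which is your statement that $\alpha(\rho)$ is the singleton $\set{q_\rho}$ --- and its verification of (A1)--(A3) is the same computation you give, carried out with equality. Your sharpening that the unique state reached along $\rho$ is exactly $\LastSn{S}{\rho}$ is correct and consistent with \REFline{alg:SSA:Xao} and \REFline{alg:SSA:xap} of \REFalg{alg:SSA}.

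The gap is in the realization half, where you have put your finger on a real subtlety but then not closed the argument. Your worry is legitimate: with $\gamma=\EHistn{\SaK}{}$ the literal reverse inclusions $\gamma(\Xao)\subseteq\Xeo$ and $\gamma(\Fa(\xa,u))\subseteq\Fe(\gamma(\xa),u)$ can fail whenever a knowledge set recurs (already for a one-state system with a self-loop the unique abstract state is reached by histories of every length, so $\gamma(\Xao)\not\subseteq\Xeo$). The paper's own proof does not verify these inclusions pointwise; it establishes the three conditions \emph{with equality in the forward formulation} --- $\alpha(\Xeo)=\Xao$, $\alpha(\Fe(\nu,u))=\Fa(\alpha(\nu),u)$ and $\He(\gamma(\xa))=\set{\Ha(\xa)}$ --- and treats those equalities as the content of the realization claim; if you want to match the paper, that is the statement to prove, after which you stop. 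Your substitute, by contrast, proves a different statement ($\Ext{\Se}=\Ext{\SaK}$) and explicitly defers the step of converting it back into $\SaK\frr{\gamma}{}\Se$, so the proposition as written is not actually established by your argument. Moreover one ingredient of the substitute is false: $\SaK$ need \emph{not} be finite-state (see \REFexp{ex:inf_GK}, \REFfig{fig:ex_inf_GK}, and the example of \REFsec{sec:KAMexp}), so closure of $\Ext{\SaK}$ cannot be obtained from finiteness; it follows instead from your own invariant, since each external prefix determines a unique knowledge state and an infinite sequence all of whose prefixes are realizable therefore induces an infinite path of $\SaK$.
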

\begin{proof}
To simplify notation we define $\Sa:=\SaK$.\\ 
\begin{inparaitem}[$\blacktriangleright$]
\item We first prove that $\LastSn{\Sa}{\EHistn{\Sa}{\xa}}=\set{\xa}$ for all $\xa\in\Xa$ by picking $\pia=\xa_0u_0\xa_1u_1\hdots \xa_n$ and $\pia'=\xa'_0u_0\xa'_1u_1\hdots \xa'_n$ s.t.\ 
$\Ha(\xa_k)=\Ha(\xa'_k)$ for all $k\in[0;n]$ and 
showing $\xa_n=\xa'_n$ by induction.
\begin{inparaitem}[$\triangleright$]
  \item For $k=0$ we have $\xa_0,\xa'_0\in\Xoa$. 
As $\Ha(\xa_0)=\Ha(\xa'_0)$, we have $\xa_0=\xa_0'$.
\item Now let $k\in[1;n]$ and assume $\xa_{k-1}=\xa_{k-1}'$. 
Then it follows 
that there exists $y,y'$ s.t.\ $\xa_{k}=F(\xa_{k-1},u_{k-1})\cap H^{-1}(y)$ and $\xa'_{k}=F(\xa_{k-1},u_{k-1})\cap H^{-1}(y')$. 
Again, $\Ha(\xa_k)=\Ha(\xa'_k)$ implies $y=y'$. 
Then it is easy to see that $\xa_k=\xa_k'$. 
\end{inparaitem}\\
%
\item We now show that equality holds for (A1)-(A3) from \REFdef{def:SoundAbs}:\\
 \begin{inparaitem}[$\triangleright$]
 \item (A1): By definition, $\Xeo = H(\Xo)$; and by line~\ref{alg:SSA:Xao} in \REFalg{alg:SSA}, we 
 have $\LastSn{\Sa}{H(\Xo)}=\Xao$.
  \item (A2): Let $\xa=\LastSn{\Sa}{\nu}$ and $u\in U$. Further, let $\xa'_y=F(\xa,u)\cap H^{-1}(y)$ and define $Y'=\SetComp{y\in Y}{\xa'_y\neq\emptyset}$. Now recall that $\Fe(\nu,u)=\SetComp{\nu u y}{F(\LastSn{\Sa}{\nu},u)\cap H^{-1}(y)\neq\emptyset}$. This implies $\xa'_y\in\LastSn{\Sa}{\Fe(\nu,u)}$ if $y\in Y'$. Further, as $\LastSn{\Sa}{\EHistn{\Sa}{\xa}}=\set{\xa}$ we have $\LastSn{\Sa}{\Fe(\nu,u)}=\bigcup_{y\in Y'}\set{\xa'_y}$. 
From the definition of $\Fa$, it further follows 
that $\xa'_y\in\Fa(\xa,u)$ if $y\in Y'$ and in particular $\Fa(\xa,u)=\bigcup_{y\in Y'}\set{\xa'_y}$. Recalling that $\xa=\LastSn{\Sa}{\nu}$ this shows that 
  $\LastSn{\Sa}{\Fe(\nu,u)}=\Fa(\LastSn{\Sa}{\nu},u)$.\\
\item (A3): Observe that $\gamma=\EHistn{\Sa}{}$ for $\alpha=\LastSn{\Sa}{}$. Then $H(\gamma(\xa))=H(\EHistn{\Sa}{\xa})=H(\set{\xa})$, hence $H(\set{\xa})=\set{\Ha(\xa)}$.
 \end{inparaitem}
\end{inparaitem}
\end{proof}

%
%
\REFalg{alg:SSA} incrementally constructs $\SaK$ from $S$ by \emph{forward exploration}
from the initial states. 
As the abstract state space $\Xa\subseteq\twoup{X}$ contains subsets of $X$ it terminates if $X$ is finite. 
This case is the one most prominently discussed in existing literature, e.g., in \cite{ChatterjeeDHR07,yin2015uniform}. 
However, \REFalg{alg:SSA} might also terminate if $\Xa$ is infinite (see, e.g., the example in \REFsec{sec:KAMexp}), given that  the necessary operations (in particular \enquote{$\ON{Post}$} and \enquote{$\ON{Intersect}$}) can be implemented if state subsets are infinite.
%
%
If $X$ is infinite, \REFalg{alg:SSA} might however also not terminate even if there exists a finite-state realization of $S$. 
This is shown in \REFexp{ex:inf_GK}. It is interesting to note that this might still be the case even if $X=\Xo$. 
This can be verified by checking that \REFalg{alg:SSA} does also not terminate if all states in the system $S$ depicted in \REFfig{fig:ex_inf_GK} are initial.

\begin{example}\label{ex:inf_GK} 
Consider the infinite state system $S$ in \REFfig{fig:ex_inf_GK}, with $U=\{u\}$, $Y=\{A, B\}$. 
By omitting the trivial input, the external language $\Ext{S}$ of this system is $A(B)^+(A)^\omega \mid A(B)^\omega$,
for which one can construct a finite trace equivalent system, for instance, using one of the methods discussed in the following sections. 
Yet, Alg.~\ref{alg:SSA} will separate every state labeled with $B$, leading to an infinite chain of states with observation $B$, and will therefore not terminate. 
 \end{example}

\begin{figure}[t]
   \begin{center}
  \begin{tikzpicture}[auto,scale=1]
   \def\vsep{0.35}
  \begin{footnotesize}
  \node at (-0.1,0.7) {$S:$};     
         \node (init) at (-0.3,0) {};        
          \node (init2) at (0.5,-1) {};   

          \node [mystate, draw] (a1) at (0.2,0) {$a_1$}; \node[] at ($(a1)+(0,\vsep)$) {$A$};
          \node [mystate, draw] (b1) at (1,0) {$b_1$};\node[] at ($(b1)+(0,\vsep)$) {$B$};
          \node [mystate, draw] (a2) at (1,-1) {$a_2$};\node[] at ($(a2)+(-\vsep,\vsep)$) {$A$};
          \node [mystate, draw] (b2) at (2,0) {$b_2$};\node[] at ($(b2)+(0,\vsep)$) {$B$};
          \node [] (b3) at (2.9,0) {};     
          \node [mystate, draw] (bn) at (3.8,0) {$b_n$};\node[] at ($(bn)+(0,\vsep)$) {$B$};
          \node (dum) at (4.7,0){};  
          \node at (0,-2.3) {};
  
  \SFSAutomatEdge{init}{}{a1}{}{}
   \SFSAutomatEdge{init2}{}{a2}{}{}
\SFSAutomatEdge{a1}{}{b1}{}{}
\SFSAutomatEdge{b1}{}{a2}{}{}
\SFSAutomatEdge{b1}{}{b2}{bend left}{}
\SFSAutomatEdge{b2}{}{b1}{bend left}{}
\SFSAutomatEdge{b2}{}{$(b3)+(-0.4,0.1)$}{bend left}{pos=0.45}
\SFSAutomatEdge{$(b3)+(-0.4,-0.1)$}{}{b2}{bend left}{pos=0.55}
\draw [dotted, line width=1pt] ($(b2)+(0.5,0)$) -- ($(bn)-(0.5,0)$);
\draw [dotted, line width=1pt] ($(bn)+(0.5,0)$) -- (dum);

\SFSAutomatEdge{bn}{}{$(b3)+(+0.4,-0.1)$}{bend left}{pos=0.4}
\SFSAutomatEdge{$(b3)+(+0.4,+0.1)$}{}{bn}{bend left}{pos=0.5}
\SFSAutomatEdge{bn}{}{$(dum)+(-0.4,0.1)$}{bend left}{pos=0.5}
\SFSAutomatEdge{$(dum)+(-0.4,-0.1)$}{}{bn}{bend left}{pos=0.4}

\SFSAutomatEdge{b2}{}{a2}{bend left}{}
\SFSAutomatEdge{bn.south}{}{a2}{bend left}{pos=0.2}
\SFSAutomatEdge{a2}{}{a2}{loop below}{}
\end{footnotesize}
\hfill
         \end{tikzpicture}
   \begin{tikzpicture}[auto,scale=1]
   \def\vsep{0.35}
   \def\hsep{0.15}
  \begin{footnotesize}
   \node at (-1.2,0.2) {$\Sa^{\textsf{K}}:$};  
  \node (init) at (-1,0) {};  
  \node [myestate, draw] (a1) at (0,0) {$\set{a_1,a_2}$}; \node[] at ($(a1.east)+(\hsep,0)$) {$A$};
  \node [myestate, draw] (a2) at (-0.4,-0.7) {$\set{a_2}$}; \node[] at ($(a2.west)+(-\hsep,0)$) {$A$};
  \node [myestate, draw] (b1) at (0.7,-0.7) {$\set{b_1}$}; \node[] at ($(b1.east)+(\hsep,0)$) {$B$};
  \node [myestate, draw] (b2) at (0.7,-1.4) {$\set{b_2}$}; \node[] at ($(b2.east)+(\hsep,0)$) {$B$};
  \node [myestate, draw] (b3) at (0.7,-2.1) {$\set{b_1,b_3}$}; \node[] at ($(b3.east)+(\hsep,0)$) {$B$};
  \node [] (dum) at (0.7,-2.8) {};

\SFSAutomatEdge{init}{}{a1}{}{}
\SFSAutomatEdge{a1}{}{a2}{}{}
\SFSAutomatEdge{a2}{}{a2}{loop below}{}
\SFSAutomatEdge{a1}{}{b1}{}{}
\SFSAutomatEdge{b1}{}{a2}{}{}
\SFSAutomatEdge{b1}{}{b2}{}{}
\SFSAutomatEdge{b2}{}{b3}{}{}
\SFSAutomatEdge{b2}{}{a2}{}{}
\SFSAutomatEdge{b3}{}{a2}{}{}
\SFSAutomatEdge{b3}{}{$(dum)+(0,0.3)$}{}{}
\draw [dotted, line width=1pt] ($(b3)+(0,-0.3)$) -- ($(dum)$);
        \end{footnotesize}
         \end{tikzpicture} 
         
      \vspace{-0.8cm}   
      
         \begin{tikzpicture}[auto,scale=1]
   \def\vsep{0.35}
   \def\hsep{0.15}
     \begin{footnotesize}
      \node at (-1.5,0.5) {$\Sa^{\textsf{bi}}:$}; 
       \node (init) at (-1.7,0) {};  
       \node (init2) at (-0.1,-0.7) {};
  \node [myestate, draw] (a1) at (-1,0) {$\set{a_1}$}; \node[] at ($(a1.east)+(0,0.8*\vsep)$) {$A$};
  \node [myestate, draw] (a2) at (0.7,-0.7) {$\set{a_2}$}; \node[] at ($(a2)+(1.2*\hsep,+\vsep)$) {$A$};
  \node [myestate, draw] (b1) at (0.7,0) {$\set{b_n}_{n\in\N}$}; \node[] at ($(b1)+(0,\vsep)$) {$B$};

\SFSAutomatEdge{init}{}{a1}{}{}
\SFSAutomatEdge{init2}{}{a2}{}{}
\SFSAutomatEdge{a2}{}{a2}{loop right}{}
\SFSAutomatEdge{a1}{}{b1}{}{}
\SFSAutomatEdge{b1}{}{a2}{}{}
\SFSAutomatEdge{b1}{}{b1}{loop right}{}
             \end{footnotesize}
         \end{tikzpicture} 
%
\hspace{-0.1cm}
         \begin{tikzpicture}[auto,scale=1]
   \def\vsep{0.35}
   \def\hsep{0.15}
     \begin{footnotesize}
      \node at (-1,0.4) {$\ON{KA}(\Sa^{\textsf{bi}}):$}; 
       \node (init) at (-1,0) {};  
  \node [myestate, draw] (a1) at (0,0) {$\set{a_1,a_2}$}; \node[] at ($(a1.east)+(\hsep,0)$) {$A$};
  \node [myestate, draw] (a2) at (-0.7,-0.7) {$\set{a_2}$}; \node[] at ($(a2)+(0,-\vsep)$) {$A$};
  \node [myestate, draw] (b1) at (0.7,-0.7) {$\set{b_n}_{n\in\N}$}; \node[] at ($(b1)+(0,-\vsep)$) {$B$};

\SFSAutomatEdge{init}{}{a1}{}{}
\SFSAutomatEdge{a1}{}{a2}{}{}
\SFSAutomatEdge{a2}{}{a2}{loop left}{}
\SFSAutomatEdge{a1}{}{b1}{}{}
\SFSAutomatEdge{b1}{}{a2}{}{}
\SFSAutomatEdge{b1}{}{b1}{loop right}{}
             \end{footnotesize}
         \end{tikzpicture} 
\end{center}
\vspace{-0.3cm}
\caption{The system $S$ (top left) has an infinite-state knowledge abstraction $\Sa^{\textsf{K}}$ (top right) while an exact finite-state representation of $\ON{Ext}(S)$ exists, which is correctly computed by first computing the bisimilarity abstraction $\Sa^{\textsf{bi}}$ (bottom left, see \REFsec{sec:algo_bisim}) and then applying \REFalg{alg:SSA} (bottom right).}\label{fig:ex_inf_GK}
\vspace{-0.5cm}
\end{figure}
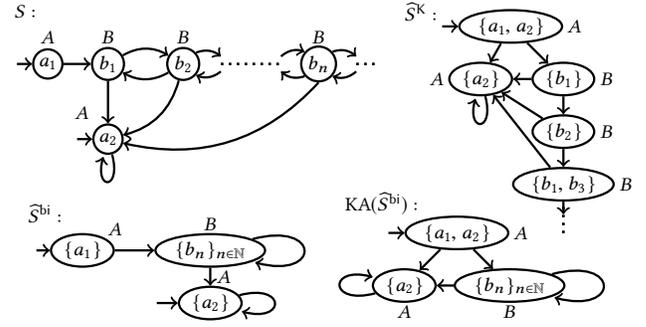

\subsection{Bisimulation Minimization}\label{sec:algo_bisim}


The knowledge-based abstraction algorithm KA computes reachable subsets going forward,
but it may fail to terminate by trying to distinguish states that are language equivalent to already computed ones, that is, 
states that generate the same future sequence of outputs under the same input sequence.
Thus, one could first compute a 
\emph{bisimulation quotient} \cite{Milner89,BouajjaniFH,HenzingerMR05} 
of the system $S$ and only then compute the knowledge-based abstraction.
It is possible that an infinite-state system has a finite bisimulation quotient;
in that case, constructing the quotient first will allow the knowledge-based abstraction to terminate (see \REFfig{fig:ex_inf_GK} (bottom) for an example).

For a system $S = (X, X_0, U, F, Y, H)$, a partition of the set $X$ is a set of non-empty sets of $X$, called
blocks, that are pairwise disjoint and whose union is $X$.
A partition is \emph{stable} if the following properties hold.
First, for each block $\xa$ of the partition, every state in the block
has the same output: for all $x,x'\in\xa$, we have $H(x) = H(x')$.
Second, for each pair of blocks $\xa, \xa'$ with $y'=H(x)$ for all $x\in\xa'$ and for each input $u\in U$ 
we have either $F(\xa, u)\cap H^{-1}(y') \subseteq \xa'$ or $F(\xa, u) \cap \xa' = \emptyset$.
Using the notion of a stable partition of $X$ we can define the \emph{bisimulation abstraction} $\Sa^{\mathsf{bi}} = (\Xa, \Xao, U, \Fa, Y, \Ha)$ of $S$ as follows. 
The set of abstract states $\Xa$ is the minimal stable partition of $X$.
The initial abstract states $\Xao$ are those blocks that contain some initial states from $\Xo$.
The abstract transition function is defined as $\Fa(\xa, u) = \set{\xa'\in \Xa \mid \exists x\in \xa. F(x, u) \subseteq \xa'}$.
Moreover, since every state in each block of the partition has the same output, we can uniquely
define $\Ha(\xa)$ to be the output of some state in $\xa$.


A partition refinement algorithm \cite{PaigeTarjan,HenzingerMR05} can be used to compute $\Sa^{\mathsf{bi}}$ from $S$.
Unlike \REFalg{alg:SSA}, this algorithm proceeds \emph{backwards} by splitting blocks based on
their predecessors, starting with the partition defined by the outputs, i.e., $\SetComp{q\in\twoup{X}\setminus\set{\emptyset}}{\ExQ{y\in Y}{q=H^{-1}(y)}}$. 
This algorithm may terminate if $X$ is infinite and 
the necessary operations are implementable over infinite state subsets. Going back to the system described in \REFexp{ex:inf_GK} we see that the bisimulation quotient $\Sa^{\mathsf{bi}}$ (depicted in \REFfig{fig:ex_inf_GK} (bottom left)) is finite, while the original system $S$ (depicted in \REFfig{fig:ex_inf_GK} (top left)) and its knowledge-based abstraction $\Sa^{\mathsf{K}}$ (depicted in \REFfig{fig:ex_inf_GK} (top right)), are infinite. Applying the KA algorithm on $\Sa^{\mathsf{bi}}$ returns the desired finite state abstraction  (depicted in \REFfig{fig:ex_inf_GK} (bottom right)) which allows for output feedback control.
However, 
if $S$ is infinite-state, the partition refinement algorithm is not guaranteed to terminate even if the knowledge-based abstraction of the original system is finite.
This is further illustrated by the example discussed in the next section, which shows that knowledge-based abstraction and bisimulation minimization are incomparable and the suggested procedure to compute $\Sa^{\mathsf{bi}}$ first, before utilizing $KA$, may not terminate.

\subsection{Illustrative Example}\label{sec:KAMexp}

\begin{figure*}[t]
 \input{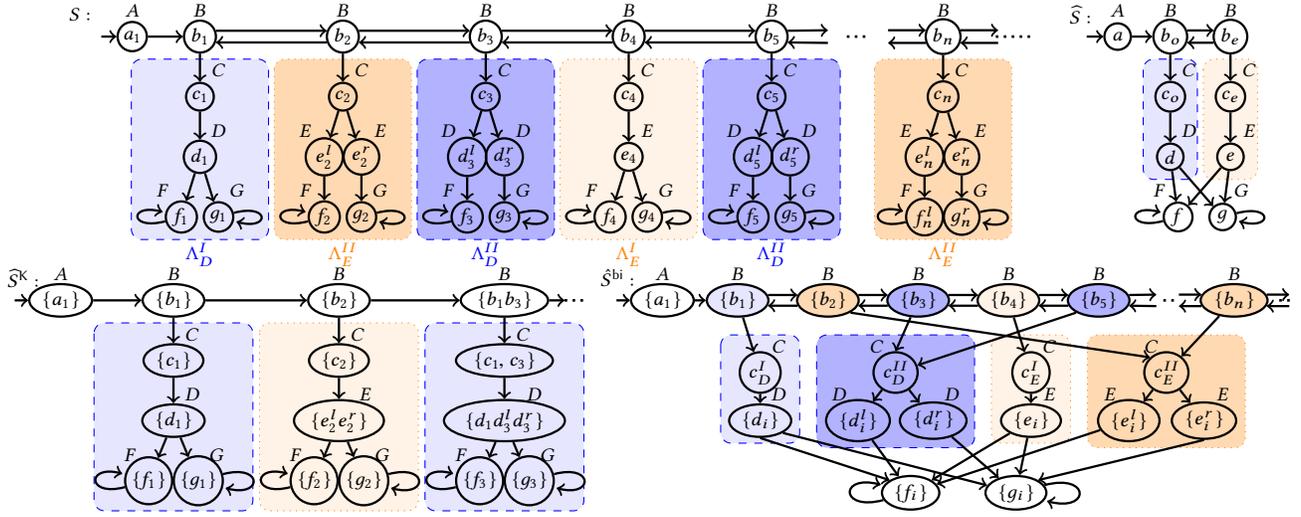}
 \vspace{-0.2cm}
 \caption{Infinite-state system $S$ (top left) discussed in \REFsec{sec:KAMexp}, its sound finite-state abstraction $\Sa$ (top right), part of its infinite-state knowledge abstraction $\SaK$ (bottom left) and its infinite bisimulation quotient $\Sa^{\mathsf{bi}}$ (bottom right). The single input $U=\{u\}$ is omitted and outputs $Y=\{A,\hdots, F\}$ are indicated next to the respective state. A state subset $\set{\alpha_i}$ denotes the set $\set{\alpha_i}_{i\in\N}$.}\label{fig:KAMexp}
  \vspace{-0.5cm}
\end{figure*}

Before explaining KAM, we introduce an illustrative example.
Consider the infinite state system $S$ depicted in \REFfig{fig:KAMexp} (top left) with $U=\{u\}$ and $Y=\{A, B, C, D, E, F\}$. It consists of one initial state $a_1$ which outputs $A$, an infinite chain of states $b_i$, $i\in\N$, all of which output $B$, and four different modules $\Lambda^I_D$ (light blue, dashed), $\Lambda^{II}_D$  (dark blue, dashed), $\Lambda^I_E$  (light orange, dotted) and $\Lambda^{II}_E$  (dark orange, dotted), attached to one $b$-state each.
System $S$ is constructed s.t.\ modules of type $D$ (resp. of type $E$) are reachable after output $B$ has occurred an \emph{odd} (resp. \emph{even}) number of times, i.e., from all states $X_B^{odd}:=\set{b_{2i+1}}_{i\in\N}$ (resp. from all states $X_B^{even}:=\set{b_{2i}}_{i\in\N}$).
However, the sequence of class $I$ and $II$ modules of the same type $i\in\set{E,D}$ is irregular, i.e., there is no $\omega$-regular expression to describe how $\Lambda^I_i$ and $\Lambda^{II}_i$ modules repeat. 

By closely investigating the modules of the same $i$-type it can be observed that modules $\Lambda^I_i$ and $\Lambda^{II}_i$ for the same $i\in\set{D,E}$ are external language equivalent. Therefore, the regularity of alternating between type $D$ and type $E$ modules is enough to obtain a sound finite-state realization $\Sa$ of $S$ depicted in \REFfig{fig:KAMexp} (top right).

%

\smallskip 
\noindent 
\textbf{KA-algorithm (\REFsec{sec:algo_subset}).}
The KA algorithm computes the abstract state space by combining all states with the same observable \emph{past} while going \emph{forward}. For the system $S$ in \REFfig{fig:KAMexp} (top left) it constructs state subsets as depicted in \REFfig{fig:KAMexp} (bottom left). We see that the KA algorithm 
discovers that class I modules are a sound realization of class II modules, i.e., $\SaK$ only consists of class $I$ modules s.t.\ type $D$ and type $E$ modules are reachable from states in $X_B^{odd}$ and $X_B^{even}$ respectively.
However, the KA algorithm still does not terminate on this example as it explores language equivalent states unnecessarily. I.e., by computing state subsets only going forward, it computes a new, not yet explored subset of $b$-states in every iteration. The KA-algorithm is not able to \emph{generalize} and thereby merge all states corresponding to $X_B^{odd}$ or $X_B^{even}$ due to their unique future.

\smallskip 
\noindent 
\textbf{Bisimulation-Quotient (\REFsec{sec:algo_bisim}).}
A partition refinement algorithm computing the bisimulation quotient of $S$ merges states with the same observable \emph{future} going \emph{backward}. For the system $S$ in \REFfig{fig:KAMexp} (top left) it immediately discovers that all states in $X_F:=\set{f_i}_{i\in\N}$ as well as $X_G:=\set{g_i}_{i\in\N}$ have the same observable future (namely $F^\omega$ and $G^\omega$, respectively). It further merges all states contained in the same $\Lambda^j_j$ module into one equivalence class (see \REFfig{fig:KAMexp} (bottom right) indicated by the four color/line patterns). However, as it proceeds backwards, it does not
take into account the reachable portion of all state subsets and thereby considers states within class I and II modules of the same type as different. This differentiates $b$ states depending on the class of modules they are connected to (indicated by the coloring of the $b$-states in \REFfig{fig:KAMexp} (bottom right)). As the 
partition refinement algorithm constructs equivalence classes going backward, it generates a distinct equivalence class for the left and right \enquote{color pattern} a $b$ state \enquote{sees}. 
As we assume that class I and II modules are irregularly sequenced, there exist infinitely many such equivalence classes and the algorithm therefore never terminates.

\smallskip 
\noindent 
\textbf{Combining both algorithms.}
For this example, running the KA algorithm first and the partition refinement algorithm second, results in the finite state abstraction $\Sa$ depicted in \REFfig{fig:KAMexp} (top right). This is, however, not practically implementable, as the KA algorithm never terminates. Further, we have shown that for \REFexp{ex:inf_GK} one needs to execute the partition refinement algorithm first, followed by the KA algorithm. One can therefore construct an example where one reachable part of the state space requires executing the KA algorithm first, while the other part requires the partition refinement algorithm to be executed first. In this case, no order would lead to the desired result. 





\subsection{Knowledge Abstraction with Minimization}\label{sec:algo_new}


We now present the \emph{Knowledge-based Abstraction algorithm with Minimization} (KAM), 
given in \REFalg{alg:GSSA}, which interlaces the forward Knowledge-based
Abstraction (KA) with backward refinement-based Minimization (M). 
We also illustrate the algorithm using the example from \REFsec{sec:KAMexp}.
%

\begin{algorithm}[t]
\caption{KAM: Knowledge Abstraction and Minimization} \label{alg:GSSA}
\begin{algorithmic}[1]
 \Require $S = (X, \Xo, U, F, Y, H)$
 \State $\Cover \gets \SetComp{q\in\twoup{X}\setminus\set{\emptyset}}{\ExQ{y\in Y}{q=H^{-1}(y)}}$; \label{alg:GSSA:CoverInit}
 \State $\ExG\gets\emptyset$;
 \State $\ExX \gets \SetComp{\tuplel{H(c),q,c}}{\propConj{\propConj{q\in \Cover}{c=q}}{c\cap\Xo\neq\emptyset}}$;\label{alg:GSSA:ExXInit} 
 \State $\ExF \gets \emptyset$; 
\While{$\ExG\neq\SetComp{\tuplel{q,c}}{\ExQ{\nu}{\tuplel{\nu,q,c}\in\ExX}}$}\label{alg:GSSA:startwhile}
\State $\ExG\gets 
	\SetComp{\tuplel{q,c}}{\ExQ{\nu}{\tuplel{\nu,q,c}\in\ExX}}$;
\For{$\tuplel{\nu,q,c}\in\ExX$ s.t.\ $|\nu|$ is maximal}\label{alg:GSSA:forward-for}
\For{$u\in U$,$y\in Y$}
\State $\nu'=\nu u y$;
\State $c'=F(c,u)\cap H^{-1}(y)\neq\emptyset$;\label{line:GSSA:cp}
\State $Q'=\SetComp{q'\in \Cover}{c'\subseteq q' \mbox{ and }q' \mbox{ is minimal}}$;\label{line:GSSA:Qp}
\State $\ExX\gets\ExX\cup\set{\tuplel{\nu',q',c'}\mid q'\in Q'}$;\label{line:GSSA:ExX}
\State $\ExF\gets \ExF\cup\set{(\tuplel{\nu,q,c},u,\tuplel{\nu',q',c'})\mid q'\in Q'}$;\label{line:GSSA:ExF}
\EndFor
\If{$c\subset q $}  $\textcolor{blue}{\textsc{Refine}}(\tuplel{\nu,q,c})$;\label{line:GSSA:CallRefine}
\EndIf
\EndFor
\State $\Sa\gets\textcolor{violet}{\textsc{Extract}}(\ExX,\ExF)$;\label{line:ComputeAbs:Se}
\If{
\textcolor{purple}{$\ON{TermCond}()==\true$}}\label{alg:GSSA:terminate}
\textbf{return} $\Sa$;
\EndIf
\EndWhile\label{alg:GSSA:endwhile}
\State\textbf{return} $\Sa$;
\Function{\textcolor{blue}{Refine}}{$\tuplel{\nu,q,c}$}
\For{$u\in U$}
\State $\mathtt{PostQ}_u\hspace{-0.1cm}\gets\hspace{-0.1cm}\bigcup\SetComp{q'\hspace{-0.05cm}\in\hspace{-0.05cm}\Cover}{(\tuplel{\nu,q,c},u,\tuplel{\cdot,q',\cdot})\hspace{-0.05cm}\in\hspace{-0.05cm}\ExF}$;\label{line:GSSA:PostQ}
\EndFor
\State $s\gets\SetComp{x\in q}{\AllQ{u\in U}{F(x,u)\subseteq \mathtt{PostQ}_u}}$;\label{line:s}
\If{$s\subset q$} 
\State $\Cover\gets \Cover\cup \set{s}$;\label{line:s_add}
\For{all $\tuplel{\tilde{\nu},\tilde{q},\tilde{c}}\in \ExX $ s.t.\ $\tilde{q}=q$}
\If{$\tilde{c}\hspace{-0.05cm}\subset\hspace{-0.05cm} s $} change $\tuplel{\tilde{\nu},q,\tilde{c}}$ to $\tuplel{\tilde{\nu},s,\tilde{c}}$ in $\ON{EXP}_{\Gamma,X,F}$;\label{alg:GSSA:newmin}
\EndIf
\EndFor
\For{all $(\tuplel{\tilde{\nu}',\tilde{q}',\tilde{c}'},\cdot,\tuplel{\tilde{\nu},\tilde{q},\tilde{c}})\hspace{-0.1cm}\in \hspace{-0.1cm}\ExF $ s.t.\ $\tilde{q}\hspace{-0.05cm}=\hspace{-0.05cm}s\wedge\tilde{c}'\hspace{-0.1cm}\subset \hspace{-0.05cm}\tilde{q}'$}
\State $\textcolor{blue}{\textsc{Refine}}(\tuplel{\tilde{\nu}',\tilde{q}',\tilde{c}'})$;\label{alg:GSSA:recurseRefine}
\EndFor
\EndIf
\EndFunction
\Function{\textcolor{violet}{Extract}}{\ExX,\ExF}
 \State $\Xa\gets\SetComp{q\in \twoup{X}}{ \tuplel{\cdot,q,\cdot}\in\ExX}$;
\State $\Xao \gets \SetComp{\Xo\cap H^{-1}(y) \in\twoup{X}\setminus\set{\emptyset}}{y\in Y}$;\ \label{alg:GSSA:Xao}
 \State $\Fa\gets\SetComp{(q,u,q')}{\tuplel{\cdot,q,\cdot},u,\tuplel{\cdot,q',\cdot})\in\ExF}$; \label{alg:GSSA:Fa}
\State $\Ha(\xa)=y$ if $y\in H(\xa)$;
 \State \textbf{return} $\Sa=(\Xa,\Xao,U,\Fa,Y,\Ha)$;
\EndFunction
\end{algorithmic}
\end{algorithm}

\begin{figure*}[t]
 \begin{tikzpicture}
 \def\vsep{0.25}
   \def\hsep{.2}
   \def\hs{2.5}
\def\vs{0.8}
   \def\ho{0.4}
 \begin{footnotesize}

          \node (a1) at (0,0*\vs) {$\tuplel{A,X_A,\set{a_1}}$}; \node[blue] at ($(a1.west)+(\hsep,\vsep)$) {$t_0$};

        \node  (b1) at (0,-1*\vs) {$\tuplel{AB,X_B,\set{b_1}}$};\node[blue] at ($(b1.west)+(\hsep,\vsep)$) {$t_1$};
          \node  (b2) at (1*\hs,-2*\vs) {$\tuplel{ABB,X_B,\set{b_2}}$};\node[blue] at ($(b2.west)+(\hsep,\vsep)$) {$t_{22}$};
          \node  (b3) at (1.7*\hs,-3*\vs) {$\tuplel{ABBB,X_B,\set{b_1 b_3}}$};\node[blue] at ($(b3.west)+(\hsep,\vsep)$) {$t_{33}$};
          \node  (b4) at (2.4*\hs,-4*\vs) {$\tuplel{ABBBB,X_B,\set{b_2 b_4}}$};\node[blue] at ($(b4.west)+(\hsep,\vsep)$) {$t_{45}$};
          
          \node  (c1) at (-1*\hs,-2*\vs) {$\tuplel{ABC,X_C,\set{c_1}}$};\node[blue] at ($(c1.west)+(\hsep,\vsep)$) {$t_{21}$};
          
        \node  (c2) at (0.1*\hs,-3*\vs) {$\tuplel{ABBC,X_C,\set{c_2}}$};\node[blue] at ($(c2.west)+(\hsep,\vsep)$) {$t_{32}$};
     \node  (c3) at (1.25*\hs,-4*\vs) {$\tuplel{ABBBC,\textcolor{green!50!black}{X_C^{odd}},\set{c_1 c_3}}$};\node[blue] at ($(c3.west)+(\hsep,\vsep)$) {$t_{44}$};
          \node (d1) at (-1*\hs,-3*\vs) {$\tuplel{ABCD,X_D,\set{d_1}}$};\node[blue] at ($(d1.west)+(\hsep,\vsep)$) {$t_{31}$};
          \node (e2) at (0.1*\hs,-4*\vs) {$\tuplel{ABBCE,X_E,\set{e_2^l e_2^r}}$};\node[blue] at ($(e2.west)+(\hsep,\vsep)$) {$t_{43}$};

\node (f) at (-2*\hs,-4*\vs) {$\tuplel{ABCDF,X_F,\set{f_1}}$};\node[blue] at ($(f.west)+(\hsep,\vsep)$) {$t_{41}$};
          \node (g) at (-1*\hs,-4*\vs) {$\tuplel{ABCDG,X_G,\set{g_1}}$};\node[blue] at ($(g.west)+(\hsep,\vsep)$) {$t_{42}$};
\newcommand{\SFSAutomatEdgeL}[5]{\path[->](#1) edge[#4,line width=0.3mm] node[diamond,inner sep=0.3pt,minimum size=1pt,draw, line width=0.1pt, #5] {\ensuremath{#2}} (#3);}

\newcommand{\SFSAutomatEdgeLD}[6]{\path[->](#1) edge[#5,line width=0.3mm,red,dotted] node[state,pos=0.55,inner sep=0.3pt,minimum size=1pt,draw, line width=0.1pt, #6,red,solid] {\ensuremath{#2}}  node[pos=0.3,#6,red] {\tiny{\ensuremath{#3}}} (#4);}

\SFSAutomatEdgeL{a1.200}{1}{b1.160}{}{right}
\SFSAutomatEdgeL{b1.east}{2}{b2.160}{bend left}{above}
\SFSAutomatEdgeL{b1.west}{2}{c1.160}{bend right}{above}

\SFSAutomatEdgeL{c1}{3}{d1}{}{right}
\SFSAutomatEdgeL{b2.east}{3}{b3.160}{bend left}{right}
\SFSAutomatEdgeL{b2.west}{3}{c2.160}{bend right}{above}

\SFSAutomatEdgeL{d1.west}{4}{f}{bend right}{above}
\SFSAutomatEdgeL{d1}{4}{g}{}{right}
\SFSAutomatEdgeL{c2}{4}{e2}{}{right}
\SFSAutomatEdgeL{b3.west}{4}{c3.160}{bend right}{above}
\SFSAutomatEdgeL{b3.340}{4}{b4.160}{}{right}

\node[red] (Xcodd1) at ($(c1)+(1,0.5)$) {$X_C^{odd}$};
\SFSAutomatEdgeLD{Xcodd1.west}{3}{23}{c1.90}{bend right}{above}

\node[red] (Xbodd1) at ($(b1)+(1.2,0.5)$) {$X_B^{odd}$};
\SFSAutomatEdgeLD{Xbodd1.west}{3}{35}{b1.90}{bend right}{above}
\node[red] (Xbodd3) at ($(b3)+(1,0.5)$) {$X_B^{odd}$};
\SFSAutomatEdgeLD{Xbodd3.west}{3}{~~~~~~35/31}{b3.90}{bend right}{above}

\node[red] (Xceven2) at ($(c2)+(1,0.5)$) {$X_C^{even}$};
\SFSAutomatEdgeLD{Xceven2.west}{4}{~~23}{c2.90}{bend right}{above}

\node[red] (Xbeven2) at ($(b2)+(1,0.5)$) {$X_B^{even}$};
\SFSAutomatEdgeLD{Xbeven2.west}{4}{35}{b2.90}{bend right}{above}
\node[red] (Xbeven4) at ($(b4)+(1,0.5)$) {$X_B^{even}$};
\SFSAutomatEdgeLD{Xbeven4.west}{4}{~~~~~~35/31}{b4.90}{bend right}{above}

\draw[dotted,line width=1pt] ($(f.south)+(0,-0)$) -- ($(f.south)+(0,-0.2)$);
\draw[dotted,line width=1pt] ($(g.south)+(0,-0)$) -- ($(g.south)+(0,-0.2)$);
\draw[dotted,line width=1pt] ($(e2.south)+(0,-0)$) -- ($(e2.south)+(0,-0.2)$);
\draw[dotted,line width=1pt] ($(c3.south)+(0,-0)$) -- ($(c3.south)+(0,-0.2)$);
\draw[dotted,line width=1pt] ($(b4.south)+(0,-0)$) -- ($(b4.south)+(0,-0.2)$);

\end{footnotesize}

 \begin{footnotesize}
      \def\vsep{0.35}
   \def\hsep{0.25}
   \def\h{10.2}
   \def\vs{0.75}
    \def\hs{0.7}
   \node at (\h-1.2,-0.3*\vs) {$\hat{S}^{\sharp}_5:$};     
         \node (init) at (\h-0.8,-0.3*\vs) {};        

          \node [myestate, draw] (a1) at (\h,-0.3*\vs) {$X_A$}; \node[] at ($(a1)+(\hsep,\vsep)$) {$A$};

          \node [myestate, draw] (b1) at (\h,-1.1*\vs) {$X_B^{odd}$};\node[] at ($(b1)+(\hsep,\vsep)$) {$B$};
          \node [myestate, draw] (b2) at (\h+1*\hs,-2*\vs) {$X_B^{odd}$};\node[] at ($(b2)+(\hsep,\vsep)$) {$B$};
          
          \node [myestate, draw] (c1) at (\h-\hs,-2*\vs) {$X_C^{odd}$};\node[] at ($(c1)+(-\hsep,\vsep)$) {$C$};
          \node [myestate, draw] (c2) at (\h+\hs,-3*\vs) {$X_C^{even}$};\node[] at ($(c2)+(\hsep,\vsep)$) {$C$};
        \node [myestate, draw] (d) at (\h-\hs,-3*\vs) {$X_D$};\node[] at ($(d)+(\hsep,\vsep)$) {$D$};
          \node [myestate, draw] (e) at (\h+\hs,-4*\vs) {$X_E$};\node[] at ($(e)+(\hsep,\vsep)$) {$E$};
          
          \node [myestate, draw] (f) at (\h-1.5*\hs,-4*\vs) {$X_F$};\node[] at ($(f)+(-\hsep,\vsep)$) {$F$};
          \node [myestate, draw] (g) at (\h-0.5*\hs,-4*\vs) {$X_G$};\node[] at ($(g)+(\hsep,0.8*\vsep)$) {$G$};

\SFSAutomatEdge{init}{}{a1}{}{}  
\SFSAutomatEdge{a1}{}{b1}{}{}
\SFSAutomatEdge{b1}{}{c1}{}{}
\SFSAutomatEdge{b1}{}{b2}{}{}
\SFSAutomatEdge{b2.east}{}{b1.east}{bend right}{}
\SFSAutomatEdge{c1}{}{d}{}{}
\SFSAutomatEdge{b2}{}{c2}{}{}
\SFSAutomatEdge{c2}{}{e}{}{}
\SFSAutomatEdge{d}{}{f}{}{}
\SFSAutomatEdge{d}{}{g}{}{}
\SFSAutomatEdge{e.south west}{}{f.south east}{bend left}{}
\SFSAutomatEdge{e}{}{g}{bend right}{}
\SFSAutomatEdge{f}{}{f}{loop left}{}
\SFSAutomatEdge{g}{}{g}{loop right}{}
   
 \end{footnotesize}

 \end{tikzpicture}
\vspace{-0.5cm}
\caption{Exploration tree $\mathtt{EXP_F}$ of $S$ in \REFfig{fig:KAMexp} computed by \REFalg{alg:GSSA} (left) and the abstract system $\Sa^{\sharp}$ extracted after its 5th iteration (right). Nodes are labeled by $t_k$ (blue) for easier reference and the single input $u$ is omitted to avoid clutter. Diamond-enclosed numbers indicate the iteration in which this transition is explored. Dotted red arcs indicate cover block refinements in the iteration of the main while loop depicted by the red circled number and caused by the line of \textsc{Refine} indicated on its top right. E.g., $X_B$ of $t_1$ is refined by re-calling \textsc{Refine} in line $35$ after $X_C$ of $t_{21}$ was refined in line 23 (as $t_1$ is a predecessor of $t_{21}$). The notation ${35/31}$ in $t_{45}$ indicates that its cover block $X_B$ is refined by line $31$ after re-calling \textsc{Refine} via line $35$ on node $t_{22}$. }\label{fig:KAMwf}
\vspace{-0.5cm}
\end{figure*}

\smallskip 
\noindent 
\textbf{Algorithm Description.}
KAM generates a rooted, labeled tree
and a \emph{cover set} $\Cover \subseteq 2^X$.
The nodes of the tree are kept in $\ExX$ and the edges in $\ExF$.
The edges are labeled with inputs from $U$.
The nodes are labeled with a three-tuple $\tuplel{\nu,q,c}\in\ExX$, consisting of a
sequence $\nu$ of external events seen when reaching the current node from the root of the tree,
a \emph{block} $q\subseteq X$ in the current $\Cover$, and a subset of states $c\subseteq X$ (called a \emph{cell}).
Intuitively, a tuple $\tuplel{\nu,q,c}\in\ExX$ remembers the observed input/output sequence from the initial states (in $\nu$), 
the available knowledge about the current state (in $c$), 
and 
the current \enquote{guesses} on states which are future observation-equivalent to $c$ (in $q$).
The \emph{cells} $c$ and \emph{blocks} $q$ correspond to the data structures manipulated by the \emph{KA} and the Minimization algorithm, respectively,
and are initialized similarly:
$\Cover$ is initialized with the partition induced by $H$ on $X$ (line~\ref{alg:GSSA:CoverInit}, see \REFsec{sec:algo_bisim}), 
cells are initialized with all initial cover blocks containing an initial state (line~\ref{alg:GSSA:ExXInit}). 
Note that the initialization of cells simplifies as we have assumed that $H$ respects the initial state set $\Xo$. 

\begin{example}
For the example in \REFsec{sec:KAMexp}, we see that the partition induced by $H$ on $X$ results in the initial cover set 
$\Cover=\set{X_y\mid y\in Y}$ s.t.\ $X_y$ collects all states of $S$ that generate the output $y$, 
e.g., $X_A:=\set{a_1}$ and $X_C:=\set{c_i}_{i\in\N}$. 
On the other hand, there is only one initial cell, namely $\set{a_1}$ with $H(\set{a_1})=A$. 
This results in the initialization of $\ExX$ with the tuple $\tuplel{A,X_A,\set{a_1}}$ as depicted in \REFfig{fig:KAMwf} (left).
\end{example}

The main loop of KAM (lines~\ref{alg:GSSA:startwhile}--\ref{alg:GSSA:endwhile}) 
grows the tree by iterating between a forward exploration (as in KA) and  backward refinement (as in bisimulation).
The forward exploration picks the current leaves $(\nu,q,c)$ of the tree (line~\ref{alg:GSSA:forward-for}) and
executes one step of KA to generate new \emph{cells} $c'$ for every $u\in U$ and $y\in Y$ 
(compare \REFalg{alg:SSA}, line~\ref{alg:SSA:xap} and \REFalg{alg:GSSA}, line~\ref{line:GSSA:cp}).

For each minimal block $q'$ in the current $\Cover$ set that contains $c'$, KAM adds a new node
$\tuplel{\nu',q',c'}$ to the tree (line~\ref{line:GSSA:ExX}), where $\nu'$ extends the parents event sequence with the latest
input and the last output.
The edge from the parent to the new node is labeled with the input and stored in $\ExF$ (line~\ref{line:GSSA:ExF}).

\begin{example}
The resulting exploration tree for the example in \REFsec{sec:KAMexp} is depicted in \REFfig{fig:KAMwf} (left). 
Here, the diamond-enclosed number on the edges indicates the iteration of the while loop (in line~\ref{alg:GSSA:startwhile}-\ref{alg:GSSA:endwhile} 
of \REFalg{alg:GSSA}) in which this transition and its child are added to the tree. 
When comparing \REFfig{fig:KAMwf} (left)  and the KA-abstraction $\SaK$ of this example (\REFfig{fig:KAMexp} (bottom left)), 
we see that the third component of all tuples generated by KAM coincides with the abstract states generated by KA in the same iteration (i.e., in a state with the same distance from the initial state).
\end{example}

Having thus created all the children for a node $\tuplel{\nu, q, c}$, if $c$ is a proper subset of $q$,
the next step in KAM is to check if $q$, the current guess for the observation equivalence class for $c$,
needs to be refined.
Refinement is performed by the function \textsc{Refine} (\REFalg{alg:GSSA}, line~\ref{line:GSSA:CallRefine})
and works similarly to the bisimulation algorithm.

In contrast to the usual bisimulation algorithm, $\textsc{Refine}(\tuplel{\cdot,q,c})$ only 
splits a block $q$ based on its possible successors in the tree if this split respects $c$, 
thereby avoiding the splitting of indistinguishable states, which caused the non-termination issue discussed in \REFsec{sec:algo_bisim}. 
One can intuitively think of $s\subseteq X$ computed in line~\ref{line:s} of \REFalg{alg:GSSA} as the set of all 
states which 
are equivalent to $c$ in terms of their one-step observable future. 
However, in contrast to the bisimulation algorithm, KAM only adds $s$ to $\Cover$ but does not add its complement $q\setminus s$ (see line~\ref{line:s_add}). 
This is due to the fact that this 
operation might not respect the currently 
available cells and again split indistinguishable states. 
If $q\setminus s$ is indeed needed, it will be discovered by another call to \textsc{Refine}. 

Summarizing the above description, we see that \textsc{Refine} refines the $\Cover$ set based on the one-step future of the computed cell. 
Given this refinement, all previously obtained relations between cells and blocks need to be re-evaluated 
as $s\subset q$ implies that $s$ is now the minimal cover of $c$, if $c$ was previously related to $q$ in $\ExX$ (see line~\ref{alg:GSSA:newmin}). 
Thus, KAM updates its guess on the set of states possibly external language equivalent to a state in $c$. 
This, however, might imply new block splits in cell/block pairs reaching $c$, which have been checked for refinement in previous iterations of the algorithm. 
This is taken care of by the recursive call to \textsc{Refine} in line~\ref{alg:GSSA:recurseRefine}. 
Note that the recursion always moves up to the parent in the tree, and thus it eventually terminates.
One can show that after the recursive call to \textsc{Refine} terminates, we always have a single minimal 
cover box $q$ for every cell $c$ computed so far. 
That is, given the relation $\alphat(c)=\SetComp{q\in\Cover}{\tuplel{c,q}\in\ExX^\downarrow}$ for 
$\ExX^\downarrow := \SetComp{\tuplel{q,c}}{\ExQ{\nu}{\tuplel{\nu,q,c}\in\ExX}}$, we 
have $|\alphat(c)|=1$ (see \REFlem{lem:singlecover} in the appendix for a formal proof).

\begin{example}
For the example in \REFsec{sec:KAMexp}, we see that for the tuple $t_0$ we have $c=q$ as $X_A=\set{a_1}$, hence, \textsc{Refine} is 
not called in the first iteration of KAM. 
In its second iteration, it computes the leaves $t_{21}$ and $t_{22}$ in the main while loop and then checks the \emph{parent node} $t_1$ for refinement. 
For this, it computes all cover cells reachable by $b_1$ (which is $\ON{PostQ}=\bigcup\set{X_B,X_C}$ and then computes all states in $q=X_B$ with the same 
reachable cover blocks, which is $s=X_B$. 
As $q=s$, no split occurs and a new iteration of the main while loop starts. 
After the computation of the leaves $t_{31}-t_{33}$ KAM checks the parent node $t_{21}$ for refinement. 
Here we obtain $\ON{PostQ}=X_D$ and $s=X_C^{odd}=\set{c_{2i+1}}_{i\in\N}$. 
As $s\subset q=X_C$ the cell $X_C^{odd}$ is added to $\Cover$.
As there is no other node in the tree with a cell component contained in $X_C^{odd}$, we only update 
the block component of $t_{21}$ (indicated by the red dotted arrow pointing to it in \REFfig{fig:KAMwf}) and schedule all its predecessors for refinement. 
Therefore, node $t_1$ is checked for refinement again. 
Given the new cover cell $X_C^{odd}$ we now obtain $\ON{PostQ}=\bigcup\set{X_B,X_C^{odd}}$ and $s=X_B^{odd}$. 
This updates the cover element of $t_1$ and $t_{33}$. 
This schedules only $t_{22}$ for refinement, as $t_0$ does not fulfill the condition that $c\subset q$. 
Now it can be observed that checking $t_{22}$ for refinement still gives $\ON{PostQ}=\bigcup\set{X_B,X_C}$ as we have not yet added the cover 
element $X_B^{even}=X_B\setminus X_B^{odd}$. 
This is due to the fact that we do not know whether this element is indeed needed and respects the constructed state subsets. 
We therefore leave node $t_{22}$ unchanged and proceed to the forth iteration of the main while loop. 
This computes the leaves $t_{41}-t_{45}$. 
It should be noted that during this computation we now have the new cover cell $X_C^{odd}$ available and KAM uses 
this smaller cover cell to correctly tack the equivalence class for $t_{44}$ (indicated in green in \REFfig{fig:KAMwf}).
Now the only interesting refinement check is on $t_{32}$ which discovers the new cover element $X_C^{even}$ and induces the further refinement 
of node $t_{22}$ introducing the cover cell $X_B^{even}$. This updates $t_{22}$ and $t_{45}$. 
Due to space constraints, we do not depict the constructed tree further. 
It should however be noted that $t_{43}$ clusters $e_2^l$ and $e_2^r$ into a single cell, as these states are not distinguishable based on 
the past observations. 
Therefore, calling \textsc{Refine} on $t_{43}$ in the next iteration of KAM will not refine the equivalence 
class $X_E$ as $\ON{PostQ}=\bigcup\set{X_F,X_G}$ and we therefore obtain $s=X_E$. 
The same happens for nodes $d_i^l$ and $d_j^r$. This prevents the non-termination issue of the bisimulation algorithm for this example.
\end{example}


After exploration and refinement, KAM extracts an abstraction $\Sa$ via the function \textsc{Extract} in line~\ref{alg:GSSA:terminate}. 
Intuitively, \textsc{Extract} projects the tree in $\ExF$ to the blocks in the current $\Cover$ set which are reachable. 
It thereby \enquote{forgets} the forward-computed cells and only retains their observation-equivalent generalizations $s$.
For the example in \REFsec{sec:KAMexp} the abstraction extracted after the fifth iteration of KAM is depicted in   
\REFfig{fig:KAMwf} (right). 
It can be observed that \REFfig{fig:KAMwf} (right) coincides with the abstraction 
$\Sa$ in \REFfig{fig:KAMexp} (top right) up to a renaming of states.

\smallskip 
\noindent 
\textbf{Termination.}
Intuitively, KAM should terminate if $\Cover$ stabilizes. 
Then, all distinguishable subsets which are observation-equivalent have been discovered, and hence, imply $\Ext{S}=\Ext{\Sa}$. 
That is, we would ideally like to have $\ON{TermCond}() ==\true$ in line~\ref{alg:GSSA:terminate} iff $\Cover$ has stabilized.
Unfortunately, even if we observe that $\Cover$ has not changed in the current iteration, we do not know if it will never change again. 
This is because KAM bases its search for cover splits on the already constructed state-subsets. 
There might be a very long input/output event sequence which only causes a subset split after a long exploration phase. 
As the state space of $S$ is infinite, we cannot check if this will ever happen. 
Interestingly, this is also true for fully initialized systems (i.e., where $X=\Xo$). 
Thus, this termination check is undecidable. 

One interesting special case where termination is decidable occurs if the KA algorithm (\REFalg{alg:SSA}) terminates (which is for example always the case if $X$ is finite). 
In this case, one can show that $\ExG=\ExX^\downarrow$ holds in the $l$-th iteration of \REFalg{alg:GSSA} iff $\Gamma=\Xa$ holds in the $l$-th iteration of \REFalg{alg:SSA} (see \REFlem{lem:TerminatingGSSAifSSA} in the appendix for a formal proof of this statement). While $\Cover$ might have stabilized earlier, we know it has surely stabilized by then.

\smallskip 
\noindent 
\textbf{Finite-State Abstractions.}
The termination condition discussed above aims on computing a sound \emph{finite-state realization} of the external behavioral closure of $S$ which might not exist. 
Indeed, for arbitrary non-linear dynamical systems there rarely ever exists an exact finite-state realization in this sense, even if their input and output sets are finite. 
Therefore, as the name suggests, abstraction-based controller synthesis is usually only aiming at computing a finite-state \emph{abstraction} 
which is accurate enough to synthesize an abstract controller for the given specification. 

In this context, it is interesting to investigate whether the system $\Sa^{\#}$ computed in line~\ref{line:ComputeAbs:Se} of \REFalg{alg:GSSA} after running the while loop in line~\ref{alg:GSSA:startwhile}-\ref{alg:GSSA:endwhile} finitely often, is indeed a sound abstraction of $S$ in the sense of \REFdef{def:SoundAbs} 
and therefore allows for abstraction based control in the sense of \REFcor{cor:ABCDsound}. 
Interestingly, this is only true if KAM has already explored all possible output events which are reachable in $S$ at least once when terminated. 
This is for example trivially satisfied if $\Xo=X$. 
Additionally, whenever $\Cover$ stabilizes after a finite number of iterations, KAM indeed computes a sound realization of $S$. 
This 
is formalized in the following theorem.

 \begin{theorem}\label{thm:soundrel:Ssharp}
 Let $S$ be a system, $\Se$ its external trace system and $\Sa^{\#}$ an abstract system extracted in line~\ref{line:ComputeAbs:Se} of $\ON{KAM}(S)$ in some iteration.
 Further, let $Y^{\#}=\SetComp{y\in Y}{\ExQ{\tuplel{q,c}\in\ExG}{\Ha(q)=y}}$ and $\Reach{Y}=\SetComp{y\in Y}{\ExQ{\rho\in\EPrefs{S}}{y=\Last{\rho}}}$.
 If $Y^{\#}=\Reach{Y}$ it holds that
 $\Se\frr{\alpha}{}\Sa^{\#}$ with $\alpha=\LastSn{\Sa^\#}{}$.
 Further, if $\Cover$ has stabilized, we additionally have $\Se\frrE{\alpha}{}\Sa^{\#}$.
\end{theorem}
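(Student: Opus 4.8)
The plan is to exhibit the pair $\alpha=\LastSn{\Sa^\#}{}$ and $\gamma=\EHistn{\Sa^\#}{}$ and to verify conditions (A1)--(A3) of \REFdef{def:SoundAbs} for $\Se\frr{\alpha}{}\Sa^\#$, reusing the template of \REFprop{prop:SSA_sound}. First I would record the structural invariants of the exploration tree maintained by \REFalg{alg:GSSA}: by induction on $|\nu|$ along line~\ref{line:GSSA:cp}, every node $\tuplel{\nu,q,c}\in\ExX$ has $c=\LastSn{S}{\nu}$ (the cell is exactly the forward knowledge, as for KA in \REFprop{prop:SSA_sound}); moreover $c\subseteq q$ and both sit inside one output class, so $H(c)=H(q)=\set{\Last{\nu}}$. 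This last invariant survives the block replacement inside \textsc{Refine} because the new block $s$ satisfies $s\subseteq q$, and it is exactly what gives (A3): whenever $q\in\alpha(\rho)$, a witnessing abstract path ends in $q$ with $\Last{\rho}=\Ha(q)$, hence $\He(\gamma(q))\subseteq\set{\Ha(q)}$. Condition (A1) is immediate, since $\Xao$ in \textsc{Extract} is defined to contain every nonempty initial cover block $\Xo\cap H^{-1}(y)$, so each initial output $y\in H(\Xo)=\Xeo$ is realized in $\Sa^\#$ and maps into $\Xao$.

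Condition (A2), $\alpha(\Fe(\rho,u))\subseteq\Fa(\alpha(\rho),u)$, is then formal: if $q''\in\LastSn{\Sa^\#}{\rho u y}$, deleting the last transition of a witnessing path of $\Sa^\#$ yields $q\in\LastSn{\Sa^\#}{\rho}=\alpha(\rho)$ with $q''\in\Fa(q,u)$. The real content of the first claim is therefore that $\alpha$ is \emph{strict} on $\Xe=\EPrefs{S}$ (as is expected of an adaptation of FRR, cf.\ \REFrem{rem:FRR}), i.e.\ $\EPrefs{S}\subseteq\EPrefs{\Sa^\#}$, which I would prove together with the companion invariant that the block reached by $\rho$ contains $\LastSn{S}{\rho}$. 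Both follow by induction on $|\rho|$; the base case is (A1). For the step $\rho=\rho'uy$, the hypothesis gives $q\supseteq\LastSn{S}{\rho'}$ with $q\in\alpha(\rho')$, and I need the edge $q\to_u q'$ in $\Fa$ with $q'\supseteq\LastSn{S}{\rho}$ and $\Ha(q')=y$. Such an edge is recorded (line~\ref{line:GSSA:ExF}) whenever \emph{some} node carrying block $q$ was expanded, since expansion ranges over all $(u,y)$ and stores the minimal cover of $c'=F(c,u)\cap H^{-1}(y)$; and for a block that has passed through \textsc{Refine}, the set $s$ of line~\ref{line:s} guarantees $F(x,u)\subseteq\mathtt{PostQ}_u$ for all $x\in q$, so the recorded successor edges also cover the successors of the cell $\LastSn{S}{\rho'}\subseteq q$.

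The main obstacle, and the place where $Y^\#=\Reach{Y}$ enters, is closing this inductive step at the exploration frontier: a freshly created leaf carries a block with no outgoing edges yet, so a reachable prefix forced to leave such a block could fail to be realized. I would argue that $Y^\#=\Reach{Y}$ forbids exactly this: every reachable output, hence every output class a continuation can enter, already lies in $\ExG$, so the over-approximating cover blocks already present absorb every reachable one-step continuation; conversely, if some reachable $y\notin Y^\#$ existed, the reachable prefix ending in $y$ would have empty $\alpha$-image and strictness (with it the usefulness asserted by \REFcor{cor:ABCDsound}) would break. Making precise that ``all reachable outputs explored'' together with ``refined cover blocks over-approximate'' forces coverage of \emph{arbitrarily long} reachable prefixes by the finitely many blocks of $\Sa^\#$ is the crux, and I expect it to be the hard part.

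For the realization addendum I would assume $\Cover$ has stabilized and establish the reverse relation $\Sa^\#\frr{\gamma}{\alpha}\Se$, which upgrades $\frr{}{}$ to $\frrE{}{}$. Stabilization means every block $q$ already equals the set $s$ it would be refined to, i.e.\ $q$ is a genuine one-step, hence by the fixpoint a full, observation-equivalence class of its cells; consequently $\Sa^\#$ carries no spurious transitions, as each abstract edge $q\to_u q'$ reflects a real transition $F(c,u)\cap H^{-1}(\Ha(q'))\neq\emptyset$ of some cell $c\subseteq q$, and since cells inside one block are future-observation-equivalent, every abstract path is matched by a concrete external prefix. This yields the reverse inclusions in (A1)--(A3) and turns the inclusions of the first part into the equalities characterizing $\Se\frrE{\alpha}{}\Sa^\#$, paralleling the equality argument in \REFprop{prop:SSA_sound}.
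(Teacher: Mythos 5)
Your route is genuinely different from the paper's. The paper does not verify (A1)--(A3) for $\Se\frr{\alpha}{}\Sa^{\#}$ directly; it factors the claim as a chain $\Se\frr{\LastSn{\SaK}{}}{}\SaK\frr{\alphat}{}\Sa^{\#}$, reusing \REFprop{prop:SSA_sound} for the first link, isolating the new technical content in \REFprop{prop:SaKbsimSa} for the second link (with $\alphat$ mapping cells to their covering blocks and \REFlem{lem:singlecover} supplying uniqueness of the minimal cover), and then composing via \REFprop{prop:compo} together with the observation that $\LastSn{\Sa^{\#}}{}=\alphat\circ\LastSn{\SaK}{}$. Your direct induction on external prefixes has to re-derive the knowledge-set invariant $c=\LastSn{S}{\nu}$ that the paper gets for free from \REFprop{prop:SSA_sound}, but it avoids the compositionality machinery; otherwise the two decompositions do the same work, and you correctly locate where $Y^{\#}=\Reach{Y}$ enters (the exploration frontier), why (A2) and (A3) are the formally easy conditions, and that strictness of $\alpha$ on $\EPrefs{S}$ is the real content.

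There is, however, a genuine gap: you explicitly leave the crux open (\enquote{I expect it to be the hard part}), namely that a block carried by a freshly created leaf, which has no outgoing edges recorded under its own name, still realizes every reachable one-step continuation. \enquote{Every reachable output class has some expanded node} does not by itself give this: the leaf's block $q$ may be a refined block that so far occurs only at frontier nodes, in which case no edge leaves $q$ in $\Fa$ and your induction stalls one step later. The paper closes exactly this step in the last part of the proof of \REFprop{prop:SaKbsimSa}: it argues that any pair $\tuplel{q,c}\in\ExX^\downarrow\setminus\ExG$ has a companion pair $\tuplel{q,c'}\in\ExG$ with the \emph{same} block $q$ (using that $q\subseteq\xa^y_0$ for the initial cover cell with $y=\Ha(q)\in\Reach{Y}=Y^{\#}$, which has therefore been explored), so that $q$ inherits the outgoing transitions computed for $c'$; it then separates the stabilized case (equality in (A2)) from the unstable case (inclusion only). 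Without a statement and proof of this companion-representative property your inductive step does not go through, so the proposal is incomplete precisely at the point you flagged.
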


In order to prove \REFthm{thm:soundrel:Ssharp}, we first prove \REFprop{prop:SaKbsimSa} below which formalizes the intuition that, under the given premises, the cell/block pairs $\tuplel{q,c}\in\ExX^\downarrow$ available when extracting $\Sa^{\#}$ in line~\ref{line:ComputeAbs:Se} of \REFalg{alg:GSSA} actually induce a sound abstraction relation between $\SaK$ and $\Sa^{\#}$. I.e., we always have $\SaK\frr{\alphat}{}\Sa^{\#}$ for
\begin{equation}\label{equ:alphat}
 \alphat(c):=\SetComp{q\in\Xa^{\#}}{\tuplel{q,c}\in\ExX^\downarrow}.
\end{equation}
Further, \REFprop{prop:SaKbsimSa} shows that $\SaK\frrE{\alphat}{}\Sa^{\#}$ if $\SaK$ is finite-state (and thereby $\Cover$ has stabilized from \REFlem{lem:TerminatingGSSAifSSA} in the appendix).
With this result \REFthm{thm:soundrel:Ssharp} becomes a simple corollary of \REFprop{prop:SaKbsimSa} and \REFprop{prop:SSA_sound} by utilizing the compositionality of sound abstractions 
(see \REFprop{prop:compo} in the appendix for a formal proof).

\begin{proposition}\label{prop:SaKbsimSa}
 Given the premises of \REFthm{thm:soundrel:Ssharp}, it holds that
 $\SaK\frr{\alphat}{}\Sa^{\#}$ with $\alphat$ as in \eqref{equ:alphat}.  Further, if $\Cover$ has stabilized, we additionally have $\SaK\frrE{\alphat}{}\Sa^{\#}$.
\end{proposition}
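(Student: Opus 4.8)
The plan is to read $\alphat$ as a single\nobreakdash-valued map from the states of $\SaK$ (the cells $c$) to the states of $\Sa^{\#}$ (the cover blocks $q$), using \REFlem{lem:singlecover}, which guarantees $|\alphat(c)|=1$ for every cell occurring in $\ExX$; I then set $\gammat:=\alphat^{-1}$, so $\gammat(q)$ is the set of cells carrying block $q$. Before checking (A1)--(A3) I would isolate three invariants of \REFalg{alg:GSSA}. First, along every root\nobreakdash-to\nobreakdash-node path the third (``cell'') component reproduces exactly the forward computation of \REFalg{alg:SSA}, so the cells in the tree are reachable states of $\SaK$ and the cell\nobreakdash-labelled edges of $\ExF$ are precisely KA\nobreakdash-transitions. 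Second, every block in $\Cover$ is contained in a single output class: the initialisation in line~\ref{alg:GSSA:CoverInit} is the $H$\nobreakdash-partition and \textsc{Refine} only ever inserts a subset $s\subseteq q$ in line~\ref{line:s_add}; hence $\Ha$ is well defined, and every cell satisfies $c\subseteq\alphat(c)$ with all states of $\alphat(c)$ sharing the output $\Ha(\alphat(c))$. Third, refinement\nobreakdash-stability: once the recursive \textsc{Refine} of the current iteration has settled, the set $s$ computed in line~\ref{line:s} for a block $q$ equals $q$ on each already\nobreakdash-refined cell, i.e.\ all cells sharing a block have the same profile of one\nobreakdash-step successor blocks; this is the content underlying \REFlem{lem:singlecover}.

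The second invariant settles (A3) in both directions immediately, and (A1) follows from initialisation: an initial cell is $c_0=\Xo\cap H^{-1}(y)$ with $c_0=q_0$ (line~\ref{alg:GSSA:ExXInit}), and since any refinement produces $s\subsetneq q_0=c_0$ we have $c_0\not\subset s$, so the relabelling test of line~\ref{alg:GSSA:newmin} never fires on the initial node and $\alphat(c_0)=q_0\in\Xao$. The real work is (A2) for $\SaK\frr{\alphat}{}\Sa^{\#}$. For a cell $c$ that has been expanded, the forward step (lines~\ref{line:GSSA:cp}--\ref{line:GSSA:ExF}), together with the block relabelling of $\ExF$ in line~\ref{alg:GSSA:newmin}, records an edge $(\alphat(c),u,\alphat(c'))\in\Fa$ for every KA\nobreakdash-successor $c'=F(c,u)\cap H^{-1}(y)$; this gives the inclusion on expanded cells. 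For an unexpanded frontier cell $c$, two cases arise: if a successor $c'$ does not occur in $\ExX$ then $\alphat(c')=\emptyset$ and the inclusion holds vacuously; if $c'$ does occur, I would transfer the edge by refinement\nobreakdash-stability. Concretely, the block $q=\alphat(c)$ already carries an expanded representative $\bar c$, and since $c$ and $\bar c$ lie in the same output\nobreakdash-homogeneous, refinement\nobreakdash-stable block, the $u$\nobreakdash-successor of $\bar c$ with output $H(c')$ lies in the block $\alphat(c')$, so $(q,u,\alphat(c'))$ already belongs to $\Fa$. The premise $Y^{\#}=\Reach{Y}$ enters exactly here: it guarantees that no reachable output, hence no block that is the $\alphat$\nobreakdash-image of a reachable successor, is still hidden behind the exploration frontier, so such an expanded representative and the matching block indeed exist.

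For the realization $\SaK\frrE{\alphat}{}\Sa^{\#}$ I would assume $\Cover$ has stabilised, which by \REFlem{lem:TerminatingGSSAifSSA} forces $\SaK$ to be finite and completely explored, so that every cell is an expanded tree node. It remains to prove $\Sa^{\#}\frr{\gammat}{}\SaK$, whose (A1) and (A3) are again the output facts above. For its (A2) I must show $\gammat(\Fa(q,u))\subseteq\Fa^{\SaK}(\gammat(q),u)$: each edge $(q,u,q')\in\ExF$ comes from an expanded cell $\bar c\in\gammat(q)$ with KA\nobreakdash-successor $\bar c'\in\gammat(q')$, so it is witnessed by $\bar c'\in\Fa^{\SaK}(\gammat(q),u)$; and conversely, since the stabilised cover blocks are genuine observation\nobreakdash-equivalence classes, every cell of $\gammat(q')$ is the $u$\nobreakdash-successor, under output $\Ha(q')$, of some cell of $\gammat(q)$, giving the required inclusion.

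The main obstacle is the forward (A2) argument at a mid\nobreakdash-run extraction, where the tree still has unexpanded frontier leaves: one must over\nobreakdash-approximate every transition of the possibly infinite $\SaK$ using only the finite, partially explored tree. The delicate point is to show that whenever a successor block is present at all, the matching edge out of the correct source block is already recorded; this is precisely where refinement\nobreakdash-stability (\REFlem{lem:singlecover}) and the premise $Y^{\#}=\Reach{Y}$ must be combined, the former to move a recorded edge across observation\nobreakdash-equivalent cells and the latter to ensure the representative carrying that edge has already been explored. A secondary but necessary step is to verify that the relabelling in line~\ref{alg:GSSA:newmin} keeps $\ExX$ and $\ExF$ consistent, so that the blocks read off from $\ExX^{\downarrow}$ (defining $\alphat$) and from $\ExF$ (defining $\Fa$ in \textsc{Extract}) always agree; for the realization direction the extra burden is the reverse inclusion in (A2), which genuinely needs cover\nobreakdash-stability to exclude spurious block\nobreakdash-edges.
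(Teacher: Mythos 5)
Your proof of the first claim follows the paper's argument essentially step for step: single-valuedness of $\alphat$ via \REFlem{lem:singlecover}, (A1) and (A3) from the initialisation and the output-homogeneity of cover blocks, and (A2) split between expanded nodes (where the recorded edges in $\ExF$ give the inclusion directly) and frontier nodes (where the premise $Y^{\#}=\Reach{Y}$ supplies an already-expanded representative carrying the same block, and refinement-stability transfers its recorded successors). That half is sound and is the same route the paper takes, including the correct identification of where $Y^{\#}=\Reach{Y}$ is actually needed.

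The gap is in the realization half. You assert that a stabilised $\Cover$ ``by \REFlem{lem:TerminatingGSSAifSSA} forces $\SaK$ to be finite and completely explored, so that every cell is an expanded tree node.'' The lemma states the opposite implication---if the KA algorithm terminates, then the KAM loop condition $\ExG=\ExX^\downarrow$ is eventually met---and the converse is false in general: the cover can stabilise while the forward exploration keeps producing new cells indefinitely (this is precisely the regime in which $\SaK$ is infinite but a finite observation-equivalent quotient exists, which is the point of KAM). Your argument therefore silently assumes away the frontier tuples in $\ExX^\downarrow\setminus\ExG$, which is where the work lies. The paper instead argues that a stable cover guarantees, for every frontier tuple $\tuplel{q,c}$, an already-expanded tuple $\tuplel{q,c'}\in\ExG$ whose outgoing edges are all recorded and can no longer change (since $q$ admits no further refinement), and derives equality in (A2) from that; you need this argument, not finiteness of $\SaK$. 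Relatedly, your ``conversely'' step for the reverse inclusion $\gammat(\Fa(q,u))\subseteq\Ft(\gammat(q),u)$---that every cell carrying block $q'$ is a $u$-successor of some cell carrying block $q$---is only asserted; it too must be grounded in cover stability (to exclude a cell acquiring block $q'$ along a path that never passes through $q$) rather than in the unavailable claim that exploration has finished.
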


\begin{proof}
To simplify notation we use $\St:=\SaK$ and $\Sa:=\Sa^{\#}$.\\ 
\begin{inparaitem}[$\blacktriangleright$]
 \item We first show that equality holds for (A1) and (A3) from \REFdef{def:SoundAbs}.
 \begin{inparaitem}[$\triangleright$]
  \item (A1): Observe that line~\ref{alg:SSA:Xao} in \REFalg{alg:SSA} and line~\ref{alg:GSSA:Xao} in \REFalg{alg:GSSA} literally match. Further, for all $\xa\in\Xoa$ we have that $\tuplel{\varepsilon,\xa,\xa}$ is in the initial cover set (line \ref{alg:GSSA:CoverInit} in \REFalg{alg:GSSA}) and thereby $\tuplel{\xa,\xa}\in\ExX^\downarrow$, as we have assumed $\Xo$ to respect $H$. As \REFalg{alg:GSSA} always maintains $\xt\subseteq \xa$ for any $\tuplel{\xa,\xt}\in\ExX$ and all elements in $\Cover$ only get refined, we see that there is no other $\xa'\in \Xa$ related to $\xt\in \Xt$. We therefore have $\alphat(\Xto)=\Xoa$.
   \item (A3): It is easy to see that for all $\xa\in\Xa$ holds that $x,x'\in\xa$ implies $H(x)=H(x')=\Ha(\xa)$. As $\xt\subseteq\xa$ for all $\tuplel{\xa,\xt}\in\ExX$, we have $\Ht(\xt)=\Ha(\xa)$ for all related states. 
 \end{inparaitem}\\
 \item Now we show that (A2) holds with equality for all $\tuplel{\xt,\xa}\in\ExG$ (possibly a subset of $\ExX^\downarrow$). For this, observe that $\Sa$ is extracted in the last iteration of the while loop in line~\ref{alg:GSSA:startwhile}-\ref{alg:GSSA:endwhile} of \REFalg{alg:GSSA} and therefore the recursive function \textsc{Refine} was applied to all $\tuplel{\xa,\xt}\in\ExG$ with $\xt\subset \xa$ and has terminated. We can therefore utilize \REFlem{lem:singlecover} in the appendix implying $|\alphat(\xt)|=1$ for all $\xt$ present in $\ExG$.\\
 \begin{inparaitem}[$\triangleright$]
  \item (A2) for $\ExG$: Pick $\xt\in\Xt$, $u\in U$ and $\xt'_y=F(\xt,u)\cap H^{-1}(y)$. Further, define $Y'=\SetComp{y\in Y}{\xt'_y\neq\emptyset}$ and let $Q'$ contain all $\xa'\in\Xa$ s.t.\ $\tuplel{\xa',\xt'_y}\in\ExX$ and $y\in Y'$. Using the same argument as in the proof of \REFprop{prop:SSA_sound} we have $\Ft(\xt,u)=\bigcup
   _{y\in Y'}{\set{\xt'_y}}$, and therefore, by definition, $\alphat(\Ft(\xt,u))=Q'$.     
    Now one can verify, by looking at line~\ref{line:GSSA:cp}, \ref{line:GSSA:ExF} and \ref{line:GSSA:PostQ} of \REFalg{alg:GSSA}, that $Q'=\mathtt{PostQ}_u(\tuplel{\xa,\xt})$ for $\set{\xa}=\alphat(\xt)$. 
   Further, we extract $\Sa$ after all covers have been refined. With this we know that $F(\xa,u)=\mathtt{PostQ}_u(\tuplel{\xa,\xt})$, as otherwise there would exists a refinement $s\subset \xa$ in the sense of line~\ref{line:s} in \REFalg{alg:GSSA}. This further implies that for all $\tuplel{\xa,\xt_1},\tuplel{\xa,\xt_2}\in\ExG$ we have that  $\mathtt{PostQ}_u(\tuplel{\xa,\xt_1})=\mathtt{PostQ}_u(\tuplel{\xa,\xt_2})$. With this it follows that $Q'=\Fa(\alpha(\xt),u)$. This implies  $\alphat(\Ft(\xt,u))=\Fa(\alphat(\xt),u)$.\\
 \end{inparaitem}
 \item It remains to show that (A2) holds (with equality for a stable cover and with inclusion for an unstable one) for tuples $\tuplel{\xa,\xt}\in\ExX^\downarrow\setminus\ExG$. First, one can verify that $\tuplel{\xa,\xt}\in\ExX^\downarrow\setminus\ExG$ if (a) a tuple $\tuplel{\sigma,\xa,\xt}$ is added to $\ExX$ in the last iteration of the while loop before extracting $\Sa^{\#}$, and (b) if there exists no tuple $\tuplel{\xa',\xt}\in\ExG$ for an arbitrary $\xa'$. 
 While (a) is obvious, we show that (b) also holds. It follows from \REFlem{lem:singlecover}, that after completing every iteration of the while-loop in line~\ref{alg:GSSA:endwhile} it holds for every $\xt$ already constructed, that there exists a unique $\xa'$ s.t.\ $\tuplel{\xa',\xt}\in\ExG$. Now assume that $\tuplel{\sigma,\xa,\xt}$ is added to $\ExX$ via line~\ref{line:GSSA:Qp} of \REFalg{alg:GSSA}. Then we know that $\xa'=\xa$, as $\xa'$ is the unique minimal element of $\Cover$ covering $\xt$ and, hence, $\tuplel{\xa,\xt}\notin\ExX^\downarrow\setminus\ExG$.\\
\begin{inparaitem}[$\triangleright$]
  \item (A2) for $\ExX^\downarrow\setminus\ExG$ with stabilized $\Cover$: If $\Cover$ has stabilized no element in $\Cover$ will be further refined by \textsc{Refine}. In particular, this implies that $\xa$ is stable for any $\tuplel{\xa,\xt}\in\ExX^\downarrow\setminus\ExG$. Further, a stable cover implies that there already exists another tuple $\tuplel{\xa,\xt'}\in\ExG$ for which all 
  outgoing transitions are contained in $\ExF$. With this, we use the same reasoning as for $\ExG$ to construct $Q'$ and to show that (A2) holds with equality. \\
\item (A2) for $\ExX^\downarrow\setminus\ExG$ with unstable $\Cover$:
If the $\Cover$ is not stable, we cannot ensure that $\xa$ is stable for any $\tuplel{\xa,\xt}\in\ExX^\downarrow\setminus\ExG$, i.e., would not be refined in the next iteration of the while loop. Further, we have to make sure that there exists another tuple $\tuplel{\xa,\xt'}\in\ExG$. Now recall that we initialize $\Cover$ with the largest subsets $\xa^y_0\subseteq X$ that generate the same output $y$. As $Y^{\#}=\Reach{Y}$, we know that all initial cover cells $\xa^y_0$ with $y\in\Reach{Y}$ will be explored (and possibly refined) at least once in \REFalg{alg:GSSA}. As $\xa\in \Cover$ and by construction $\xa\subseteq \xa^y_0$ for $y=H(\xa)\in\Reach{Y}$ we know that $\tuplel{\xa,\xt'}\in\ExG$. 
With this we can use the same reasoning as in the proof of (A2) for $\ExG$ to construct $Q'$. If it is stable, the argument reduces to the previous one. If it is not, we have $F(\xa,u)\subset\mathtt{PostQ}_u(\tuplel{\xa,\xt})$. With this, the same arguments as in 
the proof of (A2) for $\ExG$ show that (A2) holds with inclusion, i.e., $\alphat(\Ft(\xt,u))\subseteq\Fa(\alphat(\xt),u)$ where $\alphat(\xt)$ contains all minimal $\xa$'s covering $\xt$. 
\end{inparaitem}
\end{inparaitem}
\end{proof}

\begin{proof}[Proof of \REFthm{thm:soundrel:Ssharp}]
 As sound abstractions compose in the expected way (see \REFprop{prop:compo} in the appendix), we obtain a chain of sound abstractions
$
 \Se\frr{\LastSn{\SaK}{}}{}\SaK\frr{\alphat}{}\Sa^{\#}
$
from \REFprop{prop:SaKbsimSa} and \REFprop{prop:SSA_sound}, implying $\Se\frr{\alpha}{}\Sa$ with $\alpha=\alphat\circ\LastSn{\SaK}{}$. 
It can be further observed from the tree-structure generated by KAM that every external prefix $\nu$ of $S$ corresponds to a unique tuple $\tuple{q,c}\in\ExX^\downarrow$. Further, the same external prefix $\nu$ reaches the state $c$ of $\SaK$ and the state $q$ of $\Sa^\sharp$. As \REFprop{prop:SaKbsimSa} shows that these states $c$ and $q$ are related via $\alphat$, we have $\LastSn{\Sa}{}=\alphat\circ\LastSn{\SaK}{}$.
With this, the first claim of \REFthm{thm:soundrel:Ssharp} follows. 
The second claim follows similarly.
\end{proof}

\smallskip 
\noindent 
\textbf{Iterative ABCD with KAM.}
By combining \REFcor{cor:ABCDsound} and \REFthm{thm:soundrel:Ssharp} we can compute an output-feedback controller $\C:=\Ca\circ\LastSn{\Sa^\#}{}\in\WIN(S,\psi)$ from an abstract state-feedback controller $\Ca^\dagger\in\WIN^\dagger(\Sa^{\sharp},\psi)$ whenever the latter synthesis problem allows for such a solution, i.e., $\WIN^\dagger(\Sa^{\#},\psi)\neq\emptyset$. Hence, ABCD with output feedback is sound in this case. Given that $\Sa^\#$ is in general only known to abstract $S$, we are however losing completeness. 
That is, if 
$\WIN^\dagger(\Sa^{\#},\psi)=\emptyset$, it does not imply that there is no solution to the original synthesis problem $\tuplel{S,\psi}$. 

We can however take an eager abstraction-refinement approach instead 
to retain relative completeness. 
That is, whenever $\WIN^\dagger(\Sa^{\#},\psi)=\emptyset$, we run KAM for some more steps, extract a new abstraction $\Sa^{\#'}$, 
and again try to synthesize a controller. We give up, once an upper bound $L$ on the iterations of KAM is reached. 
This eager approach relies on the insight that abstractions extracted after more iterations of KAM refine earlier abstractions 
as formalized in \REFthm{thm:soundrel:SsharpL}. 
Further, this abstraction-refinement procedure is relative complete.
That is, if there is a topologically closed finite-state abstraction $\Sa$ for which $\WIN(\Sa, \psi)\neq\emptyset$, 
there always exists a large enough $L$ s.t.\ the abstraction $\Sa^{\#}$ extracted from KAM in the $L$'s iteration allows to solve the controller synthesis problem, i.e., $\WIN(\Sa^{\#}, \psi)\neq\emptyset$.

\begin{theorem}\label{thm:soundrel:SsharpL}
 Given the premises of \REFthm{thm:soundrel:Ssharp}, let $\Sa_{+1}^{\#}$ be the system computed in line~\ref{line:ComputeAbs:Se} of \REFalg{alg:GSSA} 
after one more iteration of \REFalg{alg:GSSA} after $\Sa^\#$ was extracted. Then
$\Sa_{+1}^{\#}\frr{}{}\Sa^{\#}$.
\end{theorem}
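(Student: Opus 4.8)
The plan is to exhibit an explicit relation $\beta:\Xa^{\#}_{+1}\setfun\Xa^{\#}$ witnessing $\Sa_{+1}^{\#}\frr{\beta}{}\Sa^{\#}$ and to verify conditions (A1)--(A3) of \REFdef{def:SoundAbs}. Write $l$ for the iteration of the main while loop in which $\Sa^{\#}$ is extracted and $l+1$ for the next. The guiding observation is that across one pass KAM only \emph{grows} its data structures monotonically: edges are added to $\ExF$ but never removed, and $\Cover$ is only refined, so that every block created during the extra iteration is contained in a block already present when $\Sa^{\#}$ was extracted (the containing block is unique by the minimality in line~\ref{line:GSSA:Qp} together with \REFlem{lem:singlecover}). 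Consequently the block attached to any fixed tree node can only shrink, and every state $q'$ of $\Sa_{+1}^{\#}$ satisfies $q'\subseteq q$ for the block $q$ carried by the same cells in $\Sa^{\#}$. I would therefore define $\beta$ through the common knowledge cells, setting $\beta(q')=\SetComp{\alphat_l(c)}{\alphat_{l+1}(c)=q'}$, where $\alphat_l,\alphat_{l+1}$ are the cell-to-block relations induced by \eqref{equ:alphat} at iterations $l$ and $l+1$. Single-valuedness of $\beta$ follows from \REFlem{lem:singlecover}: two cells sharing the same minimal cover at $l+1$ are each contained in the other's level-$l$ cover block, and uniqueness of the minimal cover forces their level-$l$ blocks to coincide.

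With $\beta$ in hand, (A1) and (A3) are routine. For (A1), the initial abstract states are produced by the identical assignment (line~\ref{alg:GSSA:Xao}) at both iterations and hence form the same set $\Xao$; since $H$ respects $\Xo$, each initial block is a full output class $H^{-1}(y)$, which is maximal among blocks of output $y$, so $\beta$ fixes it and maps $\Xao$ into $\Xao$. For (A3), any $q'\subseteq q$ forces all states of $q'$ to share the common output of $q$, so $H$ agrees with $\Ha\circ\beta$ on $q'$, exactly as in the (A3) step of the proof of \REFprop{prop:SaKbsimSa}.

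The heart of the argument, and the step I expect to be the main obstacle, is (A2): every transition of the finer system $\Sa_{+1}^{\#}$ must be dominated under $\beta$ by a transition of $\Sa^{\#}$. The difficulty is that the extra iteration both (i) explores new leaves, creating genuinely new abstract transitions $\bar q\xrightarrow{u}q$ absent from $\Fa$ of $\Sa^{\#}$, and (ii) relabels blocks by refinement, so that the node realizing a transition of $\Sa_{+1}^{\#}$ need not have existed, with that block, when $\Sa^{\#}$ was extracted. To handle (i) I would reuse the mechanism from the $\ExX^\downarrow\setminus\ExG$ case of \REFprop{prop:SaKbsimSa}: the premise $Y^{\#}=\Reach{Y}$ guarantees that every reachable output, hence every block occurring as the source of a new transition, already appears as the block of some \emph{expanded} node recorded in $\ExF$ at iteration $l$; therefore the source block $\bar q=\beta(q')$ carries, in $\Sa^{\#}$, outgoing $u$-transitions computed via $\mathtt{PostQ}_u$ (line~\ref{line:GSSA:PostQ}) whose targets cover the $\beta$-image of the new successor. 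For (ii), monotonic refinement and single-cover ensure that relabelling a node from $q$ to $s\subset q$ (line~\ref{line:s}) keeps its transitions inside the image of the persisting coarse transition out of $q$, since $s\subseteq q$ and that coarse edge remains in $\ExF$.

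For the realization strengthening one notes that when $\Cover$ has stabilized no further split occurs, $\beta$ becomes bijective on reachable blocks, and the inclusion in (A2) becomes equality, giving $\Sa_{+1}^{\#}\frrE{\beta}{}\Sa^{\#}$; alternatively, in that case one may bypass the direct computation by chaining $\Sa_{+1}^{\#}\frr{}{}\SaK\frr{}{}\Sa^{\#}$ through the realization $\SaK\frrE{}{}\Sa_{+1}^{\#}$ of \REFprop{prop:SaKbsimSa} and appealing to compositionality (\REFprop{prop:compo}). The genuinely new work over \REFprop{prop:SaKbsimSa} is thus confined to showing that the one-iteration difference in the exploration tree introduces no uncovered transitions, which is precisely obstacle (A2) above.
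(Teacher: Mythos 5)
Your proposal follows essentially the same route as the paper's proof: both define the relation between $\Sa_{+1}^{\#}$ and $\Sa^{\#}$ through the shared cells $c$ (the paper's $\alpha_{+1}$ is your $\beta$ read in the inverse direction), both rely on monotone refinement of $\Cover$ so that blocks only shrink across the extra iteration, and both reduce the (A2) verification to the machinery of \REFprop{prop:SaKbsimSa}, using the premise $Y^{\#}=\Reach{Y}$ to cover transitions emanating from nodes that were unexpanded leaves when $\Sa^{\#}$ was extracted. The one inaccuracy is your claim that $\beta$ is single-valued: \REFlem{lem:singlecover} only guarantees a unique minimal cover for cells appearing in $\ExG$, so a cell that is still a leaf at the earlier iteration may have several minimal level-$l$ covers, and the paper accordingly lets $\alpha_{+1}$ be set-valued in that case (its case (b)); since \REFdef{def:SoundAbs} permits set-valued $\alpha$, this does not break your argument, but the single-valuedness claim should be dropped.
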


\begin{proof}
 Let $\ExG$, $\ExX^\downarrow$ and $\ExG'$, $\ExX^{\downarrow'}$ be the sets computed when extracting $\Sa^\#$ and $\Sa_{+1}^{\#}$, respectively. Further let us define an abstraction map candidate $\alpha_{+1}$ using three cases. I.e., $q\in\alpha_{+1}(p)$ if there exists $c$ s.t.\ either 
 (a) $\tuplel{q,c}\in\ExG$ and $q=p$, or 
 (b) $\tuplel{q,c}\in\ExX^\downarrow\setminus\ExG$, $\tuplel{p,c}\in\ExG'$ and $p\subseteq q$, or
 (c) $\tuplel{p,c}\in\ExX^{\downarrow'}\setminus\ExG'$ and there exists $c'$ s.t.\ $q$ is related to $p$ as in (a) or (b).
 
This definition induces the following three cases for the proof.
 \begin{inparaitem}[$\triangleright$]
  \item (a) holds for $(q,p)$: This implies $\tuplel{q,c}\in\ExG'$. It follows from the same arguments as used in the proof of \REFprop{prop:SaKbsimSa} that equality holds for (A1)-(A4) in \REFdef{def:SoundAbs} w.r.t.\ $\St$ both for $\Sa^\#$ and $\Sa_{+1}^{\#}$. As $\alpha_{+1}$ reduces to the identity map in this case, the claim trivially follows. \\
  \item (b) holds for $(q,p)$:  Then it follows again that equality holds for (A1)-(A4) in \REFdef{def:SoundAbs} w.r.t.\ $\St$ for $\Sa_{+1}^{\#}$ but it follows from \REFthm{thm:soundrel:Ssharp} that only inclusion holds for (A3) w.r.t.\ $\Sa^\#$. 
  Formally, we fix $c$ existentially quantified in the definition of case (b) before. Then we have  $\alphat_{+1}(\Ft(c,u))=\Fa_{+1}(\alphat_{+1}(c),u)$ where $\alphat_{+1}(c)$ contains the unique minimal $p$ covering $c$ and $\alphat(\Ft(c,u))\subseteq\Fa(\alphat(c),u)$ where $\alphat(c)$ contains all minimal $q$'s covering $c$. We have $p\subseteq q$ for all $q\in\alphat(c)$ due to the additional refinement step run before extracting $\Sa_{+1}^{\#}$. In particular, we have $\alphat(c)=\alpha_{+1}(p)$. Hence,  $\alphat_{+1}(\Ft(c,u))=\Fa_{+1}(p,u)$ and $\alphat(\Ft(c,u))\subseteq\Fa(\alpha_{+1}(p),u)$.
  Now define $C'=\Ft(c,u)$. If for all $c'\in C'$ case (a) or (b) holds, we have that $\alphat_{+1}(c')$ maps to a unique $p'$. In this case it holds that $\alpha_{+1}(\alphat_{+1}(\Ft(c,u)))=\alphat(\Ft(c,u))$ and therefore $\alpha_{+1}(\Fa_{+1}(p,u))\subseteq \Fa(\alpha_{+1}(p),u)$, what proves the statement. Now for any $c''$ for which case (c) applies there exists a $c'''$ s.t.\ case (a) or (b) applies while $\alphat(c'')=\alphat(c''')$ and $\alphat_{+1}(c'')=\alphat_{+1}(c''')$. With this, the previous argument applies and the claim follows.\\
 \item (c) holds for $(q,p)$: Fix $c$ existentially quantified in the definition of (c) and recall that there exists $c'$ s.t.\ $\alphat(c)=\alphat(c')$ and $\alphat_{+1}(c)=\alphat_{+1}(c')$ and case (a) or (b) applies for $c'$. Hence, without loss of generality we can replace $c$ by $c'$ and the claim follows.
 \end{inparaitem}
\end{proof}

\begin{remark}\label{rem:lcomplete}
 The idea of abstraction-refinement for controller synthesis is also often applied in the context of $l$-complete 
abstractions \cite{moor1999supervisory,schmuck2014asynchronous,yang2018local,reissig2011computing}. 
Similar to KAM, $l$-complete abstractions are constructed forward and generalize from initial observations to equivalence classes. 
Here, the equivalence classes collect states which share the same $l$-long external history (see e.g., \REFfig{ex:lcomplete} for an example with $l=2$). 
%
$l$-complete abstractions are typically constructed from the external behavior of $S$ and do not assume the state dynamics of $S$ to be known. They thereby do not utilize 
the memory structure implicitly given by the state dynamics of $S$ in their generalization step. 
Therefore, KAM generates tighter abstractions whenever the underlying state transition system is known, but $l$-complete abstractions are to be preferred if this is not the case.
 \end{remark}
 
 \begin{figure}
             \begin{tikzpicture}[auto,scale=1]
   \def\vsep{0.35}
   \def\hsep{0.15}
     \begin{footnotesize}
     \node at (-1,0) {$\Sa^l:$}; 
       \node (init) at (-0.5,0) {};  
  \node [myestate, draw] (a1) at (0,0) {A}; \node[] at ($(a1.east)+(\hsep,0)$) {$A$};
  \node [myestate, draw] (a2) at (-0.7,-0.7) {AA}; \node[] at ($(a2)+(0,-\vsep)$) {$A$};
  \node [myestate, draw] (b1) at (0.3,-0.7) {AB}; \node[] at ($(b1)+(+\hsep,+\vsep)$) {$B$};
\node [myestate, draw] (b3) at (1.2,-0.7) {BA}; \node[] at ($(b3)+(0,-\vsep)$) {$A$};
  \node [myestate, draw] (b2) at (2.1,-0.7) {BB}; \node[] at ($(b2)+(0,-\vsep)$) {$B$};  

\SFSAutomatEdge{init}{}{a1}{}{}
\SFSAutomatEdge{a1}{}{a2}{}{}
\SFSAutomatEdge{a2}{}{a2}{loop left}{}
\SFSAutomatEdge{a1}{}{b1}{}{}
\SFSAutomatEdge{b1}{}{b3}{}{}
\SFSAutomatEdge{b1}{}{b2}{bend left}{}
\SFSAutomatEdge{b3}{}{a2}{bend left}{}
\SFSAutomatEdge{b2}{}{b3}{}{}
\SFSAutomatEdge{b2}{}{b2}{loop right}{}
             \end{footnotesize}
         \end{tikzpicture}
         \caption{$2$-complete abstraction of the system $S$ in \REFfig{fig:ex_inf_GK}.}\label{ex:lcomplete}
 \end{figure}
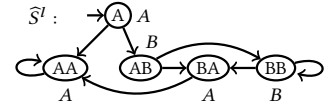

\noindent
\textbf{Symbolic Implementations.}
%
KAM differs from the simultaneous reachability and bisimulation minimization algorithm of Lee and Yannakakis \cite{LeeYannakakis92} 
as it constructs an external language- (not bisimulation-) equivalent system.
Hence, it only applies predecessor operations and intersection with outputs, but does not take set differences. 
This is in fact crucial in implementations.
For example, for affine systems with polyhedral initial sets and output sets, one can implement the algorithm exactly
using a convex polyhedral abstract domain, as both predecessor operators and intersections maintain convexity while 
set differences do not.

\section{Hybrid System Examples}\label{sec:exs}
We now present two continuous-state discrete-time hybrid system examples and show how our approach
can be used to design abstractions useful for output-feedback control. Along the way, 
we also compare our approach with several alternatives and show how 
state-of-the-art techniques for abstracting continuous-state systems, 
such as those implemented in SCOTS or Mascot \cite{SCOTS,HsuMMS18}, can be incorporated in our approach.

\begin{example}\label{ex:Sigma_1}
Consider a switched system $\Sigma_1$ with 
\begin{inparaitem}[$\triangleright$]
\item state space $X = [0,3)\times[0,3) \subset \real{2}$; 
\item initial states $X_0 = X$;
\item input space $U=\Set{u_1, u_2}$ (corresponding to two controllable modes);
\item  output space $Y = \Set{y_{00},y_{01},y_{02},y_{10},y_{11},y_{12},y_{20},y_{21}, y_{22}}$; 
\item output function $H: x \mapsto y_{ij}$, where $i=\floor{x_1/3}$, $j=\floor{x_2/3}$ for all $x\in X$  ; and
\item transition function $F$ defined as
$$F(x,u_1) = {mod}_3 \left( x + \begin{bmatrix} 0.4 \\ 0.4\end{bmatrix}\right), \; F(x,u_2) = {mod}_3 \left( x - \begin{bmatrix} 0.4 \\ 0.4\end{bmatrix}\right),$$
where the function ${mod}_k: \real{n} \rightarrow [0,k)^n$ wraps its input argument component-wise around the perimeter of its codomain; i.e., 
 if $s = {mod}_k(x)$, then $s_i = x_i - k\floor{\frac{x_i}{k}}$. 
\end{inparaitem}
In \REFfig{fig:sigma1} (top left), state space $X$ is shown, where the domain of $H$ for all $y$ is indicated by the large boxes with edge length $1$. The dynamics of $F$ are then interpreted as upward ($u=u_1$) and downward ($u=u_2$) discrete-time flows of points in $X$ parallel to the diagonal connecting the lower left and top right corner of $X$. When the boundary of $X$ is reached, the system continues to evolve in the block reached by wrapping $X$ around its boundaries. 
Note that the only source of non-determinism in system $\Sigma_1$ is due to the initial condition not being a singleton, whereas the transition function is deterministic.
%
%
We consider a specification $\psi_1$ stating that when starting in $y_{00}$ the system should always eventually (re-)visit $y_{00}$ and $y_{22}$. 
\end{example}

\begin{figure}[t]
   \begin{center}
\newcommand*{\xMin}{0}%
\newcommand*{\xMax}{3}%
\newcommand*{\xMaxB}{2}%
\newcommand*{\yMin}{0}%
\newcommand*{\yMax}{3}%
\newcommand*{\yMaxB}{2}%

\begin{tikzpicture}[auto,scale=0.5]
\begin{footnotesize}
    \foreach \i in {\xMin,...,\xMaxB} {
        \draw [] (\i,\yMin) -- (\i,\yMax)  node [below] at (\i,\yMin) {$\i$};
    }
    \foreach \i in {\yMin,...,\yMaxB} {
        \draw [] (\xMin,\i) -- (\xMax,\i) node [left] at (\xMin,\i) {$\i$};
    }
\draw[very thin, gray] (\xMax,\yMin) -- (\xMax,\yMax)  node [below] at (\xMax,\yMin) {$\xMax$};
\draw[very thin, gray] (\xMin,\yMax) -- (\xMax,\yMax) node [left] at (\xMin,\yMax) {$\yMax$};
\draw [step=1.0,blue, very thick] (0,0) grid (1,1);
\draw [dashed, blue] (0.4,0.4) -- (1.4,0.4) -- (1.4,1.4) -- (0.4,1.4) -- (0.4,0.4);
\draw [step=1.0,red, very thick] (0,2) grid (1,3);
\draw [dashed, red] (0,1.6) -- (0, 2.6) -- (0.6,2.6) -- (0.6,1.6) -- (0,1.6);
\draw [dashed, red] (2.6,1.6) -- (2.6, 2.6) -- (3,2.6) -- (3,1.6) -- (2.6,1.6);
\def\hsep{5.5}
	\foreach \i in {0, ..., 2} {
		\foreach \j in {0, ..., 2} {
			
			\node[circle,draw] (a\i\j) at (\hsep+1.4*\i, 1.4*\j) {\i\j};
		}

	}
	\draw[thick, ->, blue]
									     (a00) edge[loop left] (a00)
									     (a00) edge[bend right] (a11)
									     (a00) edge[bend right] (a10)
									     (a00) edge[bend right] (a01);

	\draw[thick, ->, red] 
									      (a02) edge[loop left] (a02)
									     (a02) edge[bend right] (a01.north west)
									     (a02) edge[bend right] (a22)
									     (a02) edge[bend right] (a21);
\def\hsepn{12}
    \foreach \i in {\xMin,...,\xMaxB} {
        \draw [] (\hsepn+\i,\yMin) -- (\hsepn+\i,\yMax)  node [below] at (\hsepn+\i,\yMin) {$\i$};
    }
    \foreach \i in {\yMin,...,\yMaxB} {
        \draw [] (\hsepn+\xMin,\i) -- (\hsepn+\xMax,\i) node [left] at (\hsepn+\xMin,\i) {$\i$};
    }
\draw[](\hsepn+2,2)--(\hsepn+3,3);
\draw[very thin, gray] (\hsepn+\xMax,\yMin) -- (\hsepn+\xMax,\yMax)  node [below] at (\hsepn+\xMax,\yMin) {$\xMax$};
\draw[very thin, gray] (\hsepn+\xMin,\yMax) -- (\hsepn+\xMax,\yMax) node [left] at (\hsepn+\xMin,\yMax) {$\yMax$};
\draw [step=1.0,blue, very thick] (\hsepn+1,1) grid (\hsepn+2,2);
\draw [step=1.0,blue, dashed] (\hsepn+1.4,1.4) rectangle (\hsepn+2.4,2.4);

\end{footnotesize}
\end{tikzpicture}

\end{center}
\vspace{-0.2cm}
\caption{Graphical representation of $\Sigma_1$ (far left) and $\Sigma_2$ (far right), showing the state space $X$ with the partition induced by the output maps $H_1$ and $H_2$, respectively. For $\Sigma_1$, $X_j=F(X_i,u_1)$ (dashed blue) indicates the reachable set of $X_i=H_1^{-1}(y_{00})$ (solid blue). Intersecting $X_j$ with the partition generates transitions (blue) originating in $y_{00}$ in the finite-state abstraction (middle). Similarly, $X_j=F(X_i,u_2)$ (dashed red) is reached from $X_i=H_1^{-1}(y_{02})$ (solid red) generating transitions (red) originating in $y_{02}$ in the abstraction. 
}\label{fig:sigma1}
\vspace*{-0.6cm}
\end{figure}
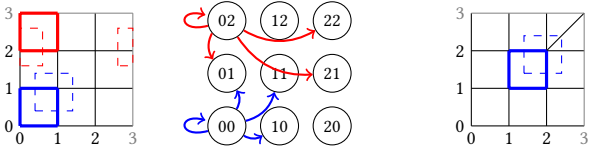

Let us first consider constructing an abstract system $\widehat{\Sigma}_1$ 
that has a feedback refinement relation (FRR) with $\Sigma_1$ by using forward simulation as, e.g., implemented in SCOTS. 
The main idea is to ``grid'' the state space into hyperboxes of size $\eta$ in a way consistent with the 
outputs
and treat each grid cell $X_i$ as an abstract state. 
Then for each grid cell $X_i$ and for each input $u_i$, post $F(X_i,u_i)$ is computed and a transition with input label $u_i$ is 
added from the abstract state $X_i$ to all abstract states $X_j$ that have a non-empty intersection with the post. 
This process is illustrated in \REFfig{fig:sigma1}. 
%
Given the existence of an FRR from $\Sigma_1$ to $\widehat{\Sigma}_1$ (rendering $\widehat{\Sigma}_1$ a sound abstraction of $\Sigma_1$ for state-feedback control as discussed in \REFrem{rem:FRR}) and the compositionality of sound abstractions (see \REFprop{prop:compo}), we can use $\widehat{\Sigma}_1$
with any of the algorithms presented in \REFsec{sec:algos} to construct an abstraction $\widehat{\Sigma}'_1$ 
which allows to solve the \emph{output-feedback control problem} over $\Sigma_1$.

In order to apply this process, we need to select a grid size $\eta$ when constructing $\widehat{\Sigma}_1$. 
We denote the resulting abstraction with $\widehat{\Sigma}_1^{(\eta)}$. 
We can start with $\eta = 1$ as discussed before. 
This, however induces non-determinism and it can be easily seen by inspecting \REFfig{fig:sigma1} (middle), 
that there does not exist a controller in the abstraction that allows us to surely transition from $y_{00}$ to $y_{22}$ and back infinitely often---in the abstraction,
applying the necessary input sequence might lead to  visiting $y_{02}$ instead of $y_{22}$.
One can try a finer grid size, e.g., $\eta = 0.03$, but the problem still does not admit a solution for $\tuplel{\widehat{\Sigma}_1^{(0.03)},\psi_1}$. 
By inspection, the problem only has a solution if $\eta$ is chosen such that $0.2$ is an integer multiple of $\eta$. 
Here, $0.2$ is the greatest common divisor of $0.4$ (the increments the dynamics make) and $1$ (the ``fidelity" of the outputs). 
So, the set of grid sizes that gives a solution is a measure-zero set in $\real{}_{>0}$ and, in general, the ``right" grid size is dictated by the 
dynamics and output map. 
Further, even if we use an automatic refinement tool like Mascot, the step size of the refinement of $\eta$ is a design parameter 
and thus, the tool may not ever explore an integer multiple of $0.2$. 

We now turn to solving the output-feedback control problem $\tuplel{\Sigma_1,\psi_1}$ by directly applying the algorithms discussed 
in \REFsec{sec:algos} to $\Sigma_1$ without constructing $\widehat{\Sigma}_1$ first. 
For this example, all three algorithms (i.e., KA, KA with bisimulation quotient, and KAM) will produce the same abstraction. 
This is due to the fact that the dynamics of the system are such that the post and the pre operations over $F$ cancel out. 
Therefore the forward and backward algorithms are essentially performing the same operations. 
Further, all of them terminate and generate a sound \emph{realization}. 
Thus, these algorithms \emph{automatically} figure out that the largest cover of $X$ which merges states with the same 
future under any applied input sequence has size $\eta=0.2$.

\begin{example}
We consider another switched system $\Sigma_2$ with the same dynamics as $\Sigma_1$ but 
with changed output space $Y_2=\Set{y_{00},\hdots, y_{21}, y_{22u},y_{22l}}$ s.t.\ $H_2$ maps the 
upper left and lower right triangle of $y_{22}$ to $y_{22u}$ and $y_{22l}$, respectively (see \REFfig{fig:sigma1} (right) for an illustration). 
The specification $\psi_2$ requires to repeatedly visit $y_{00}$ and either $y_{22u}$ or $y_{22l}$ infinitely often after starting in $y_{00}$.
\end{example}

Consider running KAM on $\Sigma_2$. First observe that we are now initializing KAM with the triangle shape domains of $H(y_{22l})$ and $H(y_{22u})$ 
in addition to the the boxed domains for all remaining outputs. 
This will result in little triangles right above and right below the diagonal of $y_{33}$, 
which collect reachable state subsets with the same output. However, in the remaining part of the state space, KAM will converge to the same rectangular grid as it does for $\Sigma_1$. The intuitive reason for this is that the post of any set $H^{-1}(y)$ with $y\notin\set{y_{22l},y_{22u}}$ remains a box. 
Therefore, we can never distinguish whether we observe $y_{22u}$ or $y_{22l}$ if we transition to a box on the diagonal of $y_{33}$, no 
matter how fine we grid. 
Further, the post of any such box will be either $\set{y_{22u},y_{22l}}$ again, $\set{y_{00}}$ (for $u=u_1)$ or $\set{y_{22}}$ (for $u=u_2)$. 
With this it is easy to see that boxes of size $\eta=0.2$ are again the largest partition of $X$ that form equivalence classes respecting observable subsets. KAM will therefore compute the same sound realization for $\Sigma_2$ as for $\Sigma_1$. If we however run KA (with or without the bisimulation quotient) one would additionally chop every box of size $\eta=0.2$ into an upper left and lower right triangle. This unnecessary doubles the state space of the abstraction, but still resulting in a sound realization.

Let us now consider computing an abstraction $\widehat{\Sigma}_2^{(\eta)}$ by forward simulation of $\Sigma_2$ first, using SCOTS. 
Then we immediately get into trouble, because we cannot find a rectangular grid that respects the output map, 
as needed to fulfill (A3) in \REFdef{def:SoundAbs}. This approach would therefore directly fail in this example.

Finally, consider a system $\Sigma_3$ which has an unbounded state space $X_3=\real{2}$ with 
transition function defined by $F$ of $\Sigma_1$ but without the wrapping of its input argument. 
The output set $Y_3$ and the output function $H_3$ of $\Sigma_3$ are given by tiling the entire $\real{2}$ space irregularly with the 3x3 blocks of observations $Y_1$ and $Y_2$ 
along with their respective output maps $H_1$ and $H_2$. 
We still have a finite set of inputs and outputs. By recalling that KAM produces the same sound realization for $\Sigma_1$ and $\Sigma_2$, we can use the 
same arguments as in the example of \REFsec{sec:KAMexp} to see that KAM will generate the same sound realization for   $\Sigma_3$ as for $\Sigma_1$ and $\Sigma_2$, while all other 
algorithms will produce infinite-state abstractions.
Admittedly, while the example distinguishes KAM from the other algorithms, it is not clear how to symbolically represent the algorithm in this case.

%
\begin{acks} 
This research was funded in part by the \grantsponsor{dfg}{DFG}{https://www.dfg.de/} project \grantnum{dfg}{389792660-TRR 248}
and by the \grantsponsor{erc}{ERC}{https://erc.europa.eu/} under the 
Grant Agreement 
\grantnum{erc}{610150}. 
Ozay was supported in part by ONR grant N00014-18-1-2501, NSF grant ECCS-1553873, and an Early Career Faculty grant from NASA's Space Technology Research Grants Program.
\end{acks}


\balance
\bibliographystyle{abbrv}
{
\bibliography{reportbib}}

\begin{appendix}
 \section{Additional Proofs}

\begin{proof}[Proof of \REFthm{thm:ABCDsound}]
 We provide theis proof for the more general case of not fully enabled inputs.
 
For the first claim we pick $\pi=x_0u_0x_1u_1\hdots\in\CPaths{S,\C^\dagger}$ with external sequence $\rho=y_0u_0y_1\hdots\in\Ext{S,\C^\dagger}$ s.t. $y_k=H(x_k)$  for all $k\in\N$ and show $\rho\in\semantics{\psi}$.
 
 For $k=0$, the definition of $\CPaths{S,\C^\dagger}$ implies that $x_0\in\Xo$. 
 Using (A1) we know that for all $\xa_0\in\alpha(x_0)$ holds that  $\xa_0\in\Xoa$. We further have $H(x_0)=y_0$.
 Now it follows from (A3) that for all $\xa_0\in\alpha(x_0)$ we have $y_0\in\Ha(\xa_0))$ and therefore $y_0\in\Ext{\Sa,\Ca^\dagger}|_{[0;0]}$.
 
 For $k>0$ assume $\rho|_{[0;k-1]}\in\Ext{\Sa,\Ca^\dagger}|_{[0;k-1]}$ and show $\rho|_{[0;k]}\in\Ext{\Sa,\Ca^\dagger}|_{[0;k]}$. Let $\pi=x_0u_0x_1u_1\hdots x_{k-1}\in\CPrefs{S,C}$. Now pick any $\pia=\xa_0u_0\xa_1u_1\hdots\xa_{k-1}\in\alpha(\pi)$ 
%
 and let $x_k=F(x_{k-1},u_{k-1})$ and $\xa_k=\Fa(x_{k-1},u_{k-1})$ and observe that $\pi u_{k-1} x_k\in\CPaths{S,\C^\dagger}$ and $\pia u_{k-1} \xa_k\in\CPaths{\Sa,\Ca^\dagger}$. Further, it follows from (A2) that 
$\alpha(x_k)\subseteq \xa_k$ and therefore $x_k\in\gamma(\xa_k)$. As $H(x_k)=y_k$, (A3) implies $y_k\in\Ha(\xa_k)$ and, hence, $\rho|_{[0;k]}\in\Ext{\Sa,\Ca^\dagger}|_{[0;k]}$.
 
 As $\Ext{\Sa}$ is topologically closed, so is $\Ext{\Sa,\Ca^\dagger}$. With this $\rho|_{[0;k]}\in\Ext{\Sa,\Ca^\dagger}|_{[0;k]}$ for all $k\in\N$ implies $\rho\in\Ext{\Sa,\Ca^\dagger}$. As  $\Ca^\dagger\in\WIN^\dagger(\Sa,\psi)$, we have $\Ext{\Sa,\Ca^\dagger}\subseteq\semantics{\psi}$ and, hence, $\rho\in\semantics{\psi}$.
 
 For the second claim, one can verify that $S\frrE{\alpha}{\gamma}\Sa$ implies $\Sa\frr{\gamma}{} S$ and by this $\C'\in\WIN^\dagger(S,\psi)$ implies $\Ca'=\C'\circ\gamma\in\WIN^\dagger(\Sa,\psi)$ from the first part of this theorem. Hence, either $\WIN^\dagger(S,\psi)=\emptyset$ or $\WIN^\dagger(\Sa,\psi)\neq\emptyset$. The \enquote{only if} part follows analogously from the inverse direction.
%
 \end{proof}
 
 The next proposition shows the compositionality of sound abstraction relations.

\begin{proposition}\label{prop:compo}
 Let $(S_1,\Omapn{}{1})\frr{\alpha_{12}}{}(S_2,\Omapn{}{2})$ and $(S_2,\Omapn{}{2})\frr{\alpha_{23}}{}(S_3,\Omapn{}{3})$ then $(S_1,\Omapn{}{1})\frr{\alpha_{13}}{}(S_3,\Omapn{}{3})$ with $\alpha_{13}=\alpha_{23}\circ\alpha_{12}$. 
\end{proposition}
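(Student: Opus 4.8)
The plan is to verify directly that $\alpha_{13} := \alpha_{23}\circ\alpha_{12}$ (together with the induced dual map $\gamma_{13} := \gamma_{12}\circ\gamma_{23}$) satisfies conditions (A1)--(A3) of \REFdef{def:SoundAbs} between $S_1$ and $S_3$, using only that $\alpha_{12}$ and $\alpha_{23}$ already satisfy these conditions plus two elementary facts about the subset-lifting of set-valued maps: it is \emph{monotone} (if $A\subseteq B$ then $f(A)\subseteq f(B)$) and it \emph{distributes over unions} ($f(\bigcup_i A_i)=\bigcup_i f(A_i)$). Both are immediate from the definition $f(A)=\SetComp{b}{\ExQ{a\in A}{b\in f(a)}}$ recalled in the Notation paragraph.

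First I would check the relational consistency demanded by the definition, namely $x_1\in\gamma_{13}(x_3)\IFF x_3\in\alpha_{13}(x_1)$. Unfolding the composition, $x_3\in\alpha_{23}(\alpha_{12}(x_1))$ holds iff there is $x_2\in\alpha_{12}(x_1)$ with $x_3\in\alpha_{23}(x_2)$; applying the two consistency equivalences for $\alpha_{12}/\gamma_{12}$ and $\alpha_{23}/\gamma_{23}$, this is equivalent to the existence of $x_2$ with $x_1\in\gamma_{12}(x_2)$ and $x_2\in\gamma_{23}(x_3)$, i.e. $x_1\in\gamma_{12}(\gamma_{23}(x_3))=\gamma_{13}(x_3)$. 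Hence $\gamma_{13}$ is indeed the map dual to $\alpha_{13}$, as required.

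Next I would discharge the three conditions in turn. For (A1), monotonicity gives $\alpha_{13}(X_{1,0})=\alpha_{23}(\alpha_{12}(X_{1,0}))\subseteq\alpha_{23}(X_{2,0})\subseteq X_{3,0}$, chaining (A1) for the two component relations. For (A2) the only slightly nontrivial point is a \enquote{lifted} form of (A2) for $\alpha_{23}$: using distributivity over unions one shows $\alpha_{23}(F_2(\beta,u))\subseteq F_3(\alpha_{23}(\beta),u)$ for every subset $\beta\subseteq X_2$, simply by taking the union over $x_2\in\beta$ of the pointwise inclusions $\alpha_{23}(F_2(x_2,u))\subseteq F_3(\alpha_{23}(x_2),u)$ and using that $\alpha_{23}$ and $F_3(\cdot,u)$ both commute with unions. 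Combined with (A2) for $\alpha_{12}$ and monotonicity, this yields $\alpha_{13}(F_1(x_1,u))\subseteq\alpha_{23}(F_2(\alpha_{12}(x_1),u))\subseteq F_3(\alpha_{13}(x_1),u)$. For (A3), I would pick $x_3$ and any $x_1\in\gamma_{13}(x_3)$; by the consistency step there is $x_2\in\gamma_{23}(x_3)$ with $x_1\in\gamma_{12}(x_2)$, so (A3) for the two relations forces $H_1(x_1)=H_2(x_2)$ and $H_2(x_2)=H_3(x_3)$, hence $H_1(x_1)=H_3(x_3)$, establishing $H_1(\gamma_{13}(x_3))\subseteq\set{H_3(x_3)}$.

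The argument is essentially bookkeeping, so no step is a genuine obstacle; the one place demanding care is the lifting of (A2) from single states to state-sets, where one must be explicit that $\alpha_{23}$ and $F_3(\cdot,u)$ commute with unions so that the pointwise inclusions can be aggregated into the set-level inclusion. The same care is implicit in the earlier proofs of \REFprop{prop:SSA_sound} and \REFprop{prop:SaKbsimSa}, where equalities of the form $\alpha(F(\cdot,u))=\Fa(\alpha(\cdot),u)$ are manipulated.
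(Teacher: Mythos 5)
Your proof is correct and follows essentially the same route as the paper's: chain the inclusions (A1)--(A3) through the intermediate system $S_2$, using monotonicity and union-distributivity of the lifted maps. The only differences are cosmetic — you make the duality of $\gamma_{13}$ with $\alpha_{13}$ and the set-level lifting of (A2) explicit, where the paper leaves them implicit (and the paper additionally treats the (A2.1)/(A2.2) variant for non-strict transition functions, which is not needed for the definition as stated).
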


\begin{proof}
We show that (A1)-(A3) in \REFdef{def:SoundAbs} hold by using the observation that $\alpha_{12}(x_1)\subseteq X_2$ and $\alpha_{23}(x_2)\subseteq X_3$ for $x_1\in X_1$, $x_2\in X_2$. Further, we define $\gamma_{ji}$ as the induced inverses of the respective $\alpha_{ij}$.\\
 \begin{inparaitem}[$\blacktriangleright$]
  \item (A1) As $\alpha_{12}(X_{1,0}) \subseteq X_{2,0}$ and $\alpha_{23}(X_{2,0}) \subseteq X_{3,0}$ it follows that $\alpha_{23}(\alpha_{12}(X_{1,0})) \subseteq X_{3,0}$. \\
  \item (A2.1) As $\Enab_{S_3}(\alpha_{23}(x_2))\subseteq\Enab_{S_2}(x_2)$ and $\Enab_{S_2}(\alpha_{12}(x_1))\subseteq\Enab_{S_1}(x_1)$ for any $x_1\in X_1$ and  $x_2\in X_2$ it follows that $\Enab_{S_3}(\alpha_{23}(\alpha_{12}(x_1)))\subseteq\Enab_{S_2}(\alpha_{12}(x_1))\subseteq\Enab_{S_1}(x_1)$.\\ 
  \item (A2.2) As $\alpha_{12}(F_1(x_1,u))\subseteq F_2(\alpha_{12}(x_1),u)$ and  $\alpha_{23}(F_2(x_2,u))\subseteq F_3(\alpha_{23}(x_2),u)$ for any $x_1\in X_1$ and $x_2\in X_2$ it follows that  
  $\alpha_{23}(\alpha_{12}(F_1(x_1,u))
  \subseteq\alpha_{23}(F_2(\alpha_{12}(x_1),u))
  \subseteq(F_3(\alpha_{23}(\alpha_{12}(x_1)),u))
  $.\\
\item (A3) As $\Omapn{H_1(\gamma_{21}(x_2)}{1}\subseteq \Omapn{H_2(x_2)}{2}$ and $\Omapn{H_2(\gamma_{32}(x_3)}{2}\subseteq \Omapn{H_3(x_3)}{3}$ for any $x_2\in X_2, x_3\in X_3$ it follows that
$\Omapn{H_1(\gamma_{21}(\gamma_{32}(x_3))}{1}\subseteq\set{\Omapn{H_2(\gamma_{32}(x_3))}{2}}\subseteq\Omapn{H_3(x_3)}{3}$.
 \end{inparaitem}
\end{proof}

The following technicallemmas are used in the analysis of the KAM algorithm.

\begin{lemma}\label{lem:singlecover}
 After execution of the function \textsc{Refine} in line~\ref{line:GSSA:CallRefine} of \REFalg{alg:GSSA}, it holds that $|\alphat(\xt)|=1$ for all $\xt$ for which $\alphat$ is defined.
\end{lemma}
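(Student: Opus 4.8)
The plan is to prove a strengthened statement by induction over the execution of \REFalg{alg:GSSA}, since the bare claim $|\alphat(\xt)|=1$ is not obviously preserved step by step. The invariant I would carry is: \emph{at every point at which the main loop or a \textsc{Refine} call has just returned, each computed cell $c$ (i.e., each $c$ with $\tuplel{\cdot,\cdot,c}\in\ExX$) has a unique $\subseteq$-minimal cover block $q^\ast(c)\in\Cover$ with $c\subseteq q^\ast(c)$, and every tuple $\tuplel{\nu,q,c}\in\ExX$ satisfies $q=q^\ast(c)$.} This is exactly the statement $\alphat(c)=\{q^\ast(c)\}$ (hence $|\alphat(c)|=1$) rewritten, since $\alphat(c)=\SetComp{q}{\tuplel{q,c}\in\ExX^\downarrow}$ collects all blocks paired with $c$ across the tree. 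Equivalently, I would phrase the ``unique minimal block'' part as the assertion that the family $\SetComp{q\in\Cover}{c\subseteq q}$ is totally ordered by inclusion for every computed $c$; this chain form is the cleanest thing to push through the induction.

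First I would handle initialization and the forward expansion, which are the easy directions. At initialization $\Cover$ is the partition induced by $H$ (line~\ref{alg:GSSA:CoverInit}), so the blocks are pairwise disjoint and each initial cell equals its own block, and the chain condition holds trivially. For the forward expansion in lines~\ref{line:GSSA:cp}--\ref{line:GSSA:ExF}, a new cell $c'=F(c,u)\cap H^{-1}(y)$ is contained only in blocks of output $y$, all of which are subsets of $H^{-1}(y)$; assuming the chain form of the invariant these are totally ordered, so the set $Q'$ of minimal blocks containing $c'$ computed in line~\ref{line:GSSA:Qp} is a singleton, and the node added in line~\ref{line:GSSA:ExX} respects $q=q^\ast(c')$. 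Hence the invariant can only be endangered when \textsc{Refine} enlarges $\Cover$, and that is where the real work lies.

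The heart of the argument is the analysis of \textsc{Refine}$(\tuplel{\nu,q,c})$ when it adds $s$ in line~\ref{line:s_add}. I would first record the routine facts: every successor cell of $c$ was assigned a covering block in the forward phase, so $F(c,u)\subseteq\mathtt{PostQ}_u$ for all $u$ and therefore $c\subseteq s\subsetneq q$; thus $s$ is a strictly smaller cover block containing $c$. The re-association in line~\ref{alg:GSSA:newmin} then moves exactly those tuples whose block is $q$ and whose cell fits inside $s$ down to block $s$, re-establishing $q=q^\ast$ for them, while cells that shrink their minimal block can only shrink the sets $\mathtt{PostQ}$ of their tree-predecessors; the recursive calls in line~\ref{alg:GSSA:recurseRefine} re-refine precisely those predecessors. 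Since each recursive call passes to a strict parent in the finite exploration tree, the recursion terminates, so the post-condition I want holds at return.

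The hard part will be proving that adding $s$ never leaves some computed cell with two \emph{incomparable} minimal blocks, i.e., that the chain form of the invariant survives. The dangerous configuration is a cell $\hat c\subseteq s$ whose pre-existing minimal block $b$ is a proper sub-block of $q$ incomparable to $s$; such $\hat c$ is not touched by the re-association (which only inspects tuples with block $q$), so it would end up contained in the two incomparable blocks $b$ and $s$. Ruling this out is the crux: I would exploit that both $s$ and $b$ are carved out of the \emph{same} parent block $q$ by the ``all successors land in $\mathtt{PostQ}_u$'' rule of line~\ref{line:s}, and that the backward recursion has already been applied to every predecessor affected by $b$, in order to argue that $b$ and $s$ must in fact be comparable for any cell they share. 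I would isolate this comparability statement as a separate sub-lemma; once it is in hand, the bookkeeping of lines~\ref{alg:GSSA:newmin}--\ref{alg:GSSA:recurseRefine} is routine and the induction closes, yielding $|\alphat(c)|=1$.
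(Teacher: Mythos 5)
Your overall architecture---an induction over the execution carrying the invariant that every computed cell has a unique $\subseteq$-minimal cover block---is sensible, and you correctly locate the crux in exactly the same place the paper does: excluding two \emph{incomparable} minimal blocks covering the same cell. But your proposal stops precisely there. The comparability claim you ``would isolate as a separate sub-lemma'' is not a routine piece of bookkeeping; it is the entire content of \REFlem{lem:singlecover}, and you give no argument for it beyond naming the ingredients you would use. The paper closes this step with a concrete contradiction that your sketch is missing: suppose $\tuplel{\xa_1,\xt}$ and $\tuplel{\xa_2,\xt}$ both persist with $\xa_1,\xa_2$ incomparable. Since $\Cover$ is initialized as a partition and only ever gains \emph{subsets} of existing blocks (\REFline{line:s_add}), $\xa_1$ and $\xa_2$ descend from a common block $\xa$, and each was created by \REFline{line:s} as the set of states of its parent all of whose $u$-successors land in the corresponding $\mathtt{PostQ}_u$; the split was genuine, so there are witnesses $x_1\in\xa_1\setminus\xa_2$ and $x_2\in\xa_2\setminus\xa_1$ violating the other block's $\mathtt{PostQ}$ condition. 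Because \textsc{Refine} has terminated on both tuples, each $\xa_i$ must already equal the set cut out by \REFline{line:s} using the $\mathtt{PostQ}_u$ sets determined by the \emph{shared} cell $\xt$'s successors, i.e.\ $F(\xa_i,u)=\mathtt{PostQ}_u(\tuplel{\xa_i,\xt})$ for all $u$; the same defining condition applied twice yields $\xa_1=\xa_2$, contradicting the existence of the witnesses. Some version of this stability-at-termination argument is indispensable; without it your induction does not close.

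A second, smaller problem: your ``easy'' forward-expansion step is circular. You argue that $Q'$ in \REFline{line:GSSA:Qp} is a singleton because the blocks containing the new cell $c'$ form a chain ``assuming the chain form of the invariant''---but your invariant is asserted only for \emph{previously computed} cells, and $c'$ is being created at that very moment, so the chain property for $\SetComp{q\in\Cover}{c'\subseteq q}$ does not follow from the induction hypothesis. The algorithm itself anticipates $|Q'|>1$ (lines~\ref{line:GSSA:ExX}--\ref{line:GSSA:ExF} range over all $q'\in Q'$), and the paper's proof accordingly permits two incomparable minimal blocks to coexist transiently and eliminates them only via the termination of \textsc{Refine}. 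So both halves of your induction ultimately lean on the one argument you have not supplied.
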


\begin{proof}
First observe that \textsc{Refine}  is only called if $\xt\subset \xa$. If $\xt=\xa$ the claim is trivially satisfied as $\xa$ is the unique minimal element covering $\xt$ in this case. 
As $Q'$ in line~\ref{line:GSSA:Qp} of \REFalg{alg:GSSA} is chosen to be minimal, we have that $\tuplel{\xa_1,\xt},\tuplel{\xa_2,\xt}\in\ExG$ with $\xa_1\neq\xa_2$ implies $\xa_1\not\subseteq\xa_2$ and $\xa_2\not\subseteq\xa_1$ and in addition $\xt\subset\xa_1$ and $\xt\subset\xa_2$, so \textsc{Refine} is called. Further, as \REFalg{alg:GSSA} is initialized with a cover which partitions the state space, we know that there exists a minimal $\xa$ which was split into $\xa_1\subset \xa$ and $\xa_2\subset \xa$ previously. This implies that there exists $\xt_1$ and $\xt_2$ s.t.\ $F(\xa_1,u)=\mathtt{PostQ}_u(\tuplel{\xa_1,\xt_1})$ and $F(\xa_2,u)=\mathtt{PostQ}_u(\tuplel{\xa_2,\xt_2})$ while there exists some $x_1\in\xa_1\setminus \xa_2$ s.t.\ $x\notin \mathtt{PostQ}_u(\tuplel{\xa_2,\xt_2})$ and, vise versa, there exists some  $x_2\in\xa_2\setminus \xa_1$ s.t.\ $x\notin \mathtt{PostQ}_u(\tuplel{\xa_1,\xt_1})$, as otherwise the cover cell $\xa$ would not have been splitted. Now, consider $\xt$ from 
before, and observe that $\xt\in\xa_1\cap\xa_2$ by definition. Further, the above reasoning implies $\xt\subset\xt_1$ and $\xt\subset\xt_2$, and $\mathtt{PostQ}_u(\tuplel{\xa_1,\xt})\subset\mathtt{PostQ}_u(\tuplel{\xa_1,\xt_1})=F(\xa_1,u)$ and $\mathtt{PostQ}_u(\tuplel{\xa_2,\xt})\subset\mathtt{PostQ}_u(\tuplel{\xa_2,\xt_2})=F(\xa_2,u)$ with proper containment in both cases. 
This introduces a contradiction to the assumption that \textsc{Refine} has terminated, as in this case we know that $\xa_1$ and $\xa_2$ cannot be further splitted, i.e., $F(\xa_1,u)=\mathtt{PostQ}_u(\tuplel{\xa_1,\xt})$ and $F(\xa_2,u)=\mathtt{PostQ}_u(\tuplel{\xa_2,\xt})$. The last equality only holds if $\xa_1=\xa_2$ as in this case no $x_1$ and $x_2$ as above can be constructed. 
\end{proof}

%

\begin{lemma}\label{lem:TerminatingGSSAifSSA}
 If \REFalg{alg:SSA} terminates, there exists an iteration $l\in\mathbb{N}$ of \REFalg{alg:GSSA} for which $\ExG=\ExX^\downarrow$ holds.
\end{lemma}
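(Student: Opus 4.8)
The plan is to reduce the statement to the stabilization of the projected set $\ExX^\downarrow=\SetComp{\tuplel{q,c}}{\ExQ{\nu}{\tuplel{\nu,q,c}\in\ExX}}$. By the guard of the main loop (line~\ref{alg:GSSA:startwhile}) together with the assignment $\ExG\gets\ExX^\downarrow$ at its top, an iteration satisfies $\ExG=\ExX^\downarrow$ exactly when neither the forward exploration (lines~\ref{alg:GSSA:forward-for}--\ref{line:GSSA:ExF}) nor \textsc{Refine} changes $\ExX^\downarrow$ during that iteration. Since within a single iteration \textsc{Refine} is already known to terminate (its recursion only climbs toward the root of the finite path from the processed leaf), it suffices to show two things: (i) only finitely many distinct \emph{cells} $c$ are ever produced, and (ii) the cover $\Cover$, and hence the block $q$ that \REFlem{lem:singlecover} attaches to each cell, changes only finitely often.

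For (i) I would prove by induction on the iteration index that the cells of $\ON{KAM}$ are exactly the states of $\ON{KA}$ (\REFalg{alg:SSA}). The base case holds because the cell initialization in line~\ref{alg:GSSA:ExXInit} coincides with line~\ref{alg:SSA:Xao}: as $H$ respects $\Xo$, we have $\Xo\cap H^{-1}(y)=H^{-1}(y)$ whenever this set is nonempty. The inductive step is immediate since the cell update $c'=F(c,u)\cap H^{-1}(y)$ in line~\ref{line:GSSA:cp} is syntactically the state update $\xa'=F(\xa,u)\cap H^{-1}(y)$ of line~\ref{alg:SSA:xap}, and the forward loop in line~\ref{alg:GSSA:forward-for} expands every maximal leaf over all $u\in U,y\in Y$. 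Hence after $l$ iterations the set of third components occurring in $\ExX$ equals the set $\Xa$ computed by $\ON{KA}$ after $l$ iterations. If $\ON{KA}$ terminates, say at iteration $L$, then $\Xa$ is finite, no new cell appears after iteration $L$, and the finite cell-successor graph that drives \textsc{Refine} is fixed.

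For (ii) I would exploit \REFlem{lem:singlecover}, which guarantees that at each iteration boundary $\ExX^\downarrow$ is the graph of a map $\alphat$ from the finitely many cells (at most $N:=|\Xa|$) to blocks. Every block produced in line~\ref{line:s_add} has the form $s=\SetComp{x\in q}{\AllQ{u\in U}{F(x,u)\subseteq\mathtt{PostQ}_u}}$, with $\mathtt{PostQ}_u$ the union of the covering blocks of the finitely many successor cells of $c$ (line~\ref{line:GSSA:PostQ}); thus each refinement strictly shrinks the block of some cell. Refinements that \emph{separate} cells---i.e.\ strictly shrink the set of cells contained in a block---can occur only finitely often, because the grouping of the $N$ cells induced by $\alphat$ only becomes finer. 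The crux, and the step I expect to be hardest, is to rule out an infinite chain of \emph{tightening} refinements that keep the contained cells fixed while discarding only non-cell states: since such tightenings are triggered by the shrinking of successor blocks and propagate backward around the (possibly cyclic) finite cell graph, one must argue they cannot descend forever. The route I would take is to observe that $q\mapsto s$ is a monotone operator entirely determined by the now-fixed finite cell graph, so that \textsc{Refine} is computing its greatest fixpoint, and then to tie the number of backward rounds needed to reach it to the forward exploration depth that $\ON{KA}$ has already exhausted by iteration $L$; concretely, any state still separable by \textsc{Refine} would witness a knowledge-set distinction that $\ON{KA}$ would itself have had to record, contradicting the termination of $\ON{KA}$. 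Granting (i) and (ii), $\ExX^\downarrow$ is unchanged from iteration $L$ onward, so $\ExG=\ExX^\downarrow$ holds there, which yields the desired $l$.
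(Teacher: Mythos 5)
Your step (i) is exactly the paper's first claim, argued the same way: the cell update of \REFalg{alg:GSSA} (line~\ref{line:GSSA:cp}) coincides syntactically with the state update of \REFalg{alg:SSA} (line~\ref{alg:SSA:xap}), the initializations agree because $H$ respects $\Xo$, and \REFlem{lem:singlecover} guarantees that no second pair $\tuplel{q',c}$ is ever added for an already-present cell $c$. The genuine gap sits at the point you yourself flag as the crux, namely step (ii). As you correctly observe, the guard $\ExG=\ExX^\downarrow$ is falsified not only by new cells but also by the in-place substitution $\tuplel{\tilde{\nu},q,\tilde{c}}\mapsto\tuplel{\tilde{\nu},s,\tilde{c}}$ in line~\ref{alg:GSSA:newmin}, so one must show that, once the cells have saturated, only finitely many further refinements of $\Cover$ occur. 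You do not prove this; you propose a route, and the route's key claim --- that any state still separable by \textsc{Refine} ``would witness a knowledge-set distinction that KA would itself have had to record'' --- does not hold up. \textsc{Refine} separates states of a block $q$ (including unreachable states and states lying in no cell) according to their one-step \emph{future} relative to the current cover, whereas KA only records distinctions among \emph{reachable} knowledge sets induced by the observed \emph{past}; \REFsec{sec:algo_bisim} and \REFsec{sec:KAMexp} are built precisely on the fact that these two notions are incomparable, so termination of KA does not by itself bound the future-based splits. The monotone-operator observation is fine as far as it goes, but a deflationary monotone iteration on subsets of an infinite $X$ need not reach its greatest fixpoint in finitely many rounds, so naming the gfp does not finish the argument either.

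For comparison, the paper's own proof takes a different and much shorter line here: it argues that a pair $\tuplel{q,c}$ is \emph{added} to $\ExX^\downarrow$ iff $c$ is added to $\Xa$ in \REFalg{alg:SSA} --- refinements of blocks being merely propagated in place through all of $\ExX$ --- and concludes from this that the termination conditions of the two while loops coincide; separately, it shows that each recursive call to \textsc{Refine} terminates because the recursion only climbs the finite tree $\ExF$ toward the root (a point you also note and use). It does not carry out the explicit finiteness bound on cover refinements that your decomposition demands. Whatever one thinks of that, your proposal as written does not supply the missing argument for (ii), so it does not yet establish the lemma.
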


\begin{proof}
First, it can be verified that in every iteration of the while loops in 
both algorithms $\tuplel{\cdot,c}$ gets added to $\ExX^\downarrow$ in \REFalg{alg:GSSA} iff $c$ gets added to $\Xa$ in \REFalg{alg:SSA}. 
This is due to the fact that the set of minimal blocks covering $c$ is uniquely defined and refinements of any block 
are propagated through all sets $\ON{EXP}_{\Gamma,X,F}$ within \textsc{Refine}. 
Therefore, it cannot happen that a tuple $\tuplel{q,c}$ is added to $\ExX^\downarrow$ if $\ExX^\downarrow$ already contains a tuple $\tuplel{q',c}$. Thus, the termination conditions of the while loops coincide.

As \textsc{Refine} is a recursive function, we have to additionally prove that it terminates. To see this, observe that $\ExF$ is a finite tree for every initial tuple $\tuplel{\varepsilon,q,q}$ and therefore only contains finite paths. Further, as every iteration of the while loop in line~\ref{alg:GSSA:startwhile}-\ref{alg:GSSA:endwhile} of \REFalg{alg:GSSA} only explores the current leaves of this tree, it adds new leaves to the tree and schedules leaves of the previous iteration for possible refinement. As the recursion of \textsc{Refine} in line~\ref{alg:GSSA:recurseRefine} of \REFalg{alg:GSSA} only schedules predecessors of these leaves and the tree is finite, it terminates.
\end{proof}

\end{appendix}

\end{document}